\documentclass[12pt,authoryear]{elsarticle}

\usepackage[authoryear]{natbib}
\usepackage{appendix}

\usepackage{amsfonts}
\usepackage{amsmath}
\usepackage{amssymb}
\usepackage{natbib}
\usepackage{graphicx}
\usepackage{color}
\usepackage{rotating}
\usepackage{xcolor}
\usepackage{xr}
\usepackage{amsthm}
\usepackage{multirow}
\usepackage{booktabs}
\usepackage{cleveref}
\usepackage{float}
\usepackage{enumitem}

\newtheorem{cor}{Corollary}

\newtheorem{remark}{Remark}
\newtheorem{lemma}{Lemma}
\newtheorem{theorem}{Theorem}
\makeatletter
\def\ps@pprintTitle{%
	\let\@oddhead\@empty
	\let\@evenhead\@empty
	\let\@oddfoot\@empty
	\let\@evenfoot\@oddfoot
}
\makeatother

\newcommand{\fhat}[1]{\widehat{\mkern2mu#1\mkern2mu}}

\begin{document}

	\def\spacingset#1{\renewcommand{\baselinestretch}%
		{#1}\small\normalsize} \spacingset{1}


		\title{\bf Robust estimation in generalized linear models based on the normal quantiles of the probability integral transformation}
		\author{Marina Valdora\thanks{
				The authors gratefully acknowledge the support of Grants 20020170100330BA and 20020220200037BA
					from the University of Buenos Aires}\hspace{.2cm}\\
			Instituto de Cálculo,  University of Buenos  Aires and CONICET\\
			and \\
			Víctor Yohai \\
			Instituto de Cálculo and Department of Mathematics, FCEN,\\ University of Buenos Aires}
		\maketitle
	
	\bigskip
	\begin{abstract}
		A new approach to robust estimation in generalized linear models is introduced. The idea of the method is to first transform the responses applying the composition of the normal quantile function and the probability integral transformation. Then, using that the transformed responses should follow a standard normal distribution, find the values of the parameters that minimize a robust measure of their size. In practice an approximation of this transformation is used.
		The proposed estimators are studied theoretically for distributions that depend on a single parameter and through simulations and examples for the particular cases of Poisson and logistic regression. 
	\end{abstract}
	
	\noindent%
	{\it Keywords:} Robust statistics; Randomized quantile residuals; Robust Poisson regression; Robust logistic regression
	\vfill
	
	\newpage
	\spacingset{1.8} 

	\section{Introduction}\label{sec:intro}
	

	Robust estimation for generalized linear models  has been studied extensively since the 1980s, soon after \cite{nelder1972generalized} defined generalized linear models and introduced a method to compute maximum likelihood estimators. Among many important contributions to this topic we can cite \cite{kunsch1989conditionally}, \cite{bianco1996robust}, \cite{cantoni2001robust},  \cite{croux2003implementing}, \cite{bergesio2011projection}, \cite{valdora2014robust}, \cite{agostinelli2019initial}, \cite{aeberhard2014robust} and \cite{marazzi2019robust} . 
	
	In this paper we introduce a new class of estimators for GLMs based on the normal quantiles of the probability integral transformation (NQPIT). The idea of these estimators is, for each possible value of the parameters, to transform the responses using the PIT and then to apply the standard normal quantile function, following the ideas introduced in \cite{dunn1996randomized}. When the parameters used for the PIT coincide with the true values, the transformed responses will follow a standard normal distribution. After this transformation, one can use a loss function that has been proved to work for normal responses and search for values of the parameters that minimize this loss function.  In practice, a correction is needed to achieve Fisher-consistency. Moreover, in the case of discrete distributions an approximation of this transformation is used to avoid randomness.
	In this introductory section we explain this idea in more detail.   
	
	Let us first consider independent and identically distributed (i.i.d.) random variables $y_1, \dots, y_n$ from a distribution belonging to a parametric family  $\{ F(., \theta), \theta \in \Theta \}$.
	
	Consider the PIT:
	\begin{equation}\label{eq:PIT}
		t_P(y, u, \theta) = F(y,\theta) - u p(y, \theta), 
	\end{equation}
	where $u \sim U(0,1)$ is independent of $y$ and $p(k, \theta) = \mathbb P_\theta(y=k)$. Note that for continuous distributions $t_P(y, u, \theta)=F(y, \theta)$.

	It is known that   $t_P(y, u, \theta)$ follows a uniform distribution on the interval $[0,1]$ and that $t_N(y, u, \theta)=\Phi^{-1}\left(t_P(y, u, \mu)\right) \sim \Phi$, where $\Phi$ is the cumulative distribution function of a standard normal distribution.
	This fact has motivated the definition of randomized quantile (RQ) residuals; see \cite{dunn1996randomized},   and their implementation in the Dharma R package; see \cite{hartig2024dharma}. It has also motivated a class of estimators, introduced in \cite{fegyverneki2003robust} and also studied in
	\cite{valdora2020m}, where they were called MI-estimators.  MI-estimators are  defined as the solution to
	\begin{equation*}
		\sum_{i=1}^n  F(y,\theta) - \frac{1}{2} p(y, \theta) - \frac{1}{2} =0.
	\end{equation*}
	Now assume that $(y, \mathbf{x})$ follows a GLM with parameter $\boldsymbol{\beta}$ and link function $g$. Let $\theta=g^{-1}\left(\mathbf x^\top \boldsymbol{\beta}\right)$ and assume that the distribution of $y | \mathbf{x}$ belongs to a parametric family $\{ F(., \theta), \theta \in \Theta \}$, where $\Theta$ is an interval in $\mathbb R$ of the form $(\theta_1,  \theta_2).$  Suppose $\fhat{\boldsymbol{\beta}}$ is an estimator of $\boldsymbol{\beta}$ and let 
	 $\fhat{\theta}_{\mathbf x}=g^{-1}\left(\mathbf x^\top \fhat{\boldsymbol{\beta}}\right)$. 
	For a random sample $(y_1, \mathbf{x}_1), \dots, (y_n, \mathbf{x}_n)$ with the same distribution as $(y, \mathbf{x})$,
	RQ residuals are defined as $q_i=t_N(y_i, u_i, \fhat{\theta}_i)$, where $u_i$ are random variables uniformly distributed on the interval $[0,1]$ independent of $y_1, \dots, y_n$,  and $\fhat{\theta}_i=\fhat{\theta}_{\mathbf x_i}$.
	Since $t_N(y_i, u_i, {\theta}_{\mathbf x_i})$  follows a standard normal distribution for $i = 1, \dots, n$, it seems natural to define a least squares estimator based on the NQPIT by minimizing the sum of the squared RQ residuals and 
	to attempt to robustify this estimator by replacing the quadratic function with a bounded $\rho$-function. 
	Two technical problems arise if we define the estimator as the value that minimizes
	\begin{equation}\label{eq:est1withoutcorrection}
		\sum_{i=1}^n  \left(t_N\left(y_i, u_i ,g^{-1}\left(\mathbf x_i^\top \boldsymbol{\beta}\right)\right)\right) ^2.
	\end{equation}
	The first is that, while the classical least squares estimator is Fisher-consistent for linear models, the estimator defined as the minimizer of \eqref{eq:est1withoutcorrection} is not. 
	An estimator defined as the minimizer of a loss function is said to be Fisher-consistent if, under the true distribution, the expected loss function is minimized at the true value of the parameter; see \cite{maronna2019robust}.  
	To make the proposed estimator Fisher-consistent, a correction function needs to be introduced. This will be done in Section \ref{sec:LSNQPIT}.
	
	The second problem is that, in the case of discrete responses, if we defined the estimator simply as the minimizer of \eqref{eq:est1withoutcorrection}, it would be dependent on the random variables $u_i$. To avoid this dependence, we replace the random variables $u_i$ by $\mathbb{E}(u_i)=0.5$ and consider the estimator defined as the value that minimizes
	\begin{eqnarray}\label{eq:esteqwithoutcorrection}
		\sum_{i=1}^n  \left(  t\left(y_i,g^{-1}\left(\mathbf x_i^\top \boldsymbol{\beta}\right)\right)\right)^2,  
	\end{eqnarray}
	where
	\begin{equation}\label{eq:deft}
		t(y, \gamma)=\Phi^{-1} \left( F(y,\gamma) - 0.5 \, p(y, \gamma)\right). 
	\end{equation}Our experiments showed that this simplification increases the efficiency and robustness of the estimator, even though it results in a transformation that is no longer normal and has variance only approximately equal to 1. Of course, this second problem does not arise in the case of continuous distributions, since the PIT does not depend on $u$.
	
	The rest of the paper is organized as follows: in Section \ref{sec:LSNQPIT} we introduce least squares estimators based on the normal quantiles of the probability integral transformation, which are not robust but help to explain the idea of our robust proposal. In Section \ref{sec:mpit} we introduce the proposed robust estimators for GLMs based on the NQPIT, which we call M-estimators based on the normal quantiles of the probability integral transformation (MNQPIT-estimators). In Section \ref{sec:asprop} we state asymptotic properties of MNQPIT- estimators: their consistency, asymptotic distribution and asymptotic breakdown point (ABP). In Section \ref{sec:implementation} we describe the implementation of the proposed estimators for Poisson and binomial GLMs. In Section \ref{sec:montacarlo} we describe the results of a Monte Carlo Study in which we compare the performance of our proposed estimators to that of classical ones and of some of the robust estimators existing in the literature, in the case of Poisson and binomial GLMs. In Section \ref{sec:realdata} we present two real data examples.
	Supplementary material containing the proofs of the lemmas and theorems stated in this article and the R code used in the simulations and examples can be found on line.
	
	\section{Least squares estimators based on the normal quantiles of the probability integral transformation}\label{sec:LSNQPIT}
	The estimator defined as the minimizer of \eqref{eq:esteqwithoutcorrection}  is not Fisher consistent because 
	$\operatorname{argmin}_{\gamma\in\Theta} \mathbb{E}_\theta \left(  t(y,  \gamma)^2 \right) \neq \theta.$
	Define the following correction function for quadratic loss  $m(\theta)=\operatorname{argmin}_{\gamma\in\Theta} \mathbb{E}_\theta \left(  t(y,  \gamma)^2\right)$.
	Then, by definition, $$\operatorname{argmin}_{\gamma\in\Theta} \mathbb{E}_\theta \left(  t(y, m(\gamma))^2\right)=\theta.$$
	
	Least squares estimators based on the normal quantiles of the probability integral transformation (LSNQPIT-estimators) are then defined as \begin{equation}\label{eq:estwithcorrection}
		\fhat{\boldsymbol{\beta}}_{LSNQPIT}= \operatorname{argmin}_{\boldsymbol{\beta}}	\sum_{i=1}^n t\left(y_i, m\left(g^{-1}\left(\mathbf x_i^\top \boldsymbol{\beta}\right)\right)\right)^2.
	\end{equation}
	By the definition of $m$, LSNQPIT-estimators are Fisher consistent; see Lemma 2 in Appendix A. 
	It is also easy to see that under the normal model with known variance, $m$ is the identity function and the LSNQPIT-estimator coincides with the LS estimator.
	
	Our experiments showed that the estimators defined in this way are highly efficient, giving results that almost coincide with the maximum likelihood estimates in several simulation settings. We do not report details of these simulations because these estimators are not robust. 

	\section{M-estimators based on the normal quantiles of the probability integral transformation }\label{sec:mpit}
	
	We now aim at robustifying LSNQPIT-estimators. Recall that, after the simplification and the correction, the transformed responses $t\left(y, m\left(g^{-1}\left(\mathbf x_i^\top \boldsymbol{\beta}\right)\right)\right)$ do not follow a normal distribution. However, their location and scale should be similar to the standard normal's and therefore, the methods for robust estimators for normal models will guide the construction of our robust estimator. Moreover, using the fact that the variance is approximately 1, we do not need to divide by a scale estimator. Let $\rho$ be a bounded $\rho$-function; see 
	Assumptions \ref{as:rhoeven}-\ref{as:rhocont}. 
	Let \begin{equation} \label{eq:mdef}
		m(\theta)=\operatorname{argmin}_{\gamma\in\Theta} \mathbb{E}_\theta \left( \rho\left(t(y, \gamma)\right)\right).
	\end{equation}
	and \begin{equation*}
		L_n(\boldsymbol{\beta})=\sum_{i=1}^n \rho\left(t\left(y_i,m\left(g^{-1}\left(\mathbf x_i^\top \boldsymbol{\beta}\right)\right)\right)\right).
	\end{equation*}
	Then MNQPIT-estimators as defined as \begin{equation}\label{eq:est}
		\fhat{\boldsymbol{\beta}}_{MNQPIT}= \operatorname{argmin}_{\boldsymbol{\beta}}		L_n(\boldsymbol{\beta}).\end{equation}
	
	By the definition of $m$, MNQPIT-estimators are Fisher consistent; see Lemma 2 
	in Appendix A. 
	
	In some models, such as logistic regression, to obtain the required degree of  robustness, it is necessary to down-weight high-leverage observations; see \cite{croux2003implementing}. For this reason we define a wider class of estimators, called weighted M-estimators based on the normal quantiles of the probability integral transformation (WMNQPIT-estimators), as the value of $\boldsymbol{\beta }$ that minimizes
	\begin{equation}\label{eq:wmpitest}
		L_n(\boldsymbol{\beta})=\sum_{i=1}^n \rho\left(t\left(y_i,m\left(g^{-1}\left(\mathbf x_i^\top \boldsymbol{\beta}\right)\right)\right)\right) w(\mathbf{x}, \fhat{\boldsymbol\mu}_n , \fhat{\boldsymbol\Sigma}_n),
	\end{equation}
	where $w(\mathbf{x}, \boldsymbol{\mu}, \mathbf{\Sigma})$ is a function of the Mahalanobis distance, that is
	\begin{equation}\label{eq:weight}
		w(\mathbf{x}, \boldsymbol{\mu}, \boldsymbol{\Sigma})=\omega \left(\left((\mathbf{x}-\boldsymbol{\mu})^{\prime} \mathbf{\Sigma}^{-1}(\mathbf{x}-\boldsymbol{\mu})\right)^{1 / 2}\right),
	\end{equation}
	$\fhat{\boldsymbol{\mu}}_n$ and $\fhat{\boldsymbol{\Sigma}}_n$ are robust estimators of location and scatter matrix of $\mathbf{x}$ based on $\mathbf{x}_1, \ldots, \mathbf{x}_n$ and $\omega$ is a non-negative non-increasing function. The purpose of the weighting function $w
	$ 
	is to penalize high leverage observations. We will use consistent estimators $\fhat{\boldsymbol{\mu}}_n$ and $\fhat{\boldsymbol{\Sigma}}_n$ so that $\fhat{\boldsymbol{\mu}}_n \rightarrow \boldsymbol{\mu}_0$ almost surely (a.s.) and $\fhat{\boldsymbol{\Sigma}}_n \rightarrow \boldsymbol{\Sigma}_0$ a.s., where $\boldsymbol{\mu}_0$ and $\boldsymbol{\Sigma}_0$ are parameters of location and scatter of $\mathbf{x}$.  
	
	It is worth mentioning that, since $\rho$ is bounded, in many cases it is not necessary to down-weight high levererage observations, that is, we can achieve very high robustness taking $\omega \equiv 1$, as we will see in the simulation study. Moreover, as mentioned in \cite{valdora2014robust}, in some cases the use of weights may decrease the robustness of the estimator. This occurs when there are good high leverage observations, that is to say, observations with high leverage but with response $y$ following the GLM of the majority of the data. In these cases the weight function may practically eliminate good observations and therefore it may increase the influence of the outliers with low leverage that have larger weights. For this reason, we limit the use of weights to the case of logistic regression, in which their need has been established by \cite{croux2003implementing}, among others.
	

	\section{Asymptotic properties of MNQPIT-estimators}\label{sec:asprop}
	\subsection{Consistency}
	Assume that $(y, \mathbf x)$ follows a GLM with parameter $\boldsymbol{\beta}$, link function $g$ and distribution function $F$. Assume $F$ belongs to a parametric family $\{ F(., \theta), \theta \in \Theta \}$, where $\Theta$ is an interval $(\theta_1,  \theta_2)\subset  \mathbb R$ {and that $y$ is either continuous with probability density function $f(., \theta)$ or discrete with probability mass function $p(., \theta)$}. Denote $m_1=\lim_{\theta \rightarrow \theta_1} m(\theta)$, $m_2=\lim_{\theta \rightarrow \theta_2} m(\theta)$ and $S=\left\{\mathbf{t} \in \mathbb{R}^p:\|\mathbf{t}\|=1\right\}$.  Note that, throughout this paper, for both continuous and discrete distributions, we denote  $p(k, \theta)=\mathbb{P}_\theta(y=k)$, which equals zero if $y$ is continuous.

	To prove the consistency of MNQPIT-estimators we need the following assumptions. \begin{enumerate}[label=\textbf{A\arabic*},resume=AList]
		\item\label{as:Fcont} The cumulative distribution function $F(y,\theta)$ is continuous as a function of $\theta$. 
		\item\label{as:Fmonotentheta} If $\theta_1<\theta_2$, then $F(y,\theta_1) \geq F(y, \theta_2)$ for all $y$, and there exists $y$ such that $F(y, \theta_1)>F(y, \theta_2)$.
		\item \label{as:limF} {The support of $y$ and the set $\{k\in \mathbb R / F(k, \theta)<1) \}$ are independent of $\theta$. Moreover $\lim_{\theta\rightarrow\theta_1} F(k,\theta)=1$  and $\lim_{\theta\rightarrow\theta_2} F(k,\theta)=0$ for all $k$ in the support of $y$ such that $F(k, \theta) < 1$.} 
		\item \label{as:linkfunction} The link function $g:\Theta \to \mathbb R$ is continuous and strictly increasing and verifies that $\lim_{\theta\rightarrow\theta_1}g(\theta)=-\infty$ and $\lim_{\theta\rightarrow\theta_2}g(\theta)=+\infty$. 
		\item\label{as:mwelldefined} The function $m$ is univocally defined and strictly increasing. 
		\item\label{as:existeepsilon0} There exist $c, \epsilon_0>0$ such that $$\rho(c) + \frac{1}{12\left(\Phi(c) - 0.5\right)^2} < 1-\epsilon_0$$
		
		\item \label{as:rhoeven} $\rho(u) \geq 0, \rho(0)=0 \text { and } \rho(u)=\rho(-u)$
		\item \label{as:limrho} $\lim _{u \rightarrow \infty} \rho(u)=1$. 
		\item \label{as:rhoincreasing} $0 \leq u<v$ implies $\rho(u) \leq \rho(v)$.
		$0 \leq u<v$ and $\rho(u)<1$ implies $\rho(u)<\rho(v)$.
		\item \label{as:rhocont} $\rho$ is continuous
		\item \label{as:limmusigma} There exist $\boldsymbol{\mu}_0 \in \mathbb{R}^\rho$ and a positive definite matrix $\boldsymbol{\Sigma}_0$ such that $\boldsymbol{\mu}_n \rightarrow \boldsymbol{\mu}_0$ a.s. and $\boldsymbol{\Sigma}_n \rightarrow \boldsymbol{\Sigma}_0$ a.s.
		\item \label{as:limw} The weight function $\omega$ is continuous, bounded and non-increasing and $\sup \omega=1$.
		\item \label{as:prob} The covariates $\mathbf x$ and weight function $w$ verify that
		$$\inf _{\mathbf{t} \in S} P\left(\left\{\mathbf{x}^{\top} \mathbf{t} \neq \mathbf{0}\right\} \cap\left\{{w}\left(\mathbf{x}, \boldsymbol{\mu}_0, \Sigma_0\right)>0\right\}\right)>0
		$$
	\end{enumerate}
	
	\begin{remark}
		{Since replacing $\rho$ by $k\rho$ yields the same estimator for all $k>0$}, Assumption \ref{as:limrho} may
		be replaced by the assumption that $\rho$ is bounded. A similar remark applies to Assumption \ref{as:limw}.
		The choice of setting both bounds equal to 1 is made to simplify the proofs.
	\end{remark}
	\begin{remark}
		Assumptions \ref{as:Fcont} to \ref{as:limF} hold for Poisson, Bernoulli and exponential distributions. {They also hold for the normal distribution with known variance. Assumption \ref{as:linkfunction} holds, for example, for the log link in Poisson regression and for logistic regression}.
	\end{remark}
	

	\begin{remark}
		{Assumption \ref{as:mwelldefined} holds for Tukey's bisquare $\rho-$function and the specific $\rho-$function and the calibration constants used in this paper for Poisson and logistic distributions; see Section \ref{sec:implementation}. This has been verified in several numerical experiments; see Figure \ref{fig:mbinom}. However we still lack a formal proof of this fact.}
	\end{remark}
	\begin{figure}
		\centering
		\begin{tabular}{cc}
			\includegraphics[width=0.45\linewidth]{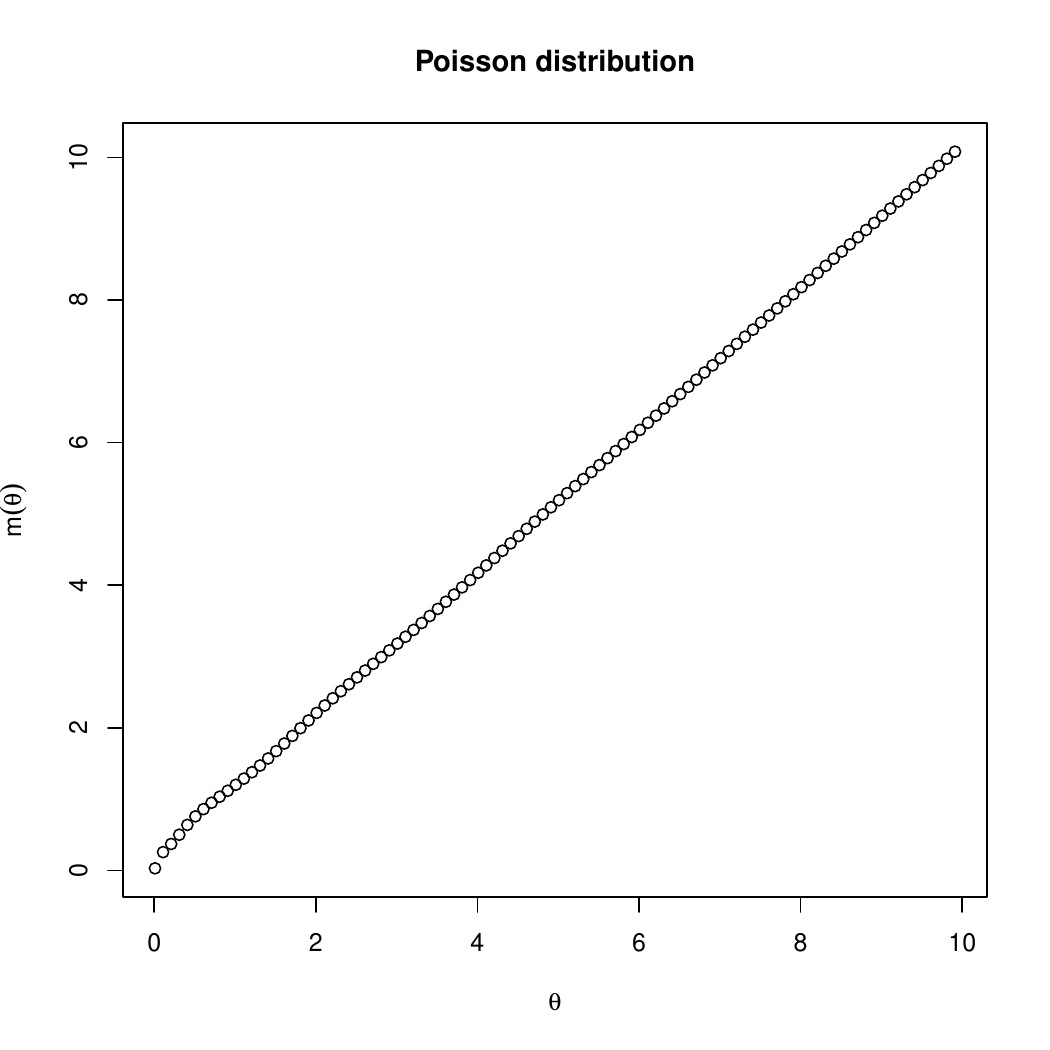}&
			\includegraphics[width=0.45\linewidth]{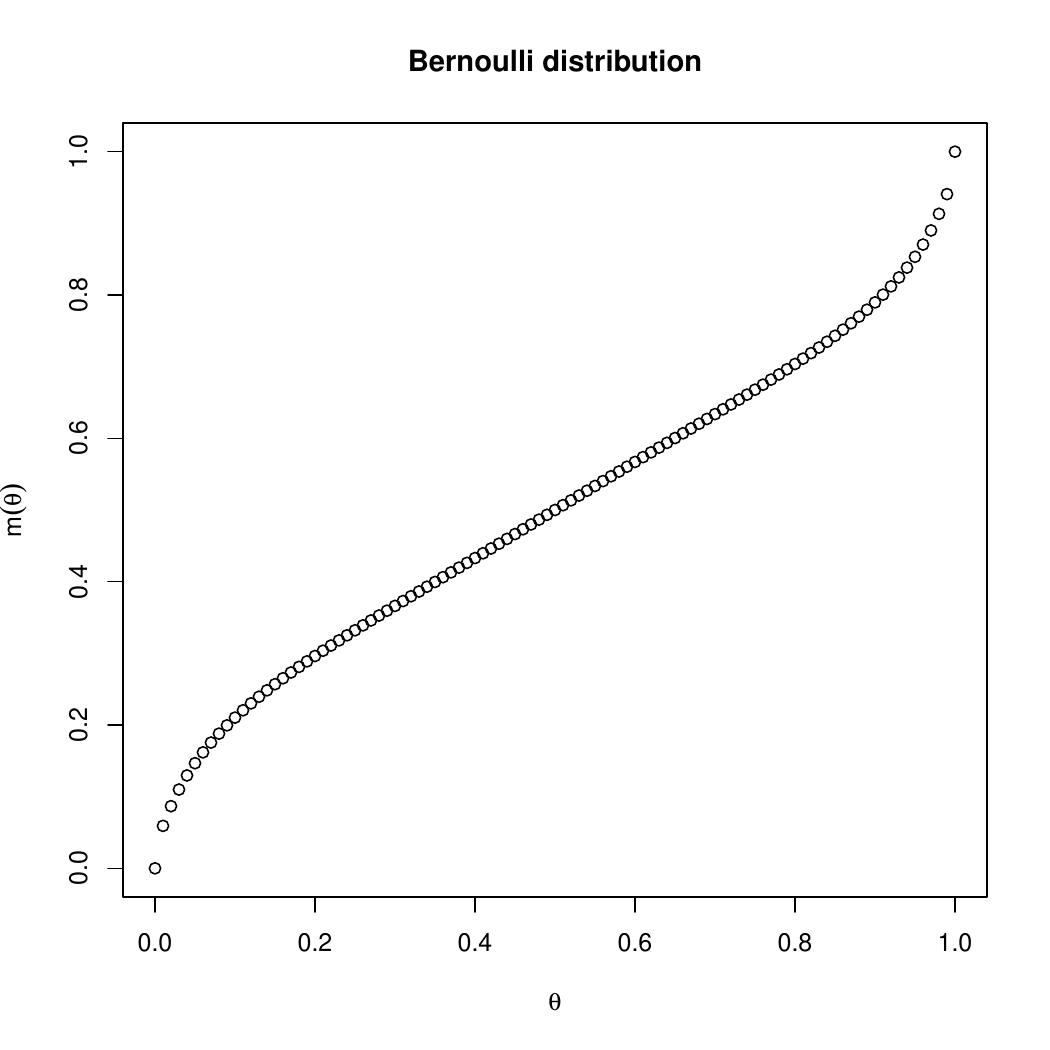}
		\end{tabular}
		\caption{$m$ function for Poisson distribution (left) and Bernoulli distribution (right) computed with the $\rho$ function and calibration constant described in Section \ref{sec:implementation}.}
		\label{fig:mbinom}
	\end{figure}

	\begin{remark}{
			Assumption \ref{as:existeepsilon0} 
			is true for several $\rho$ functions and distributions. In particular, if we take the $\rho$ function defined in  Section \ref{sec:implementation}  calibrated for asymptotic efficiency of {$0.95$, as we did in our simulations and examples, we can take  $c=0.3$ and $\epsilon_0=0.84$, so $1-\epsilon_0=0.16$. This assumption is used in Lemma 5 in Appendix A, 
				where we prove  that  $\mathbb{E}_\theta \left( \rho\left(t(y, \theta)\right)\right)<1-\epsilon_0$. } 
		}
	\end{remark}
	\begin{remark}{
			Assumptions \ref{as:rhoeven} to \ref{as:rhocont} are standard properties for $\rho$-functions in robust statistics, while Assumptions \ref{as:limmusigma} to \ref{as:prob} are very general assumptions on the weight functions in WMNQPIT-estimators.}
	\end{remark}

	For any set $A\subset \mathbb R^p$ and $\mathbf z \in \mathbb R^p$, we denote by $I(\mathbf z\in A)$ the function that indicates if $\mathbf z\in A$, that is to say, that equals $1$ if $\mathbf z\in A$ and $0$ otherwise.
	For $z, l \in \mathbb R$, we denote by $I(z<l)$ the function that indicates if $z<l$, that is to say, that equals $1$ if $z<l$ and $0$ otherwise. We define $I(z\geq l)$ in an analogous manner.	
	The following theorem states the consistency of WMNQPIT-estimators.
	
	\begin{theorem}[Consistency]\label{teo:consistency}
		Let $\left(y_i, \mathbf{x}_i\right), i \in \mathbb N$, be a sequence of i.i.d. random vectors following a GLM with parameter $\boldsymbol{\beta}_0$, link function $g$ and distribution function $F$. 
		Let $\fhat{\boldsymbol{\beta}}_n$ be the WMNQPIT-estimator of $\boldsymbol{\beta}_0$  defined by \eqref{eq:wmpitest} and assume \ref{as:Fcont} to \ref{as:prob} hold. Let 
		\begin{align}\label{eq:tau}\small
			&\tau= \inf_{\left\| \mathbf{t} \right\| = 1} \lim_{\gamma\rightarrow\infty} \mathbb{E}_{\boldsymbol\beta_0}\left[ \left(\rho\left(t\left(y, m \left(g^{-1}\left(\mathbf{x}^{\top} \gamma \mathbf{t}  \right)\right)\right)\right) - \right. \right. \\ \nonumber & \quad\quad\quad\quad\quad\quad\quad\quad\quad\quad \left.\left.\rho\left(t\left(y, m \left(g^{-1}\left(\mathbf{x}^{\top} \boldsymbol\beta_0  \right)\right)\right)\right)\right) w\left(\mathbf{x}, \boldsymbol{\mu}_0, \boldsymbol{\Sigma}_0\right)\right]
		\end{align}
		Then, $\tau>0$ and, if $\mathbb P\left(\mathbf{t}^{\prime} \mathbf{x}=0\right)<\tau$ for all $\mathbf{t}$ with $||\mathbf t||=1$, then $\fhat{\boldsymbol{\beta}}_n \rightarrow \boldsymbol{\beta}_0$ a.s.. 
	\end{theorem}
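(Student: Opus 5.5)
The argument has three parts: establish $\tau>0$, reduce to a compact set, and then use uniform convergence there. It follows the consistency proofs for bounded-$\rho$ M-estimators in GLMs, as in \cite{bianco1996robust} and \cite{valdora2014robust}.

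\textbf{Step 1: $\tau>0$.} Fix $\mathbf t\in S$ and let $\gamma\to\infty$. On $\{\mathbf x^\top\mathbf t>0\}$ we have $\mathbf x^\top\gamma\mathbf t\to+\infty$. By \ref{as:linkfunction} this gives $g^{-1}(\cdot)\to\theta_2$, so $m(\cdot)\to m_2$ by \ref{as:mwelldefined}. On $\{\mathbf x^\top\mathbf t<0\}$ the limit is $m_1$ instead. I will show that $m_1=\theta_1$ and $m_2=\theta_2$. The reason is that $m(\theta)\in\Theta$ and $m$ is increasing, and if, say, $m_2<\theta_2$, the Fisher consistency of $m$ could not hold near $\theta_2$. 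This would be checked using \ref{as:limF}, which pushes the mass of $y$ under $\theta$ upward as $\theta\to\theta_2$.

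Then by \ref{as:limF}, for every fixed $y$ in the support with $F(y,\cdot)<1$, we get $F(y,m)-\tfrac12p(y,m)\to0$ or $\to1$. Hence $|t(y,m)|\to\infty$ and $\rho(t)\to1$ by \ref{as:limrho}. Dominated convergence ($\rho\le1$, $w\le1$) then gives the limit in \eqref{eq:tau}:
\[
\mathbb E\big[(I(\mathbf x^\top\mathbf t\neq0)-\rho(t(y,m(g^{-1}(\mathbf x^\top\boldsymbol\beta_0)))))\,w\big].
\]
By the Lemma 5 in Appendix A cited in the remark on \ref{as:existeepsilon0}, $\mathbb E_\theta\rho(t(y,\theta))<1-\epsilon_0$. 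I would apply it conditionally on $\mathbf x$ at $\theta=m(g^{-1}(\mathbf x^\top\boldsymbol\beta_0))$; if necessary I would use the defining minimality of $m$ to pass from $t(y,\theta)$ to $t(y,m(\theta))$. This gives, uniformly in $\mathbf t$,
\[
\text{limit}\ \ge\ \epsilon_0\,\mathbb E\big[I(\mathbf x^\top\mathbf t\neq0)\,w\big]-\mathbb E\big[I(\mathbf x^\top\mathbf t=0)\,w\big].
\]
The first term is bounded below uniformly by \ref{as:prob}. I would handle the second by restricting the comparison to $\{\mathbf x^\top\mathbf t\ne0\}$, where the two $\rho$ terms cancel on the null set. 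This gives $\tau\ge\epsilon_0\inf_{\mathbf t}P(\{\mathbf x^\top\mathbf t\neq0\}\cap\{w>0\})\cdot c'>0$. The infimum over the compact sphere is handled through lower semicontinuity of $\mathbf t\mapsto P(\mathbf x^\top\mathbf t\neq0,\ w>0)$.

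\textbf{Step 2: the estimator is eventually bounded.} Let $L(\boldsymbol\beta)=\mathbb E[\rho(t(y,m(g^{-1}(\mathbf x^\top\boldsymbol\beta)))) w(\mathbf x,\boldsymbol\mu_0,\boldsymbol\Sigma_0)]$. Using the hypothesis $P(\mathbf t^\top\mathbf x=0)<\tau$, I will show there exist $R$ and $\delta>0$ with $\liminf_n\inf_{\|\boldsymbol\beta\|\ge R}L_n(\boldsymbol\beta)/n\ge L(\boldsymbol\beta_0)+\delta$ a.s. The plan is a covering argument. For each $\mathbf t\in S$, pick $\gamma_{\mathbf t}$ and a neighbourhood $V_{\mathbf t}$ such that for $\gamma\ge\gamma_{\mathbf t}$ and $\mathbf s\in V_{\mathbf t}$ the function $\rho(\cdots\mathbf x^\top\gamma\mathbf s)$ is bounded below by $I(|\mathbf x^\top\mathbf t|>\eta)\,(1-\eta)$. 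This lower bound is free of $\gamma$, so the strong law applies to it. A finite subcover of $S$ then finishes the step. The condition $P(\mathbf t^\top\mathbf x=0)<\tau$ is what makes the indicator's expectation strictly exceed $L(\boldsymbol\beta_0)$. The random weights $w(\mathbf x,\fhat{\boldsymbol\mu}_n,\fhat{\boldsymbol\Sigma}_n)$ are replaced by $w(\mathbf x,\boldsymbol\mu_0,\boldsymbol\Sigma_0)$ using \ref{as:limmusigma}, \ref{as:limw} and continuity, with a uniform-on-compacts argument.

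\textbf{Step 3: conclusion on the compact set.} On $\{\|\boldsymbol\beta\|\le R\}$ the summands are bounded and continuous in $\boldsymbol\beta$, using \ref{as:Fcont}, \ref{as:rhocont} and continuity of $m$. I would argue continuity of $m$ from uniqueness in \ref{as:mwelldefined} by a Berge-type argument. A uniform SLLN (bracketing, or the standard compact-parameter Glivenko–Cantelli) therefore gives $\sup_{\|\boldsymbol\beta\|\le R}|L_n(\boldsymbol\beta)/n-L(\boldsymbol\beta)|\to0$ a.s. Fisher consistency (Lemma 2) together with uniqueness of the minimizer yields $\inf_{\|\boldsymbol\beta-\boldsymbol\beta_0\|\ge\varepsilon,\|\boldsymbol\beta\|\le R}L>L(\boldsymbol\beta_0)$. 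Uniqueness holds because $m$ is strictly increasing and by \ref{as:Fmonotentheta} and \ref{as:prob}. Combining this with Step 2 and $L_n(\fhat{\boldsymbol\beta}_n)\le L_n(\boldsymbol\beta_0)$ gives $\fhat{\boldsymbol\beta}_n\to\boldsymbol\beta_0$ a.s.

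I expect Step 1 to be the main obstacle, specifically the identification $m_i=\theta_i$ and the uniform-in-$\mathbf t$ positivity. Step 2's compactness argument for directions with $\mathbf x^\top\mathbf t$ near $0$ is also delicate.
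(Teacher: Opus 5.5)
Your three-step plan is essentially the proof of Theorem 4 of \cite{valdora2014robust}: show $\tau>0$, use a covering argument to keep $\fhat{\boldsymbol\beta}_n$ in a compact set, then apply a uniform SLLN plus Fisher consistency. The paper simply invokes that theorem after checking its conditions P0--P4. There is, however, a genuine gap in Step 1, and it carries over into Step 2.

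You claim that for every $y$ in the support with $F(y,\cdot)<1$, $F(y,m)-\tfrac12p(y,m)\to0$ or $1$, so that $\rho(t(y,m(\theta)))\to1$ at both ends of $\Theta$. This is false for discrete responses, which are exactly the Poisson and logistic cases the paper targets.
\begin{itemize}
\item For Poisson, $t(0,\gamma)=\Phi^{-1}(e^{-\gamma}/2)\to\Phi^{-1}(1/2)=0$ as $\gamma\to\theta_1=0$, so $\rho(t(0,\gamma))\to0$. The proof of Theorem~\ref{teo:bp} gives $\phi_1^*(y)=I(y>0)$.
\item For Bernoulli, $F(0,\gamma)-\tfrac12p(0,\gamma)=(1-\gamma)/2\to1/2$ as $\gamma\to0$. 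Also $t(1,\gamma)=\Phi^{-1}(1-\gamma/2)\to0$ as $\gamma\to1$; this atom $y=1$ has $F(1,\cdot)=1$, so your restriction silently discards it. Hence $\phi_1^*(y)=I(y=1)$ and $\phi_2^*(y)=I(y=0)$.
\end{itemize}
So the limit inside $\tau$ is $\mathbb E[(\phi_i^*(y)-\rho(t(y,m(\theta_{\mathbf x}))))\,w]$, with $i=1$ or $2$ according to the sign of $\mathbf x^\top\mathbf t$. It is not $\mathbb E[(I(\mathbf x^\top\mathbf t\neq0)-\rho(\cdot))\,w]$. The bound $\ge\epsilon_0$ from Lemma~\ref{lemma:epsilon0} and Corollary~\ref{cor:epsilon0} then gives nothing. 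In the Poisson case with $\mathbf x^\top\mathbf t<0$, the conditional limit is at most $1-e^{-\theta_{\mathbf x}}$, which tends to $0$ as $\theta_{\mathbf x}\to0$, so no lower bound uniform in $\mathbf x$ exists. Your Step 2 minorant $I(|\mathbf x^\top\mathbf t|>\eta)(1-\eta)$ rests on the same false limit.

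The missing ingredient is a comparison between the boundary limit and the interior minimum.
\begin{itemize}
\item The paper sets $C_i(\theta)=\mathbb E_\theta\phi_i^*(y)-\mathbb E_\theta\rho(t(y,m(\theta)))$ and takes $C_i(\theta)>0$ for each $\theta$ from the minimality in \ref{as:mwelldefined}, not from Lemma~\ref{lemma:epsilon0}.
\item Uniformity comes from continuity of $C_i$ on a compact range $[g^{-1}(K_1),g^{-1}(K_2)]$ with $\mathbb P(\mathbf x^\top\boldsymbol\beta_0\in[K_1,K_2])>1-\delta/4$. This is combined with a threshold $w>\zeta$ extracted from \ref{as:prob}, giving $\tau\ge c_0\zeta\delta/4$.
\item On $\{\mathbf x^\top\mathbf t=0\}$ the two $\rho$ terms do not cancel, since they are evaluated at $m(g^{-1}(0))$ and $m(\theta_{\mathbf x})$. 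The conditional difference is nonetheless $\ge0$ by minimality of $m(\theta_{\mathbf x})$, which is all you need.
\end{itemize}

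The identification $m_i=\theta_i$ is not needed for consistency. Your justification for it also does not work: $m$ is Fisher consistent by construction whatever its range. The paper gets $m_2=\theta_2$ only through a quantitative use of \ref{as:existeepsilon0} (Lemma~\ref{lemma:m2}), and $m_1=\theta_1$ only under the extra assumptions \ref{as:ydiscreteN0}--\ref{as:limsp} (Lemma~\ref{lemma:m1}).

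Once Step 1 is redone this way, rebuild the Step 2 minorant from $\phi_i^*$. The part on $\{\mathbf x^\top\mathbf t=0\}$ then costs at most $\sup\rho\cdot\mathbb P(\mathbf x^\top\mathbf t=0)<\tau$, and your Steps 2--3 go through.
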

	Theorem \ref{teo:consistency} is proved in Appendix A in the supplementary material, as a consequence of Theorem 4 in \cite{valdora2014robust}. 
	\subsection{Asymptotic normality}
	To establish asymptotic normality we will need the following additional assumptions:
	\begin{enumerate}[label=\textbf{A\arabic*}, resume=AList]
		\item\label{as:gC2} The link function $g$ is twice continuously differentiable.
		\item \label{as:Fdifer} $F(y,\theta)$ has three continuous and bounded derivatives as a function of $\theta$. \item \label{as:rhodifer} $\rho$ has three continuous and bounded derivatives. We write $\psi=\rho^{\prime}$.
		\item \label{as:psiprimanotzero} $\mathbb E_\theta\left(\psi^{\prime}\left(t(y,m(\theta))\right)t^{\prime}(y,m(\theta))+\psi \left(t(y,m(\theta))\right)t^{\prime\prime}(y,m(\theta))\right) \neq 0$ for all $\theta$, where $t^\prime(y, \theta)$ denotes the partial derivative of $t(y, \theta)$ with respect to $\theta$.
		
	\end{enumerate}
	Let $\boldsymbol{\Psi}=\left(\Psi_1, \ldots \Psi_p\right): \mathbb{R} \times \mathbb{R}^p \times \mathbb{R}^p \rightarrow \mathbb{R}^p$ be defined by
	$$\small
	\begin{aligned}
		&\Psi _j(y, \mathbf{x}, \boldsymbol{\beta}, \boldsymbol{\mu}, \mathbf{\Sigma})=w(\mathbf{x}, \boldsymbol{\mu}, \mathbf{\Sigma}) \frac{\partial}{\partial \boldsymbol{\beta}_j} \rho\left(t\left(y,m\left(g^{-1}\left( \mathbf{x} ^\top \boldsymbol{\beta}\right)\right)\right)\right)\\
		& \quad=w(\mathbf{x}, \boldsymbol{\mu}, \mathbf{\Sigma}) \psi\left(t\left(y, m\left( g^{-1}\left( \mathbf{x} ^\top \boldsymbol{\beta}\right)\right)\right)\right)t^{\prime}\left(y,m\left(g^{-1}\left( \mathbf{x} ^\top \boldsymbol{\beta}\right)\right)\right) m^{\prime}\left(g^{-1}\left( \mathbf{x} ^\top \boldsymbol{\beta}\right)\right) g^{-1\prime}\left( \mathbf{x} ^\top \boldsymbol{\beta}\right) x_j,
	\end{aligned}
	$$
	
	Denote by $\mathbf{J}_{\boldsymbol\Psi}(y, \mathbf{x}, \boldsymbol{\beta}, \boldsymbol{\mu}, \boldsymbol{\Sigma})$ the Jacobian matrix of $\boldsymbol{\Psi}$ with respect to $\boldsymbol{\beta}$, that is to say, the matrix with entries 
	$$
	J_{\boldsymbol\Psi}^{j, k}(y, \mathbf{x}, \boldsymbol{\beta}, \boldsymbol{\mu}, \mathbf{\Sigma})=\frac{\partial}{\partial \boldsymbol{\beta}_k} {\Psi}_j(y, \mathbf{x}, \boldsymbol{\beta}, \boldsymbol{\mu}, \mathbf{\Sigma}), \quad 1 \leq j, k \leq p.
	$$

	Note that Assumptions \ref{as:Fdifer}, \ref{as:rhodifer} and the differentiability of $m$, proved in Lemma 8 
	in Appendix A, imply that $\Psi$ and $\mathbf{J}_\psi$ are well defined. 
	\begin{enumerate}[label=\textbf{A\arabic*}, resume=AList]
		\item \label{as:existepsilon} There exists $\varepsilon>0$ such that $\mathbb E_{\beta_0}\left(\sup _{\left\|\beta-\beta_0\right\| \leq \varepsilon} \left| J_{\boldsymbol\Psi}^{j, k}\left(y, \mathbf{x}, \boldsymbol{\beta}, \boldsymbol{\mu}_0, \boldsymbol{\Sigma}_0\right) \right|\right)<\infty$, for all $1 \leq j, k \leq p$, where $\left\| \,\, \|\right.$ denotes the $l_2$ norm, and $ \mathbb E_{\boldsymbol{\beta}_0}\left(\mathbf{J}_{\boldsymbol\Psi}\left(y, \mathbf{x}, \boldsymbol{\beta}_0, \boldsymbol{\mu}_0, \boldsymbol{\Sigma}_0\right)\right)$ is non-singular.
	\end{enumerate}
	\begin{remark}
		{Assumptions \ref{as:gC2} to \ref{as:psiprimanotzero} are needed to have the differentiability of $\boldsymbol{\Psi}$, while Assumption \ref{as:existepsilon} is needed to obtain the expression of the asymptotic covariance matrix. Both assumptions have been verified empirically in the case of Poisson and logistic regression models and the $\rho$-function defined in Section \ref{sec:implementation}.}\end{remark} 
	%
	\begin{theorem}[Asymptotic normality]\label{teo:asnorm} Let $\left(y_i, \mathbf{x}_i\right), i \in \mathbb N$, be a sequence of i.i.d. random vectors following a GLM with parameter $\boldsymbol{\beta}_0$, link function $g$ and distribution function $F$.  Let $\fhat{\boldsymbol{\beta}}_n$ be the WMNQPIT-estimator of $\boldsymbol{\beta}_0$  defined by \eqref{eq:wmpitest} and assume \ref{as:Fcont} to \ref{as:existepsilon} hold.
		Assume also that  
		$P\left(\mathbf{t}^{\prime} \mathbf{x}=0\right)<\tau / M$ for all $\mathbf{t} \in S$, where $M=\sup \rho$ and $\tau$ is given by \eqref{eq:tau}. Then
		$$\sqrt{n}\left(\fhat{\boldsymbol{\beta}}_n-\boldsymbol{\beta}_0\right) \xrightarrow{\mathcal{D}} \mathcal{N}\left(0, \mathbf{B}^{-1} \mathbf{A B ^ { \top -1}}\right),$$
		where $\mathcal{N}_p(\boldsymbol{\mu}, \mathbf{\Sigma})$ denotes the p-dimensional multivariate normal distribution with mean $\boldsymbol{\mu}$ and covariance matrix $ \boldsymbol{\Sigma}$,
		$\mathbf{B}=E_{\boldsymbol{\beta}_0}\left(\mathbf{J}_{\boldsymbol{\Psi}} \left(y, \mathbf{x}, \boldsymbol{\beta}_0, \boldsymbol{\mu}_0, \boldsymbol{\Sigma}_0\right)\right)
		$, and
		$
		\mathbf{A}=E_{\boldsymbol{\beta}_0}\left(\boldsymbol{\Psi}\left(y, \mathbf{x}, \boldsymbol{\beta}_0, \boldsymbol{\mu}_0, \boldsymbol{\Sigma}_0\right) \boldsymbol{\Psi}\left(y, \mathbf{x}, \boldsymbol{\beta}_0, \boldsymbol{\mu}_0, \boldsymbol{\Sigma}_0\right)^{\top}\right)$. 
	\end{theorem}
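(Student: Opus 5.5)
We follow the usual route for M-estimators with estimated nuisance parameters. The plan is to reduce the result to a general asymptotic normality theorem for weighted M-estimators with plug-in weights, as in \cite{valdora2014robust} (Theorem 5 there, the companion of the consistency result used for Theorem \ref{teo:consistency}), and verify its hypotheses.

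\emph{Step 1: consistency and the score equation.} The condition $P(\mathbf t^\top \mathbf x=0)<\tau/M$ gives the hypothesis of Theorem \ref{teo:consistency}, since $M=1$ by \ref{as:limrho}. Hence $\fhat{\boldsymbol\beta}_n\to\boldsymbol\beta_0$ a.s.

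By \ref{as:gC2}--\ref{as:rhodifer} and the differentiability of $m$ (Lemma 8 in Appendix A), $L_n$ is differentiable. So with probability tending to one, $\fhat{\boldsymbol\beta}_n$ solves
\begin{equation*}
\sum_{i=1}^n \boldsymbol\Psi(y_i,\mathbf x_i,\fhat{\boldsymbol\beta}_n,\fhat{\boldsymbol\mu}_n,\fhat{\boldsymbol\Sigma}_n)=\mathbf 0 .
\end{equation*}

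\emph{Step 2: Taylor expansion.} Expand this equation around $\boldsymbol\beta_0$ with a mean value $\boldsymbol\beta^*_n$ between $\fhat{\boldsymbol\beta}_n$ and $\boldsymbol\beta_0$:
\begin{equation*}
\mathbf 0=\frac{1}{\sqrt n}\sum_{i=1}^n\boldsymbol\Psi(y_i,\mathbf x_i,\boldsymbol\beta_0,\fhat{\boldsymbol\mu}_n,\fhat{\boldsymbol\Sigma}_n)+\Big(\frac1n\sum_{i=1}^n \mathbf J_{\boldsymbol\Psi}(y_i,\mathbf x_i,\boldsymbol\beta_n^*,\fhat{\boldsymbol\mu}_n,\fhat{\boldsymbol\Sigma}_n)\Big)\sqrt n(\fhat{\boldsymbol\beta}_n-\boldsymbol\beta_0).
\end{equation*}

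For the Jacobian term, use \ref{as:existepsilon} (an integrable local envelope), the continuity of $w$ in $(\boldsymbol\mu,\boldsymbol\Sigma)$ from \ref{as:limw}, and \ref{as:limmusigma}. A uniform strong law over a shrinking neighbourhood, via dominated convergence applied to the sup over $\|\boldsymbol\beta-\boldsymbol\beta_0\|\le\delta$ with $\delta\downarrow0$, then shows that the averaged Jacobian converges a.s. to $\mathbf B$. $\mathbf B$ is invertible by \ref{as:existepsilon}.

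\emph{Step 3: the score term.} This is the main obstacle, because the weights are estimated. First, Fisher consistency (Lemma 2 in Appendix A, from the definition \eqref{eq:mdef} of $m$) gives
\begin{equation*}
\mathbb E_{\theta}\big[\psi(t(y,m(\theta)))\,t'(y,m(\theta))\big]=0 .
\end{equation*}
This follows from the first-order condition for the minimum at $\gamma=m(\theta)$; differentiating under the expectation is justified by boundedness of $\psi$ and of the derivatives in \ref{as:Fdifer}. Therefore $\mathbb E[\boldsymbol\Psi(y,\mathbf x,\boldsymbol\beta_0,\boldsymbol\mu,\boldsymbol\Sigma)\mid\mathbf x]=\mathbf 0$ for every fixed $(\boldsymbol\mu,\boldsymbol\Sigma)$.

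I will then show that replacing $(\fhat{\boldsymbol\mu}_n,\fhat{\boldsymbol\Sigma}_n)$ by $(\boldsymbol\mu_0,\boldsymbol\Sigma_0)$ costs $o_P(1)$. The difference
\begin{equation*}
n^{-1/2}\sum_{i=1}^n\big(w(\mathbf x_i,\fhat{\boldsymbol\mu}_n,\fhat{\boldsymbol\Sigma}_n)-w(\mathbf x_i,\boldsymbol\mu_0,\boldsymbol\Sigma_0)\big)\,\mathbf h(y_i,\mathbf x_i)
\end{equation*}
has terms $\mathbf h$ that are conditionally mean zero given $\mathbf x_i$. I would handle it with an empirical process argument: the class $\{w(\cdot,\boldsymbol\mu,\boldsymbol\Sigma)\mathbf h\}$, indexed by a finite-dimensional parameter, is Donsker and $L^2$-continuous in $(\boldsymbol\mu,\boldsymbol\Sigma)$, and its mean is identically zero. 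Stochastic equicontinuity then gives the $o_P(1)$ bound.

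Once the weights are fixed at $(\boldsymbol\mu_0,\boldsymbol\Sigma_0)$, the remaining sum is i.i.d. with mean zero and covariance $\mathbf A$, which is finite because $\psi$, $t'$ and $w$ are bounded (this needs a second moment of the $\mathbf x$-factor $m'\,g^{-1\prime}\,\mathbf x$, which I will assume or read off \ref{as:existepsilon}). The CLT and Slutsky then give $\sqrt n(\fhat{\boldsymbol\beta}_n-\boldsymbol\beta_0)\to\mathcal N(0,\mathbf B^{-1}\mathbf A\mathbf B^{\top-1})$.
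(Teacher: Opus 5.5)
Your proposal is correct and follows the paper's route. The paper's proof just says that the argument of Theorem~2 of \cite{valdora2014robust} goes through with loss $\rho\bigl(t\bigl(y,m\bigl(g^{-1}(\mathbf x^\top\boldsymbol\beta)\bigr)\bigr)\bigr)\,w(\mathbf x,\boldsymbol\mu,\boldsymbol\Sigma)$. That is the asymptotic-normality result there; Theorem~4 is the consistency one, so your ``Theorem~5'' is a mis-citation. Your consistency, score-equation, Taylor-expansion and CLT steps are that argument spelled out, with $\mathbb E\bigl[\boldsymbol\Psi(y,\mathbf x,\boldsymbol\beta_0,\boldsymbol\mu,\boldsymbol\Sigma)\mid\mathbf x\bigr]=\mathbf 0$ for every $(\boldsymbol\mu,\boldsymbol\Sigma)$ neutralizing the estimated weights.

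Two technical points need tightening. First, the Jacobian step needs an integrable envelope for $\sup_{\|\boldsymbol\beta-\boldsymbol\beta_0\|\le\varepsilon}|J^{j,k}_{\boldsymbol\Psi}|$ that holds for $(\boldsymbol\mu,\boldsymbol\Sigma)$ near $(\boldsymbol\mu_0,\boldsymbol\Sigma_0)$ (e.g.\ for the unweighted second derivative, since $w\le 1$), whereas \ref{as:existepsilon} gives it only at $(\boldsymbol\mu_0,\boldsymbol\Sigma_0)$. Second, what is bounded is $\psi(t)\,t'$ for the paper's $\rho$ (constant outside $[-c,c]$), not $t'$ itself, so finiteness of $\mathbf A$ still rests on the moment condition you flag.
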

	\subsection{Asymptotic variance}

	Denote $\mu_\mathbf x=g^{-1}\left(\boldsymbol{\beta}^{\top}_0 \mathbf{x}\right)$, $\mu^{\prime}_\mathbf x=g^{-1\prime}\left(\boldsymbol{\beta}^{\top}_0 \mathbf{x}\right)$,   $m_\mathbf x = m(\mu_\mathbf x)$, $m^\prime_\mathbf x = m'(\mu_\mathbf x)$, $m^{\prime\prime}_\mathbf x = m^{\prime\prime}(\mu_\mathbf x)$  and $\mu^{\prime\prime}_\mathbf x=g^{-1\prime\prime} \left(\boldsymbol{\beta}^{\top}_0 \mathbf{x}\right)$. Then
	$$\boldsymbol{\Psi}\left(y, \mathbf{x}, \boldsymbol{\beta}_0, \boldsymbol{\mu}_0, \boldsymbol{\Sigma}_0\right) \boldsymbol{\Psi}\left(y, \mathbf{x}, \boldsymbol{\beta}_0, \boldsymbol{\mu}_0, \boldsymbol{\Sigma}_0\right)^{\top} = \mathbf x\left(\psi\left(t\left(y, m_\mathbf x\right)\right)t^{\prime}\left(y, m_\mathbf x\right) m^{\prime}_\mathbf x \mu^{\prime}_\mathbf x\right)^2 \mathbf x^\top$$  and $$\mathbf{J}_{\boldsymbol\Psi}(y, \mathbf{x}, \boldsymbol{\beta}, \boldsymbol{\mu}, \boldsymbol{\Sigma}) = \mathbf x \nu(\mathbf x, y, \boldsymbol{\beta}) w(\mathbf{x}, \boldsymbol{\mu}, \mathbf{\Sigma}))  \mathbf x^\top,$$ where
	$\nu(\mathbf x, y, \boldsymbol{\beta}) =
	\psi^{\prime}\left(t\left(y, m_\mathbf x\right)\right)
	\left(t^{\prime}\left(y,m_\mathbf x\right) m^{\prime}_\mathbf x \mu^{\prime}_\mathbf x \right)^2 +
	\psi\left(t\left(y, m_\mathbf x\right)\right)t^{\prime \prime}\left(y,m_\mathbf x \right) \left(m^{\prime}_\mathbf x  \mu^{\prime}_\mathbf x \right)^2+ 
	\psi\left(t\left(y, m_\mathbf x\right)\right)t^{\prime}\left(y,m_\mathbf x\right) m^{\prime \prime}_\mathbf x \mu^{\prime 2}_\mathbf x +
	\psi\left(t\left(y, m_\mathbf x\right)\right)t^{\prime}\left(y, m_\mathbf x\right) m^{\prime}_\mathbf x \mu^{\prime\prime}_\mathbf x.$
	
	Therefore 
	$$\mathbf A = \mathbb E\left( \mathbf x  \, \mathbb E_{\boldsymbol{\beta}_0} \left(\left(\psi\left(t\left(y, m_\mathbf x\right)\right)t^{\prime}\left(y, m_\mathbf x\right)^2\right)| \mathbf x \right) \left(m^{\prime}_\mathbf x \mu^{\prime}_\mathbf x\right)^2 \mathbf x^\top\right)$$
	and  $$ \mathbf B=\mathbb{E}\left(\mathbf x \, \mathbb{E}_{\boldsymbol{\beta}_0} \left(\nu(\mathbf x, y, \boldsymbol{\beta}_0)|\mathbf x\right) w(\mathbf{x}, \boldsymbol{\mu}, \mathbf{\Sigma})  \mathbf x^\top\right).$$
	We can estimate these matrices by replacing  $\boldsymbol{\beta}_0$ with $\fhat{\boldsymbol{\beta}}$ and the expectations with respect to $\mathbf x$ by averages. In this way we can obtain $\fhat{\mathbf{A}}$ and $\fhat{\mathbf{B}}$ and an estimate of the asymptotic covariance matrix as
	$\fhat{V} = \fhat{\mathbf{B}}^{-1} \fhat{\mathbf{A}} \fhat{\mathbf{B}} ^ { \top -1}$.
	
	\subsection{Asymptotic breakdown point}
	In this section, we compute the ABP of MNQPIT-estimators. The ABP is a measure of robustness of an
	estimator introduced by \cite{hampel1971general}. 
	Let $(y,\mathbf{x})$ be a random
	vector with values in n $\mathbb{R}\times$
	$\mathbb{R}^{p}$ that follows a GLM with parameter $\boldsymbol{\beta}$ and link function $g$. Let $H$ be the distribution of  $(y,\mathbf{x})$ and consider an estimator defined by a functional $\mathbf T$, that is to say, given a sample $(y_{1},\mathbf{x}_{1}%
	),...(y_{n},\mathbf{x}_{n}),$  $\fhat{\boldsymbol{\beta}}_n = \mathbf T( H_n)$, where $H_n$ is the empirical distribution of the sample. 
	For example, the  MNQPIT functional is
	\begin{equation}
		\mathbf T(H)=\operatorname{argmin}_{\boldsymbol{\beta}\in{\mathbb{R}}^{q}}\mathbb E_{H}\left(
		\rho\left(t\left(y,m\left(g^{-1}\left(\mathbf{x}^{\top}\boldsymbol{\beta}\right)\right)\right)\right)\right).\label{eq:functional}
	\end{equation}
	The ABP of the estimator $\fhat{\boldsymbol{\beta}}_n$ is defined as the ABP of the functional $\mathbf T$, as
	\[
	\varepsilon^{\ast}(\mathbf{T},H_{0})=\sup_{\varepsilon}\{\varepsilon
	\in(0,1):\sup_{H^{\ast}\in\mathcal{D}}\{||\mathbf{T}(1-\mathbf{\varepsilon
	})H_{0}+\mathbf{\varepsilon}H^{\ast})||_2\}<\infty\},
	\]
	where 
	$\mathcal{D}$ the set of all the distributions on $\mathbb{R}\times$
	$\mathbb{R}^{p}$ and $||$ $||_{2}$ denotes the $l_{2}$ norm. See \cite{maronna2019robust} for more details on this robustness measure.
	
	\begin{theorem}  Assume that $(y, \mathbf x)$ follows a GLM with parameter $\boldsymbol{\beta}_0$, link function $g$ and distribution function $F$ and let $\theta_{\mathbf x}=g^{-1}\left(\mathbf x^\top \boldsymbol{\beta}_0\right)$. Assume $F$ belongs to a parametric family $\{ F(., \theta), \theta \in \Theta \}$, where $\Theta$ is an interval in $\mathbb R$ of the form $(\theta_1,  \theta_2).$
		Let $H_0$ be the joint distribution of $(y,\mathbf{x})$ and assume
		$P_{H_{0}}(\mathbf{x}^\top\boldsymbol{\alpha}=0)=0$ for all
		$\boldsymbol\alpha\in\mathbb{R}^{p}$. 
		Assume \ref{as:Fcont} to \ref{as:limF},  \ref{as:mwelldefined}, \ref{as:rhoeven} and \ref{as:limrho}. 	Let $\phi_1^*(y )= \lim_{\theta\rightarrow \theta_1} \rho \left( t \left(y, m(\theta)\right)\right) $,  $\phi_2^*(y )= \lim_{\theta\rightarrow \theta_2} \rho\left( t \left(y, m(\theta)\right)\right)$ and
		\begin{equation}\label{eq:bp}
			\varepsilon_0\left(\boldsymbol{\beta }_0\right)=\frac{ \mathbb E_{H_0}\left( \min \left( \phi_1^*(y ),\phi_2^*(y ) \right)  \right)  -   \mathbb E_{H_0}\left(\rho  \left(t \left(y, m\left(\theta_{\mathbf{x}}  \right)\right)\right)\right)}{ 1 +  \mathbb E_{H_0}\left(  \min \left( \phi_1^*(y ),\phi_2^*(y ) \right)\right)   -E_{H_0}\left(\rho \left(t \left(y, m\left( \theta_{\mathbf x}\right)\right)\right)\right)}
		\end{equation}
		Then $\varepsilon_0\left(\boldsymbol{\beta }_0\right)$ is a lower bound for  the asymptotic  breakdown point of the MNQPIT-estimator. 	\label{teo:bp1}
	\end{theorem}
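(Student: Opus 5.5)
We argue by contradiction, in the usual way for bounded-loss M-estimators. Fix $\varepsilon<\varepsilon_0(\boldsymbol\beta_0)$ and suppose there are contaminations $H_k^*$ such that $\boldsymbol\beta_k=\mathbf T(H_k)$, with $H_k=(1-\varepsilon)H_0+\varepsilon H_k^*$, satisfies $\|\boldsymbol\beta_k\|\to\infty$. Write $\boldsymbol\beta_k=\gamma_k\mathbf t_k$ with $\gamma_k\to\infty$ and $\|\mathbf t_k\|=1$. Passing to a subsequence, $\mathbf t_k\to\mathbf t$. We compare the objective at $\boldsymbol\beta_k$ with the objective at $\boldsymbol\beta_0$, and the plan is to show that $\boldsymbol\beta_0$ gives a strictly smaller value, which contradicts the minimality of $\boldsymbol\beta_k$.

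The upper bound at $\boldsymbol\beta_0$ is immediate. By \ref{as:limrho} we have $\rho\le1$, so
$$\mathbb E_{H_k}\rho\left(t\left(y,m(g^{-1}(\mathbf x^\top\boldsymbol\beta_0))\right)\right)\le(1-\varepsilon)\,\mathbb E_{H_0}\rho\left(t(y,m(\theta_{\mathbf x}))\right)+\varepsilon.$$
For the lower bound at $\boldsymbol\beta_k$ we use only $\rho\ge0$ on the contaminated part, which gives
$$\mathbb E_{H_k}\rho(\cdot)\ge(1-\varepsilon)\,\mathbb E_{H_0}\rho\left(t\left(y,m(g^{-1}(\mathbf x^\top\boldsymbol\beta_k))\right)\right).$$
Since $P_{H_0}(\mathbf x^\top\mathbf t=0)=0$, almost surely $\mathbf x^\top\boldsymbol\beta_k=\gamma_k\mathbf x^\top\mathbf t_k\to\pm\infty$. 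By \ref{as:linkfunction}, $g^{-1}(\mathbf x^\top\boldsymbol\beta_k)$ then tends to $\theta_1$ or $\theta_2$. Because $m$ is monotone by \ref{as:mwelldefined}, the integrand converges to $\phi_1^*(y)$ or $\phi_2^*(y)$, and in either case its limit is at least $\min(\phi_1^*(y),\phi_2^*(y))$.

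At this point we need the limits defining $\phi_j^*$ to exist and to be approached along $m(\theta)$. This should follow from \ref{as:limF}: $F(k,\theta)\to1$ or $0$, so $t(y,m(\theta))\to\pm\infty$ except at support points where $F(k,\theta)=1$, and the monotonicity of $m$ and $\rho$ controls the remaining cases. Continuity of $F$ in $\theta$ (\ref{as:Fcont}) makes the composition behave well. We then apply Fatou's lemma, which is legitimate since $\rho\ge0$; if needed, dominated convergence also applies since $\rho\le1$. This yields
$$\liminf_k \mathbb E_{H_0}\rho\left(t\left(y,m(g^{-1}(\mathbf x^\top\boldsymbol\beta_k))\right)\right)\ge \mathbb E_{H_0}\min(\phi_1^*,\phi_2^*).$$
A technical point is that $\mathbf t_k$ varies with $k$ rather than being fixed. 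This is handled because $\gamma_k\mathbf x^\top\mathbf t_k\to\pm\infty$ pointwise whenever $\mathbf x^\top\mathbf t\ne0$.

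Minimality of $\boldsymbol\beta_k$ for $H_k$ now forces
$$(1-\varepsilon)\,\mathbb E_{H_0}\min(\phi_1^*,\phi_2^*)\le(1-\varepsilon)\,\mathbb E_{H_0}\rho\left(t(y,m(\theta_{\mathbf x}))\right)+\varepsilon.$$
Rearranging gives $\varepsilon\ge (a-b)/(1+a-b)$, where $a=\mathbb E_{H_0}\min(\phi_1^*,\phi_2^*)$ and $b=\mathbb E_{H_0}\rho(t(y,m(\theta_{\mathbf x})))$. This is exactly $\varepsilon\ge\varepsilon_0(\boldsymbol\beta_0)$, contradicting the choice of $\varepsilon$. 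Hence $\mathbf T(H_k)$ remains bounded for every $\varepsilon<\varepsilon_0$, and $\varepsilon_0(\boldsymbol\beta_0)$ is a lower bound for the asymptotic breakdown point.

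The main obstacle is the pointwise limit step: showing that $\phi_j^*$ exists and that the integrand's $\liminf$ is at least $\min(\phi_1^*,\phi_2^*)$ along sequences $\theta\to\theta_j$. My first attempt would use \ref{as:limF} to get $t(y,\gamma)\to\mp\infty$ as $\gamma\to\theta_1,\theta_2$, combined with the monotonicity of $m$. If the range of $m$ does not exhaust $(\theta_1,\theta_2)$, so that $m_1$ or $m_2$ is interior, I would instead take the limits as $\rho(t(y,m_j))$ using \ref{as:Fcont}.
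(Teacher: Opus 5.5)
Your proposal is correct and follows essentially the same route as the paper's proof. Both arguments run as follows: bound the objective at $\boldsymbol\beta_0$ above by $(1-\varepsilon)\mathbb E_{H_0}\rho(t(y,m(\theta_{\mathbf x})))+\varepsilon$ using $\rho\le 1$, and bound the objective at $\boldsymbol\beta_k$ below by its $H_0$ part using $\rho\ge 0$. Then normalize $\boldsymbol\beta_k$, pass to a convergent subsequence of directions, note that the integrand tends to $\phi_1^*(y)$ or $\phi_2^*(y)$ according to the sign of $\mathbf x^\top\boldsymbol\alpha$, bound it below by $\min(\phi_1^*,\phi_2^*)$, and rearrange.

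Your version is, if anything, a bit more careful than the paper's. You make the Fatou step explicit, use the link-function limits from \ref{as:linkfunction}, and make the minimality comparison with $\boldsymbol\beta_0$ in the contaminated term as well; the paper's displayed chain writes $\theta^k_{\mathbf x}$ there, which is harmless only because that term is then bounded by $\varepsilon$. In both arguments, the existence of the limits $\phi_j^*$ is taken from their definition in the statement.
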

	The proof of Theorem \ref{teo:bp1} is given in Appendix A in the supplementary material.
	
	We now focus on the breakdown point of MNQPIT-estimators for GLMs with discrete response, such as Poisson and binomial.
	
	Consider the following additional assumptions on the distribution of $y$.
	\begin{enumerate}[label=\textbf{A\arabic*}, resume=AList]
		\item \label{as:ydiscreteN0} $y$ is a discrete random variable with rank $R \subset \mathbb N_0$.
		\item \label{as:limsp}  $\lim_{\theta\rightarrow \theta_1} p(y, \theta) = I_{\{0\}}(y)$. 
	\end{enumerate}
	
	\begin{theorem}
		Assume the hypothesis of Theorem \ref{teo:bp1} and Assumptions \ref{as:linkfunction}, \ref{as:rhoincreasing}, \ref{as:rhocont}, \ref{as:ydiscreteN0} and \ref{as:limsp} hold. 
		Let
		$$\varepsilon_0(\boldsymbol{\beta}_0)=\frac{1 - \mathbb P_{H_0}(y=0)-\mathbb E_{H_0}\left(\rho \left( t \left(y, m(\theta_{\mathbf{x}} ) \right)\right)\right)}{ 2 - \mathbb P_{H_0}(y=0)- \mathbb E_{H_0}\left(\rho \left(t \left(y, m(\theta_{\mathbf{x}})\right)\right)\right)},
		$$
		then $\varepsilon_0(\boldsymbol{\beta}_0)$ is a lower bound for  the asymptotic  breakdown point of the functional \eqref{eq:functional}  in $H_0$.	\label{teo:bp}
	\end{theorem}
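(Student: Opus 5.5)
The plan is to derive Theorem \ref{teo:bp} from Theorem \ref{teo:bp1}. It suffices to show that, under the extra assumptions, $\mathbb E_{H_0}\left(\min\left(\phi_1^*(y),\phi_2^*(y)\right)\right) \geq 1-\mathbb P_{H_0}(y=0)$. The map $a\mapsto (a-b)/(1+a-b)$ is increasing in $a$ (its derivative is $1/(1+a-b)^2>0$). So replacing $\mathbb E_{H_0}(\min(\phi_1^*,\phi_2^*))$ in \eqref{eq:bp} by the smaller quantity $1-\mathbb P_{H_0}(y=0)$ gives a number no larger than the lower bound of Theorem \ref{teo:bp1}, and that number is exactly the $\varepsilon_0(\boldsymbol\beta_0)$ of Theorem \ref{teo:bp}. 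In fact I expect to prove the pointwise statements $\phi_1^*(y)=1$ for $y\neq 0$ and $\phi_2^*(y)=1$ for all $y$ in the support. These give $\min(\phi_1^*,\phi_2^*)\ge I(y\neq 0)$, and hence the required inequality.

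\textbf{Limit at $\theta_2$.} By \ref{as:mwelldefined}, $m$ is strictly increasing. The limit $m_2$ must be $\theta_2$. I would argue this from the definition \eqref{eq:mdef}: if $m$ stayed bounded away from $\theta_2$, then for $\theta$ near $\theta_2$ the minimizer of $\gamma\mapsto\mathbb E_\theta\rho(t(y,\gamma))$ would lie in a fixed compact set, contradicting the fact that $\mathbb E_\theta\rho(t(y,\theta))<1-\epsilon_0$ (Lemma 5) while, by \ref{as:limF}, the mass of $y$ escapes to $+\infty$. If this step proves delicate, I would simply cite the corresponding lemma in Appendix A. Given $m(\theta)\to\theta_2$, \ref{as:limF} and \ref{as:Fcont} give $F(y,m(\theta))\to 0$ and $p(y,m(\theta))\le F(y,m(\theta))\to0$. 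Hence $F-0.5p\to 0$, so $t(y,m(\theta))\to-\infty$, and \ref{as:limrho} together with \ref{as:rhoeven} and \ref{as:rhocont} gives $\phi_2^*(y)=1$.

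\textbf{Limit at $\theta_1$.} Similarly, $m(\theta)\to\theta_1$. By \ref{as:limsp} and \ref{as:ydiscreteN0}, $p(\cdot,m(\theta))\to I_{\{0\}}$. For integer $y\ge1$ this gives $F(y,m(\theta))\ge p(0,m(\theta))+p(y,m(\theta))\to 1$ and $p(y,m(\theta))\to0$. Therefore $F-0.5p\to1$, $t\to+\infty$, and $\rho(t)\to1$, so $\phi_1^*(y)=1$ for $y\ge1$. At $y=0$ we only get $F-0.5p\to 0.5$, so $t\to0$ and $\phi_1^*(0)=0$. This is exactly why the term $\mathbb P_{H_0}(y=0)$ appears in the bound. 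Combining the two limits, $\min(\phi_1^*(y),\phi_2^*(y))=I(y\neq0)$, and taking expectations gives the equality $\mathbb E_{H_0}\left(\min(\phi_1^*,\phi_2^*)\right)=1-\mathbb P_{H_0}(y=0)$. Substituting this into \eqref{eq:bp} yields the stated formula.

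\textbf{Main obstacle.} The hardest step is establishing $m_1=\theta_1$ and $m_2=\theta_2$, that is, that the correction function $m$ spans the whole parameter interval. Monotonicity alone does not rule out bounded limits. I would try the following argument. Suppose $m_2<\theta_2$. By \ref{as:limF}, as $\theta\to\theta_2$ we have $F(k,\theta)\to0$ for every $k$, so $y$ runs off to infinity in probability. Then for any fixed $\gamma\le m_2$, $t(y,\gamma)\to+\infty$ in probability, and so $\mathbb E_\theta\rho(t(y,\gamma))\to1$. Uniformity over $\gamma$ in the compact range $[m(\theta_0),m_2]$ follows from the monotonicity of $F$ in $\gamma$ (\ref{as:Fmonotentheta}). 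This contradicts the fact that the minimum value is at most $\mathbb E_\theta\rho(t(y,\theta))<1-\epsilon_0$, which is Lemma 5 and uses \ref{as:existeepsilon0}. The argument at $\theta_1$ is symmetric.
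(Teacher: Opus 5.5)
Your reduction is the paper's. You feed Theorem~\ref{teo:bp1} the values of $\phi_1^*$ and $\phi_2^*$, and these rest on $m(\theta)\to\theta_1$ and $m(\theta)\to\theta_2$ (Lemmas~\ref{lemma:m1} and~\ref{lemma:m2}). Your argument for $m_2=\theta_2$ is essentially Lemma~\ref{lemma:m2}.

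The gap is your assertion that the argument at $\theta_1$ is symmetric; it is not. Write $R(\theta,\gamma)=\mathbb E_\theta\rho(t(y,\gamma))$. As $\theta\to\theta_1$, \ref{as:limsp} makes $y$ collapse onto $0$, the \emph{bottom} of the support. Because of the half-mass correction, $t(0,\gamma)=\Phi^{-1}(p(0,\gamma)/2)$ is a finite negative number for each fixed $\gamma>\theta_1$, not $-\infty$. So $R(\theta,\gamma)\to\rho\left(\Phi^{-1}(p(0,\gamma)/2)\right)<1$, and this limit goes to $0$ as $\gamma\downarrow\theta_1$. Nothing therefore contradicts the level $1-\epsilon_0$ of Lemma~\ref{lemma:epsilon0}. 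The contradiction has to come from comparing candidate values, and this is where the strict monotonicity in \ref{as:rhoincreasing} enters; your proposal never uses it. Lemma~\ref{lemma:m1} does this by observing that the limiting criterion $\lambda\mapsto\rho\left(\Phi^{-1}(p(0,\lambda)/2)\right)$ is increasing, so a value below $m_1$ beats it.

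A clean version runs as follows. First, $R(\theta,\theta)\le\rho\left(\Phi^{-1}(p(0,\theta)/2)\right)+1-p(0,\theta)\to0$ by \ref{as:limsp} and \ref{as:rhocont}. If instead $m_1>\theta_1$, then $m(\theta)>m_1$. By \ref{as:Fmonotentheta}, \ref{as:rhoeven} and \ref{as:rhoincreasing}, $R(\theta,m(\theta))\ge p(0,\theta)\,\rho\left(\Phi^{-1}(p(0,m_1)/2)\right)\to\rho\left(\Phi^{-1}(p(0,m_1)/2)\right)>0$. This contradicts $R(\theta,m(\theta))\le R(\theta,\theta)$.

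A second problem is shared by the paper's proof, which says ``By A3, $\lim_{\theta\to\theta_2}p(y,\theta)=0$ for all $y$''. In fact \ref{as:limF} gives $F(k,\theta)\to0$ only at support points with $F(k,\theta)<1$. Suppose the support has a largest point $N$, as for the binomial. Then $p(N,\gamma)\to1$ and $t(N,\gamma)=\Phi^{-1}(1-p(N,\gamma)/2)\to0$ as $\gamma\to\theta_2$. So $\phi_2^*(N)=0$, and your identity $\min(\phi_1^*,\phi_2^*)=I(y\neq0)$ fails at $y=N$. As written, your argument is complete only for families with $F(k,\theta)<1$ for every $k$, such as Poisson. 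For the binomial, Theorem~\ref{teo:bp1} yields the bound only with $1-\mathbb P_{H_0}(y=0)$ replaced by $1-\mathbb P_{H_0}(y=0)-\mathbb P_{H_0}(y=N)$. In the Bernoulli case that quantity is $0$.
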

	
	\begin{remark}	It is straightforward to verify that the binomial and the Poisson distributions verify Assumptions \ref{as:ydiscreteN0} and \ref{as:limsp} and therefore, under Assumptions \ref{as:linkfunction} to \ref{as:rhocont}, $\varepsilon_0$ given by Theorem \ref{teo:bp} is a lower bound for the ABP.
	\end{remark}
	
	\begin{remark} \label{obs:ABP} Note that our lower bound for the ABP depends on the true value of $\boldsymbol{\beta}_0$.
		If $\boldsymbol{\beta}_0$ is such that the probability that $y=0$ is large, then $\varepsilon_0(\boldsymbol{\beta}_0)$ will be near zero. 
		Note that in this case, the probability that $y=0$ is large and a small fraction of outliers equal to 0 can make the fraction of observed zeros larger than 0.5. Therefore the good non-null observations may be mistaken for outliers and in this case, the estimator will break down, giving an estimate that makes $\mathbb{P}(y=0)\approx 1$. This is seen in the example in the following section.
	\end{remark}	
	
	\subsubsection{The ABP for the case of Poisson and logistic regression with normal covariates}
	
	We computed our lower bound for the ABP numerically for the particular case in which the vector of covariates follows a multivariate standard normal distribution. 
	Suppose $(y, \mathbf{x})$ follows a GLM with parameter $\boldsymbol{\beta}$  with an intercept, denote $\boldsymbol{\beta}=(\beta_0, \boldsymbol{\beta}_*)$ and $\mathbf x =(1,\mathbf{x}_*)$, and suppose  $\mathbf x_* \sim \mathcal N_p(\mathbf 0, I)$. Then $\mathbf x^{\top}\boldsymbol{\beta}  \sim \mathcal N_1(\beta_0, \left\|\boldsymbol\beta_*\right\|^2)$.  For this type of covariates and for Poisson and logistic regression, we generated a grid of values of $\beta_0$ and $\left\|\boldsymbol\beta_*\right\|$ and we computed the ABP for each value. As mentioned in Remark \ref{obs:ABP}, the ABP depends highly on the probability that $y=0$ so, in Figure \ref{fig:bps}, we give a plot of the ABP vs $p_0=\mathbb P(y=0)$. Notice that the breakdown point is near $0.5$ when $p_0 = 0$ and slowly decreases to $0$ as $p_0 \rightarrow 1$. This behavior when $p_0 \rightarrow 1$ is explained in Remark \ref{obs:ABP}.  
	\begin{center}
		\begin{figure}[H]
			\begin{tabular}{cc}
				\includegraphics[width=0.48\textwidth]{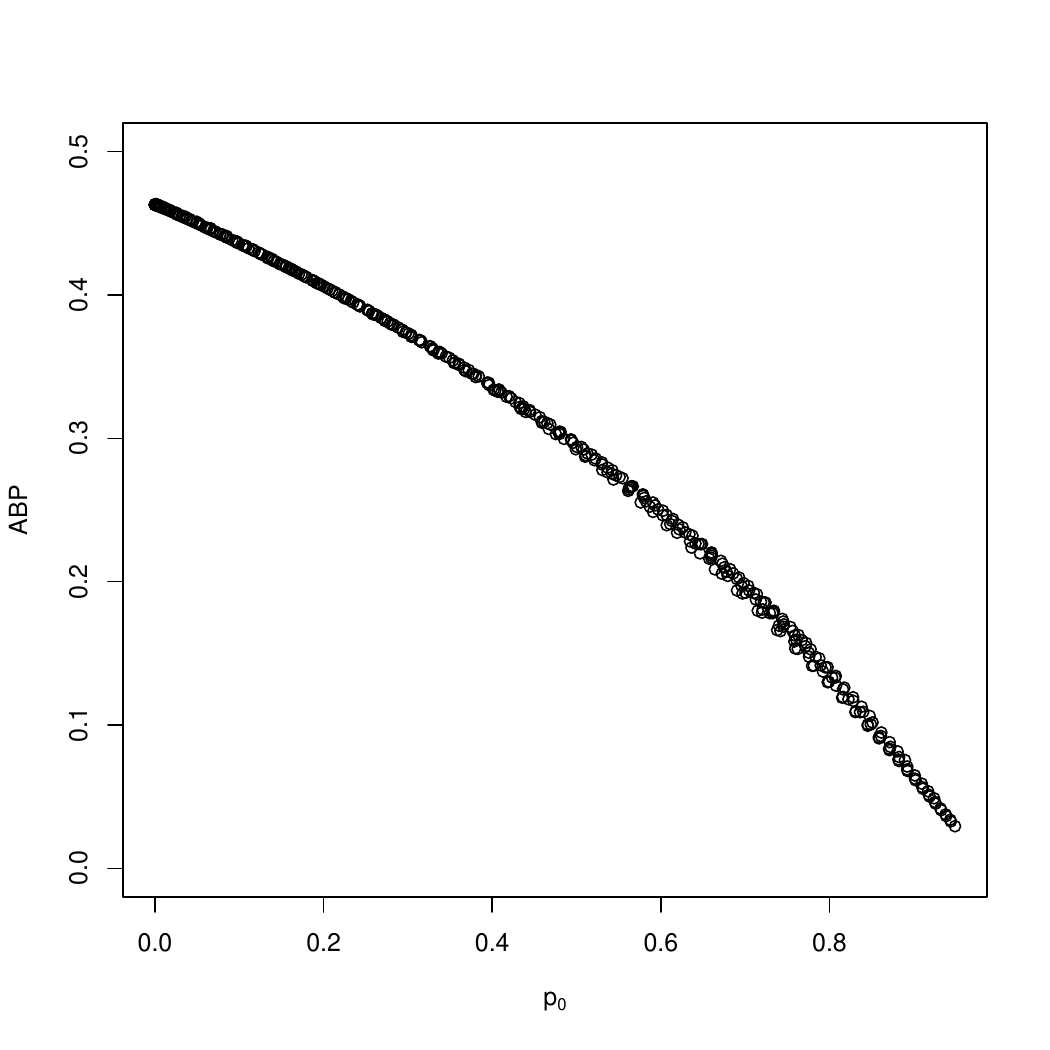} &
				\includegraphics[width=0.48\textwidth]{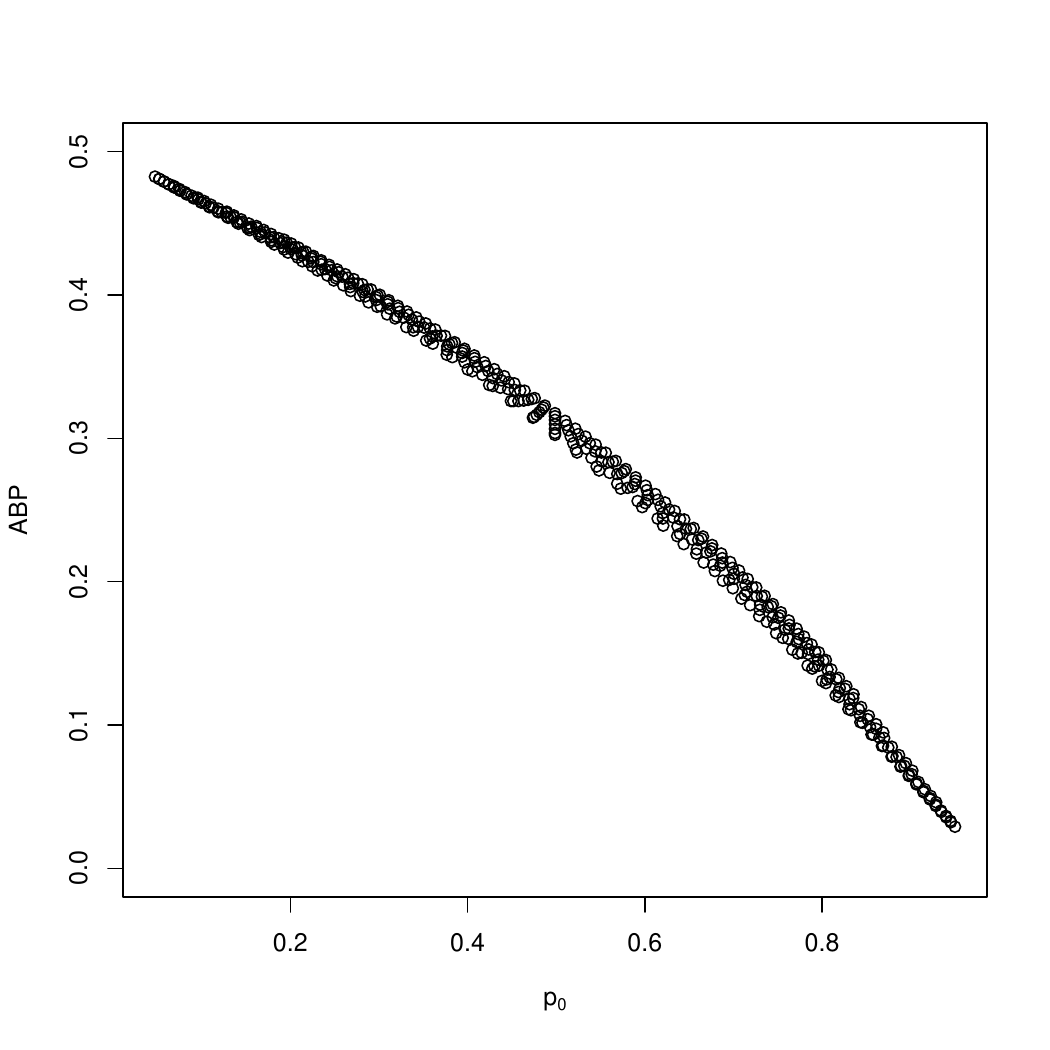}\\
			\end{tabular}
			\caption{Lower bound for ABP of MNQPIT-estimator for Poisson regression (left) and for logistic regression (right) when the vector of covariates have a standard multivariate normal distribution with an intercept, plotted  vs $p_0=\mathbb P(y=0)$.}
			\label{fig:bps}
		\end{figure}
	\end{center}
	\section{Implementation}\label{sec:implementation}
	In our simulations and examples we used an approximation to the optimal $\rho$-function proposed in \cite{yohai1997optimal}, modified as proposed in \cite{konis2021optimal}. See \cite{maronna2019robust} for details.
	In the case of linear models with normal errors, it is known that M-estimators defined using $\rho$-functions in the   \cite{yohai1997optimal} family, have asymptotic minimax bias for different levels of contamination.  We chose  a $\rho$ - function in this family calibrated for 95$\%$ efficiency under the normal model.

	We use a polynomial approximation, similar to the one described in  \cite{martin2023polynomial}. 
	Since our estimator requires a third derivative, we had to ask more conditions of our polynomial and after some trial and error we found that a polynomial of grade 16 gives a good approximation with three derivatives  to the  Konis and Martin $\rho$ - function with $95\%$ of efficiency. The chosen polynomial has the form $q(x)=\sum_{i=1}^8 a_i x^{2i}$. {The values of the coefficients, rounded to 6 decimal digits are $a_1=0.178663$, $a_2=-0.100082$, $a_3=0.096699$, $a_4=-0.043349$, $a_5=0.010057$, $a_6=-0.001250$, $a_7= 0.000079$ and $a_8=-0.000002$
	}. 
	{The $\rho-$function is then defined for each value of $c>0$, as
		\begin{equation*}
			\rho(x) = \left\{ \begin{array}{cc} q(x/c) &\text{ if } |x| \leq c\\
				1 &\text{ if } |x| > c \end{array}\right.
		\end{equation*}
	}
	Polynomials of smaller degree do not give satisfactory approximations to our $\rho$-function, while the chosen polynomial's graph is very similar to the optimal $\rho$'s. See Figure \ref{fig:rhoapprox}. Note that this function satisfies Assumptions \ref{as:rhoeven} to \ref{as:rhocont}.  The calibration constant $c$ was chosen to ensure an efficiency of $0.95$ in the case of the normal linear model. See \cite{maronna2019robust}.
	
	The method was implemented in R (\cite{R2023}).
	The minimum that defines the function $m$ and the minimum that defines MNQPIT and WMNQPIT-estimators were computed using the R function {\tt optim} with method  {\tt "L-BFGS-B"}. The initial estimate for Poisson regression was computed by a subsampling algorithm similar to the one used in \cite{valdora2014robust} and implemented in R package {\tt robustbase} (\cite{maechler2024robustbase}). For logistic regression, the initial estimate was computed by the weighted maximum likelihood method, using the function {\tt logregWML} from package {\tt RobStatTM}; see \cite{yohai2023robstattm}. 
	
	For computing WMNQPIT-estimators, we used the weights computed by the function {\tt logregWML} from package {\tt RobStatTM}. 
	\begin{center}
		\begin{figure}[H]
			\includegraphics[width=\linewidth]{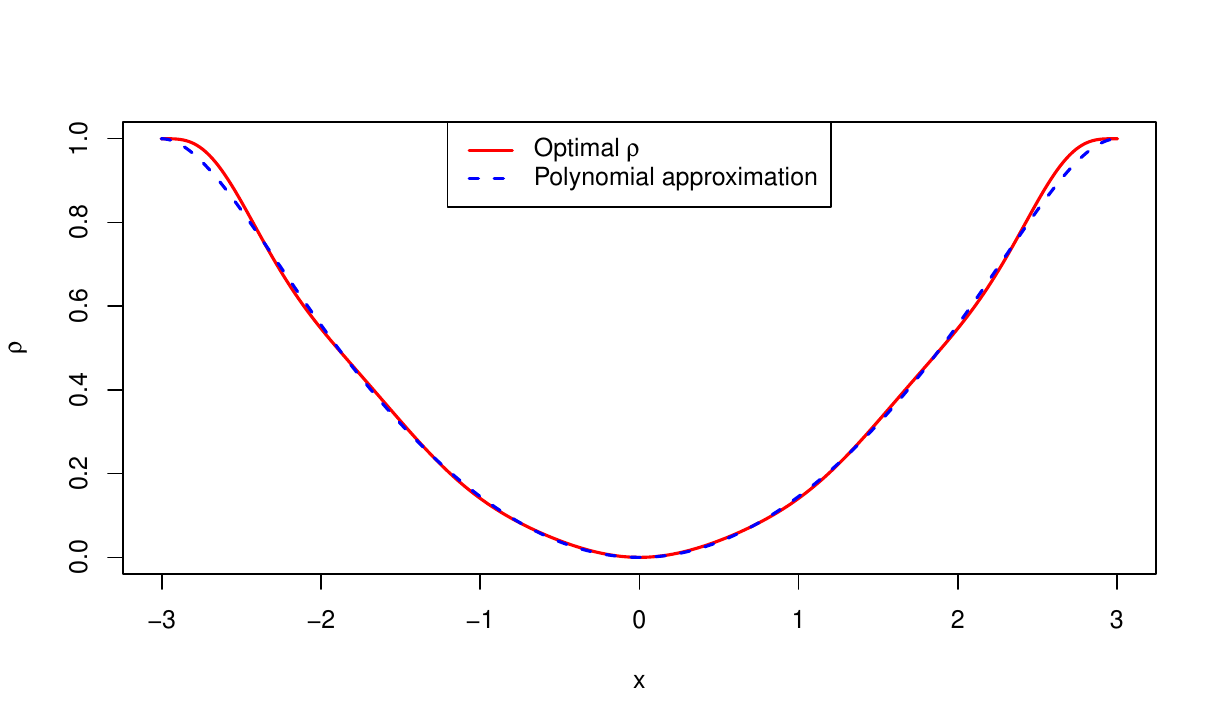} 
			\caption{Optimal $\rho$ and its polynomial approximation.}
			\label{fig:rhoapprox}
		\end{figure}
	\end{center}
		%
	\section{Monte Carlo Study}\label{sec:montacarlo}
	
	We performed a Monte Carlo study for Poisson and logistic regression to compare the performance of the proposed MNQPIT-estimators to classical methods and to some of the existing robust proposals with and without contamination with outliers. 
	
	In each case, we generated $N=1000$ samples of size $n=100$ following the corresponding GLM with parameter $\boldsymbol{\beta}_0$ and we computed several estimators for comparison. 
	To evaluate their performance we computed the empirical mean squared error (MSE) for each estimator $\fhat{\boldsymbol{\beta}}$ as
	\begin{equation}
		\operatorname{MSE}\left(\fhat{\boldsymbol{\beta}}\right)=\frac{1}{N} \sum_{i=1}^N ||\fhat{\boldsymbol{\beta}}_i - \boldsymbol\beta_0||_2^2,
		\label{eq:MSEdef}\end{equation}
	where $\fhat{\boldsymbol{\beta}}_i$ is the result of the estimator $\fhat{\boldsymbol{\beta}}$ in replication $i$ and $||.||_2$ denotes the euclidean norm.
	
	For clean samples, we computed the efficiency of the estimators with respect to ML by dividing their MSE by that of the ML-estimator.
	
	\subsection{Poisson regression}
	The covariate vector was chosen as $(1,\mathbf x)$, where $\mathbf{x}$ is a 5-dimensional standard normal ($ N_5(0,I)$).
	We considered three  GLM models, generated with three different values of $\boldsymbol{\beta}_0$: 
	$\boldsymbol{\beta}_{0,1}=(2, 1, 0, 0, 0, 0)$,
	$\boldsymbol{\beta}_{0,2}=(1.7, 1/3, 0, 0, 0, 0)$ and 
	$\boldsymbol{\beta}_{0,3}=(1.5, 0.1, 0.1, 0.1, 0.1, 0.1)$.
	For each sample we computed the maximum likelihhod (ML) estimator, the MT-estimator proposed in \cite{valdora2014robust}, the HQL-estimator proposed in \cite{cantoni2001robust} and the MNQPIT-estimator.
	All the estimators were computed in R and were calibrated to get approximately $95\%$ efficiency in the simulations.   ML-estimators were computed using the function {\tt glm}, MT and HQL estimators were computed using the function {\tt glmrob}, from package {\tt robustbase} (\cite{maechler2024robustbase}), with the options {\tt method = MT} and {\tt method = Mqle}, respectively. The code used to compute MNQPIT-estimators can be found in the Supplement to this paper.
	
	To evaluate the resistance to outliers of the estimators, we contaminated the samples by replacing $10\%$ of the observations with outliers of the form $(1, \mathbf x_0, y_0)$, with $\mathbf x_{0}=(3,0,0,0,0)$ and several values of $y_0$. The values of $y_0$ were chosen so as to identify the maximum MSE for the MNQPIT-estimator. 
	This simulation setting is similar to the one studied in  \cite{valdora2014robust}. 
	
	In all simulation settings we computed the MSE of all the estimators, as defined in \eqref{eq:MSEdef}.
	For clean samples we also computed the efficiency with respect to ML, as the quotient of the MSE of each estimator and the MSE of ML; see
	Table \ref{table:effpoisson}. Figures \ref{fig:simusconconta1} to \ref{fig:simusconconta3} give the MSE of the estimators under contamination, as a function of $y_0$.
	\begin{table}[H]
		\centering
		\begin{tabular}{rrrr}
			\hline
			& MT & HQL&MNQPIT \\ 
			\hline
			$\boldsymbol{\beta}_{0,1}$ & 0.96 & 0.94 & 0.93 \\
			$\boldsymbol{\beta}_{0,2}$ & 0.97 & 0.94 & 0.94\\
			$\boldsymbol{\beta}_{0,3}$ & 0.96 & 0.95 & 0.95 \\ 
			\hline
		\end{tabular}\caption{Efficiencies with respect to ML in uncontaminated samples from the Poisson regression model.}\label{table:effpoisson}	
	\end{table}
	\begin{center}
		\begin{figure}[H]\begin{center}
				\includegraphics[width=0.70\linewidth]{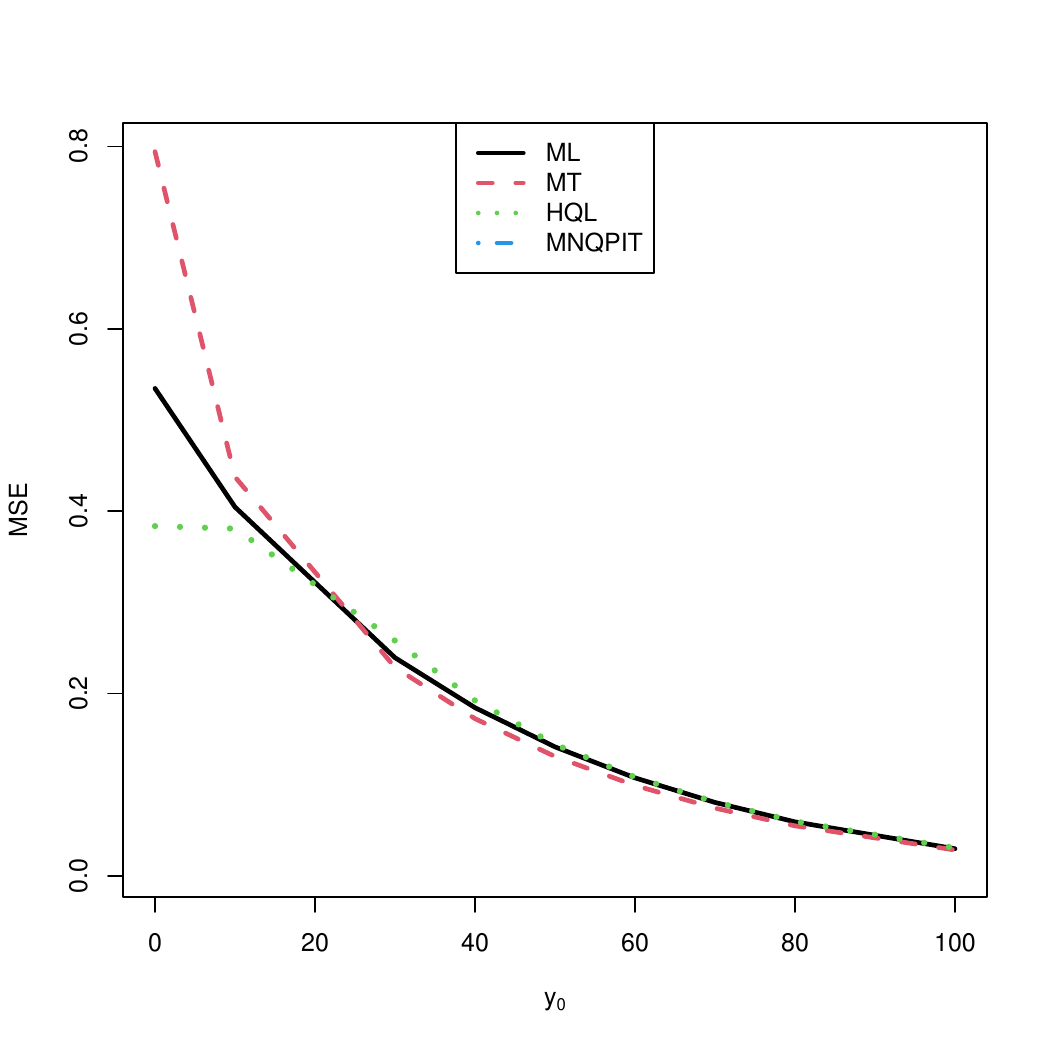}\\
				\caption{MSE for contaminated samples for $\boldsymbol{\beta}_{0}=\boldsymbol{\beta}_{0,1}$.}
				\label{fig:simusconconta1}
			\end{center}
		\end{figure}
	\end{center}				
	
	\begin{center}
		\begin{figure}
			\begin{center}
				\includegraphics[width=0.70\linewidth]{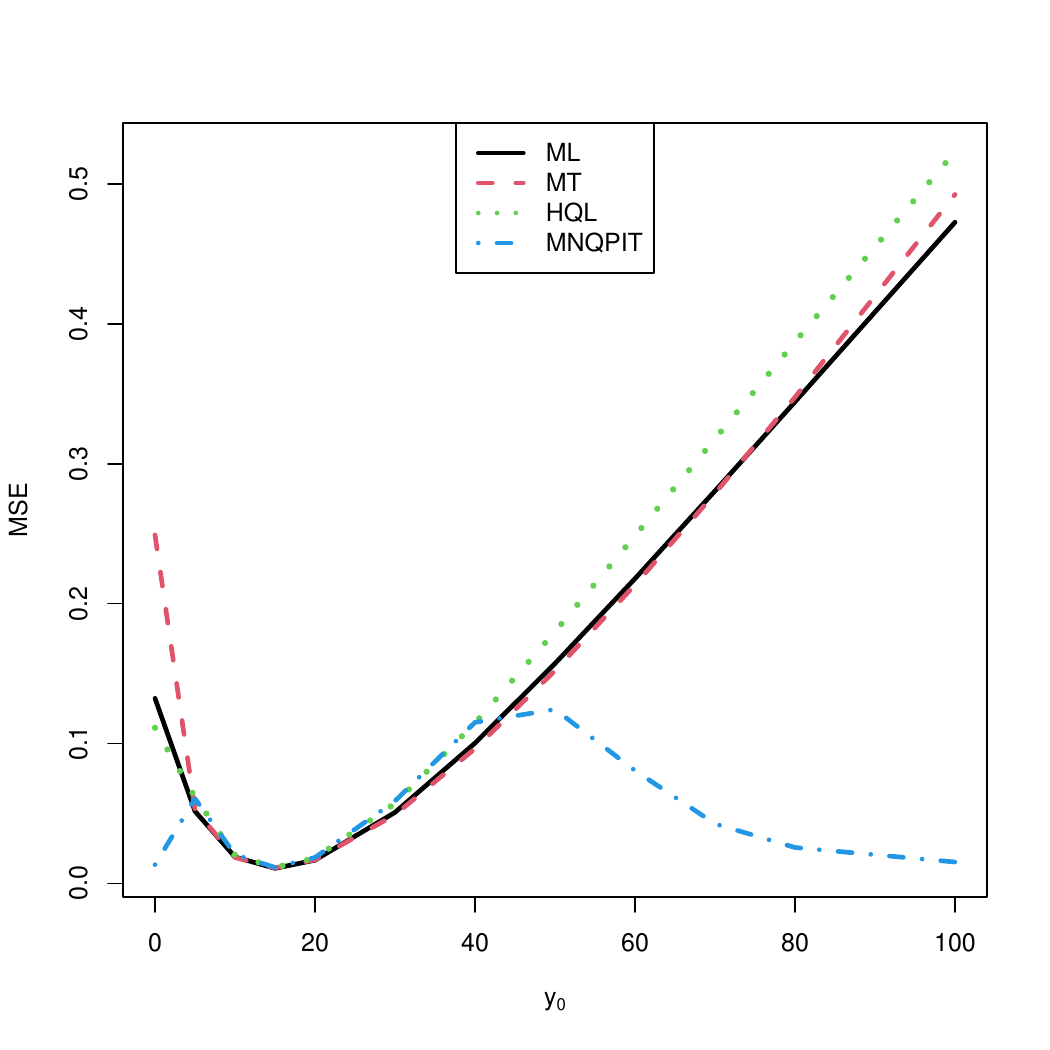}
				\caption{MSE for contaminated samples for 		$\boldsymbol{\beta}_{0}=\boldsymbol{\beta}_{0,2}$ }
			\end{center}
			\label{fig:simusconconta2} 
		\end{figure}		
	\end{center}
	
	\begin{center}		
		\begin{figure}
			\begin{center}
				\includegraphics[width=0.70\linewidth]{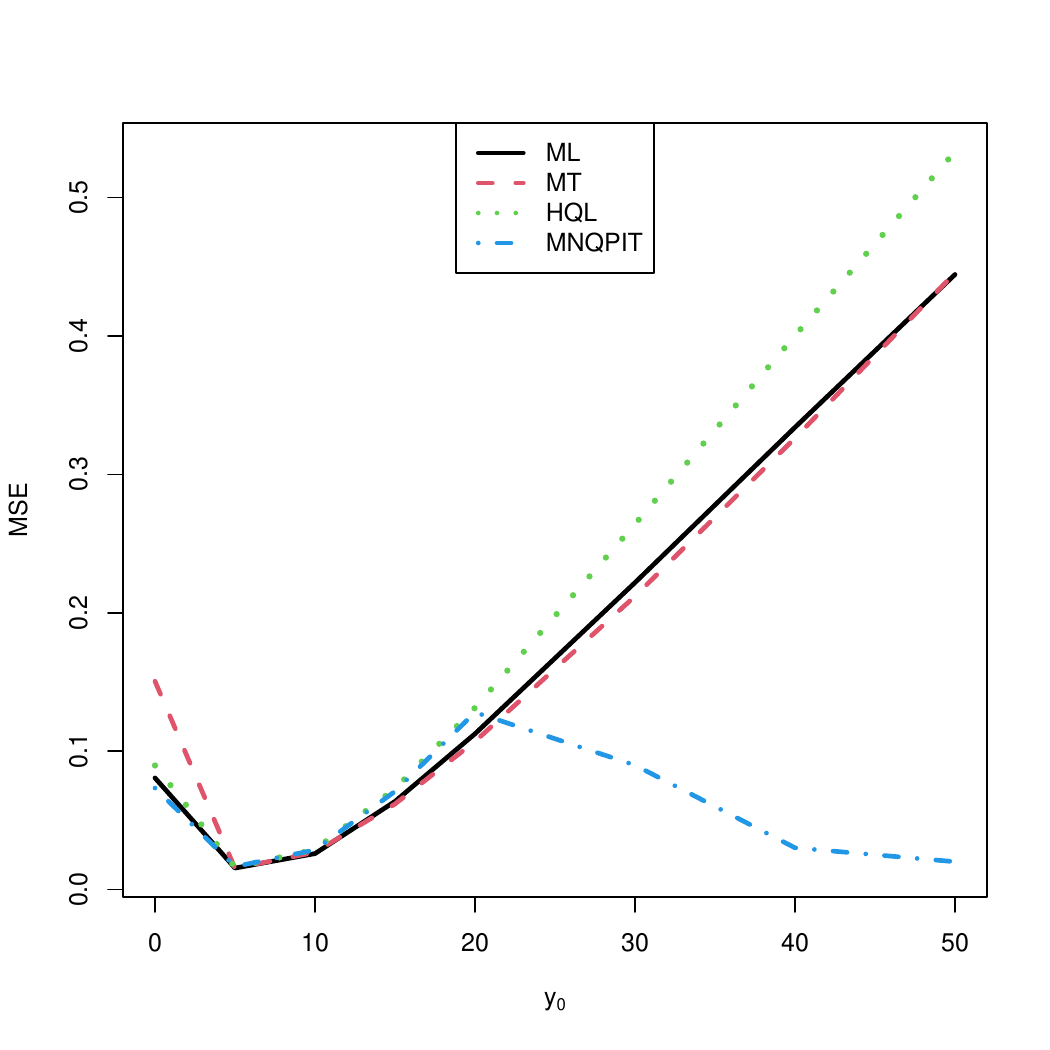}\\
			\caption{MSE for contaminated samples for 		$\boldsymbol{\beta}_{0}=\boldsymbol{\beta}_{0,3}$ } 	\label{fig:simusconconta3}
		\end{center}
	\end{figure}
\end{center}

Note that the minimum MSE for all estimators is obtained when $y_0$ is nearest to $\mu_0=\mathbb{E}\left(y|\mathbf{x}_0\right)$. This expected value is approximately $148.41$, when $\boldsymbol{\beta}_0=\boldsymbol{\beta}_{0,1}$, $14.88$, when $\boldsymbol{\beta}_0=\boldsymbol{\beta}_{0,2}$ and $6.04$, when $\boldsymbol{\beta}_0=\boldsymbol{\beta}_{0,3}$. As $y_0$ moves away from $\mu_0$, the MSE of the estimators increases. MNQPIT's MSE increases until it reaches a maximum of $0.11$ for $\boldsymbol{\beta}_{0,1}$, $0.12$ for $\boldsymbol{\beta}_{0,2}$ and $0.13$ for $\boldsymbol{\beta}_{0,3}$. On the other hand, the maximum MSE found for the other estimators 
is much higher.

The difference between the results obtained in this simulation study and that presented in \cite{valdora2014robust} is that MT has been calibrated for a higher efficiency to make it comparable with MNQPIT.

\subsection{Logistic regression}

We generated samples of size $n=100$ following a logistic GLM with 5-dimensional covariates $\mathbf{x}\sim N(\mathbf 0, \mathbf I)$, with an intercept and regression parameter is
$\boldsymbol{\beta}_{0}=(0, 2, 2, 0, 0, 0)$.  
For each sample we computed the maximum likelihhod (ML) estimator, the BY-estimator proposed in \cite{bianco1996robust} with the implementation proposed in \cite{croux2003implementing}, the WBY-estimator proposed in \cite{croux2003implementing}, the WML-estimator, which is simply an ML-estimator computed after penalizing high leverage observations with weights as defined in \eqref{eq:weight}, the MNQPIT-estimator and the WMNQPIT-estimator proposed in this paper.
All the estimators were computed in R. ML-estimators were computed using the function {\tt glm}, WML, BY and WBY estimators were computed using the functions {\tt logregWML}, {\tt BYlogreg} and {\tt WBYlogreg}, respectively, from package {\tt RobStatTM}. Finally, the R code to compute MNQPIT and WMNQPIT-estimators can be found in \texttt{https://github.com/mvaldora/mnqpit-glm-poisson-logistic}.

We first performed a simulation study without outliers. 
The efficiencies of the estimators  with respect to the maximum likelihood estimator (ML) are given in Table \ref{table:efflogMNQPIT}. 
\begin{table}[ht]
	\centering
	\begin{tabular}{rrrrrrrrr}
		\hline
		WML & BY & WBY 
		&MNQPIT &WMNQPIT \\ 
		\hline
		0.93 & 0.93 & 0.87 & 1.11 & 1.03 \\ 
		\hline
	\end{tabular}\caption{Efficiencies with respect to ML for uncontaminated samples from the logistic regression model.
	}
	\label{table:efflogMNQPIT}
\end{table}
MNQPIT and WMNQPIT-estimators were calibrated using the constant necessary to obtain $95\%$ efficiency in the normal GLM. However, the efficiency we obtained is much higher, namely 1.11 for MNQPIT and 1.03 for WMNQPIT.
This means that for these samples, MNQPIT and WMNQPIT perform better than ML when there are no outliers.
We then tried to calibrate BY and WBY estimators to get the same efficiency in the  simulations. However, we could not obtain such a high efficiency for BY or WBY, so we chose calibrating constants for 0.93 and 0.87 efficiencies respectively.

{
	We then contaminated the samples  by adding 10 outliers of the form $(\mathbf{x}_0, y_0)$, where $\mathbf x_0=( z_0,z_0,z_0,z_0,z_0)$ for $z_0 \in \{0.5, 1.5, 2.5, 3.5, 4.5\}$ and $y_0=0$. 
	Note that,  if $\mathbf x_0=(z_0, z_0,z_0,z_0,z_0)$, $\mathbb P(y=1|\mathbf x=\mathbf x_0) = \exp(2z_0 + 2z_0)/(1+\exp(2z_0 + 2z_0))$ takes the values 
	$0.8807971, 0.9975274, 0.9999546, 0.9999992, 1.0000000$. Putting $y_0=0$, makes these added values atypical.
	The larger the value of $z_0$, the higher the leverage of the outlier and the more atypical the value of $y_0$.} 
Figure  \ref{fig:mses_cont_binomial} gives the MSE of all estimates for logistic regression as a function of $z_0$. 
\begin{figure}[H]
	\begin{center}
		\includegraphics[width=0.7\linewidth]{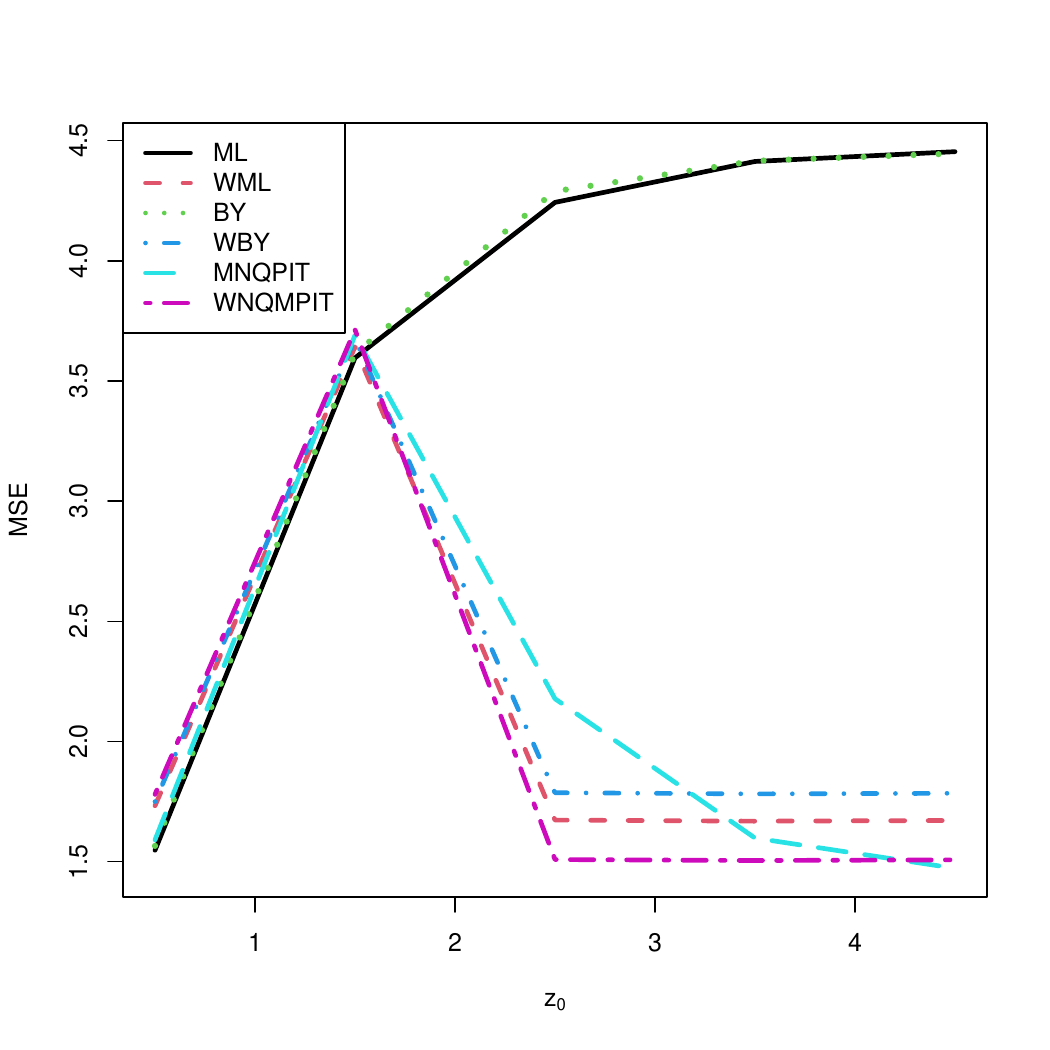}\\
	\end{center}
	\caption{Mean squared errors for contaminated samples in logistic regression} 
\label{fig:mses_cont_binomial}
\end{figure}

Note that MNQPIT and WMNQPIT-estimators have a very high efficiency for these models and at the same time a very high robustness. WBY, BY and WML give a much lower efficiency in clean samples and a slightly lower maximum MSE for contaminated samples. However, for  extreme outliers, the MSE of MNQPIT and WMNQPIT is smaller.

\section{Real data examples}\label{sec:realdata}
\subsection{Example: Crohn disease data}\label{subsec:crohn}
In this section we analyze the Crohn disease data set ({\tt CrohnD}) from R package {\tt{robustbase}}. The response variable is the number of adverse events and there are seven covariates related to height, weight, country and treatment received. The number of observations is $n=117$.

We fit a Poisson GLM by four methods: ML, HQL, MT and MNQPIT and, for each method, we computed the RQ residuals. These residuals are similar to those introduced in \cite{dunn1996randomized} and  implemented in the Dharma R package (\cite{hartig2024dharma}).  For each observation $(\mathbf x_i, y_i)$ and each estimator $\fhat{\boldsymbol\beta}$, the RQ residual is defined as
$r_i(\fhat{\boldsymbol\beta}) =\Phi^{-1} \left(F(y_i, \fhat \theta_i) - u f(y_i, \fhat \theta_i)\right)$, where $\fhat \theta_i=\exp(\mathbf x_i^\top \fhat{\boldsymbol\beta})$ and $u\sim\mathcal U(0,1)$, independent of $(y_i, \mathbf x_i)$.
The computation of the RQ residuals was repeated using 100 diferent seeds. In each repetition, the same seed was used for all the methods.
Let $RQ_{ik}$ ,  $1\leq i \leq 117$,
$1\leq k\leq 100$, be the 
$i$-th  RQ   residual for the $k$-th replication and define ${ARQ}_i$ as  
the mean $RQ_{ik}$ for $k = 1, \dots, 100$. These averaged randomized quantile (ARQ) residuals were used to measure the goodness of the fits. 
For each estimator we computed the quantiles of the absolute values of the ARQ residuals. Figure \ref{fig:quantilesdunnres} gives a comparative plot of the $0.1$ to $0.9$ quantiles, while Table \ref{table:mediansdunnrescrohn} gives their median. 
\begin{figure}
\centering
\includegraphics[width=0.7\linewidth]{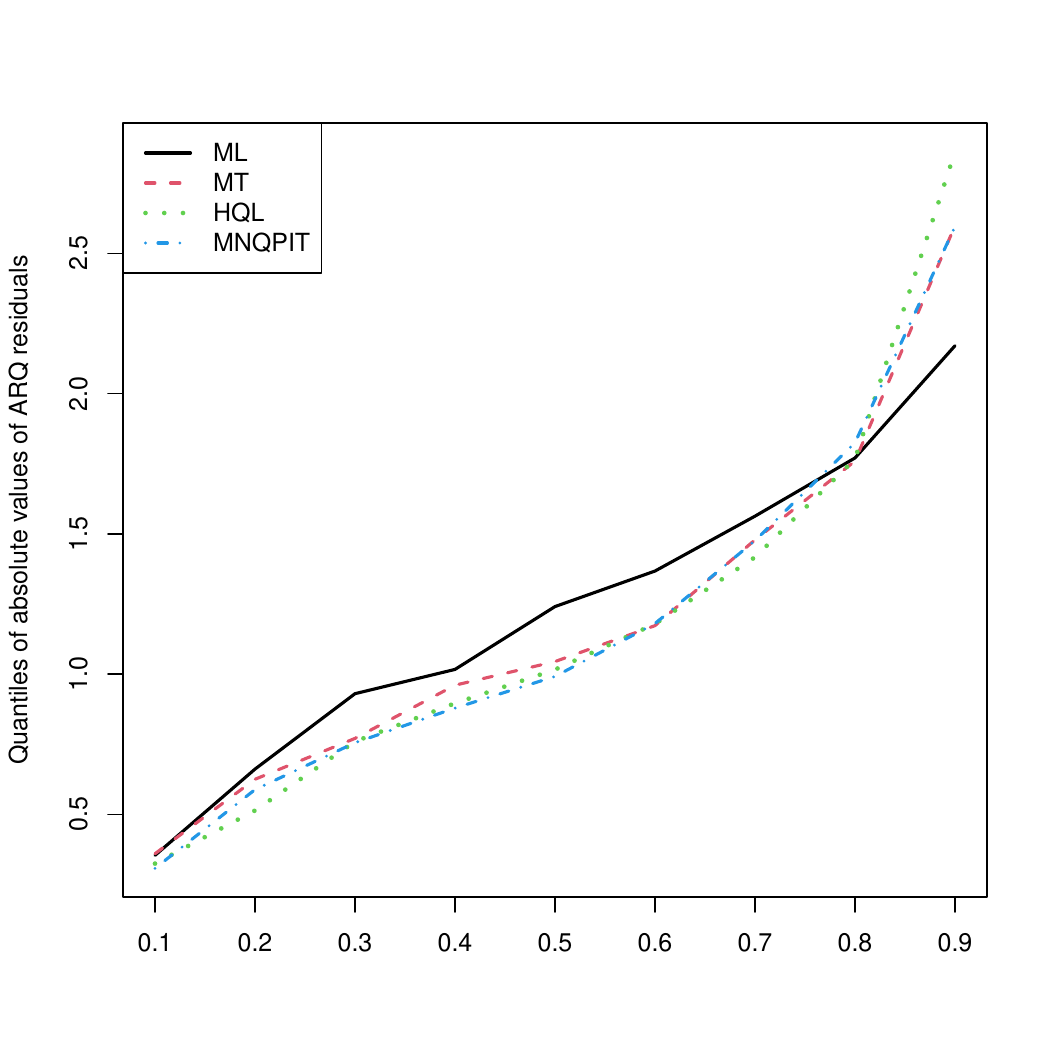}
\caption{Quantiles for the ARQ residuals for Crohn disease data}
\label{fig:quantilesdunnres}
\end{figure}

Figure \ref{fig:quantilesdunnres} does not show a big difference among the absolute values of the ARQ residuals of HQL, MT and MNQPIT methods while Table \ref{table:mediansdunnrescrohn} shows that the median of the absolute values of the ARQ residuals is a little smaller for MNQPIT.

\begin{table}[ht]
\centering
\begin{tabular}{rrrr}
\hline
ML & HQL & MT & MNQPIT \\ 
\hline
1.24 & 1.04 & 1.02 & 0.99 \\ 
\hline
\end{tabular}\caption{Median of the absolute values of the ARQ residuals for Crohn disease data.}\label{table:mediansdunnrescrohn}
\end{table}



\subsection{Leukemia data}
In this section we analyze the Leukemia data set ({\tt{leuk.dat}}) from R package {\tt{RobStatTM}}. See also Example 7.1 in \cite{maronna2019robust}.
The dataset contains information about 33 leukemia patients.
The response y is a binary variable that equals 1 if the patient survives more than 52 weeks and 0 otherwise. The covariates are white blood cell count (wbc) and the presence or absence of a certain morphological characteristic in the white cells (ag).

We fitted a logistic regression model by the six methods considered in the simulation study and, for each estimator, we computed  the ARQ residuals  averaged over 100 diferent seeds, as in Section \ref{subsec:crohn}.

Figure \ref{fig:boxplotresdunnleukemia} gives boxplots of the absolute values of the ARQ residuals. Note that all methods but ML detect a clear outlier, which is observation 15. Table \ref{table:coef_leukemia} gives the estimated coefficients by the six methods considered. The coefficients corresponding to covariate wbc are multiplied by $10^4$ for better visualization. In the last column we added the maximum likelihood estimator computed without observation 15. Note the similarity of WML, MNQPIT and WMNQPIT to ML$^*$. 

\begin{figure}
\centering
\includegraphics[width=0.7\linewidth]{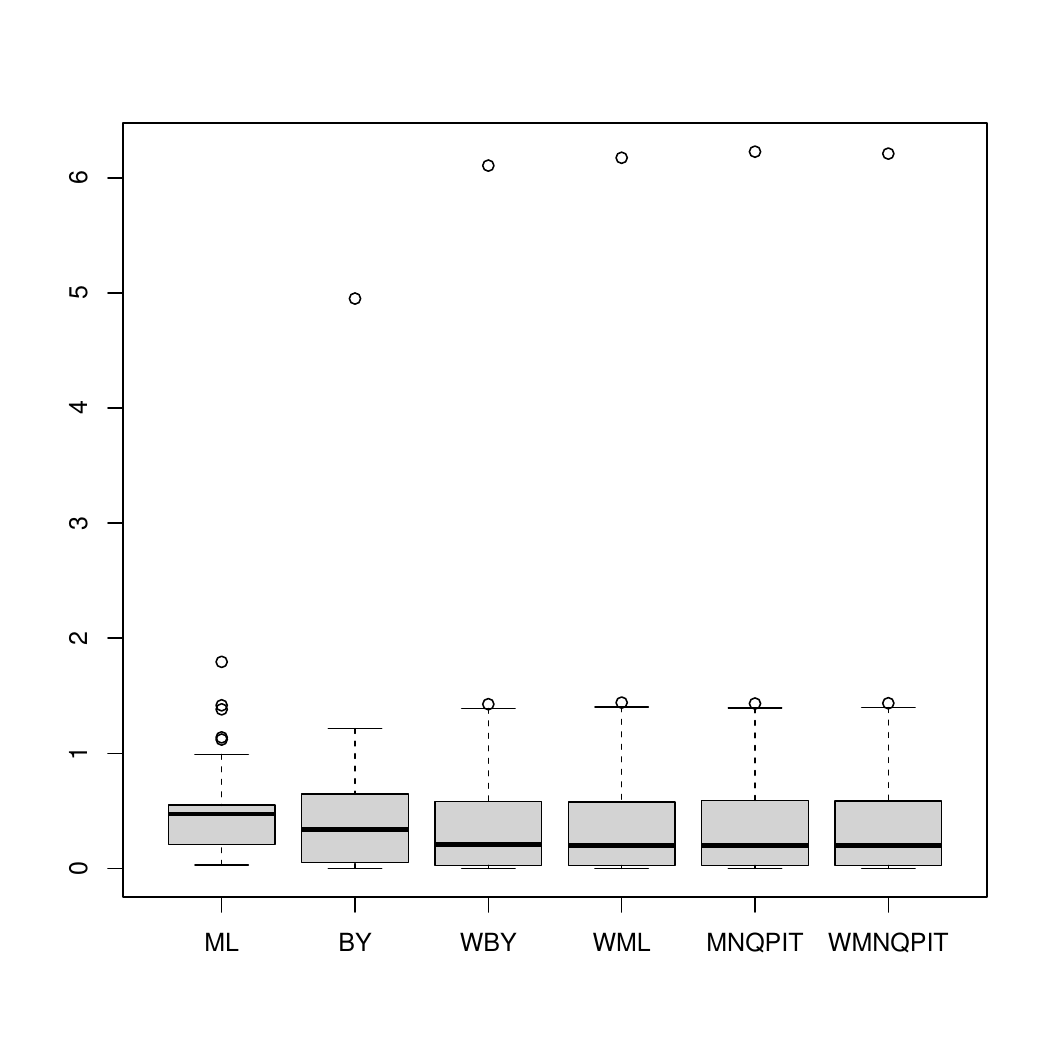}
\caption{Boxplots of absolute values of the ARQ residuals for leukemia data}
\label{fig:boxplotresdunnleukemia}
\end{figure}

\begin{table}[H]
\centering
\begin{tabular}{rrrrrrrr}
\hline
& ML & BY & WBY & WML & MNQPIT & WMNQPIT & ML$^*$ \\ 
\hline
(Intercept) & -1.3073 & 0.1402 & 0.2073 & 0.2116 & 0.2116 & 0.2116 & 0.2119 \\ 
wbc & -0.0318 & -0.1560 & -0.2305 & -0.2354 & -0.2354 & -0.2354 & -0.2354 \\ 
ag & 2.2611 & 1.6951 & 2.5053 & 2.5579 & 2.5579 & 2.5579 & 2.5581 \\ 
\hline
\end{tabular} 
\caption{Estimated coefficents for leukemia data}\label{table:coef_leukemia}
\end{table}
\begin{table}[ht]
\centering
\begin{tabular}{rrrrrrr}
\hline
& ML & BY & WBY & WML & MNQPIT & WMNQPIT \\ 
\hline
&0.4716 & 0.3402 & 0.2064 & 0.1995 & 0.2009 & 0.2004\\
\hline
\end{tabular}
\caption{Median absolute values of the ARQ residuals for leukemia data}\label{table:medianas_leukemia}
\end{table}

Figure  \ref{fig:quantilesdunnresleuk} gives the $0.1$ to $0.9$ quantiles of the absolute values of the ARQ residuals.
This figure shows that the quantiles of all the methods but ML and BY are very similar.

Table \ref{table:medianas_leukemia} gives the median absolute values of the ARQ residuals for all the estimators. In this table we can note a small difference: the smallest median is that of  WML method, followed by WMNQPIT, MNQPIT and WBY methods. 

The residual analysis suggests that WML, MNQPIT and WMNQPIT are less affected by the outlier than BY and ML, while the comparison of the coefficients suggests that WML, MNQPIT and WMNQPIT use the information contained in the non-outlying observations more efficiently than WBY, maintaining approximately the same level of robustness.

\begin{figure}
\centering
\includegraphics[width=0.7\linewidth]{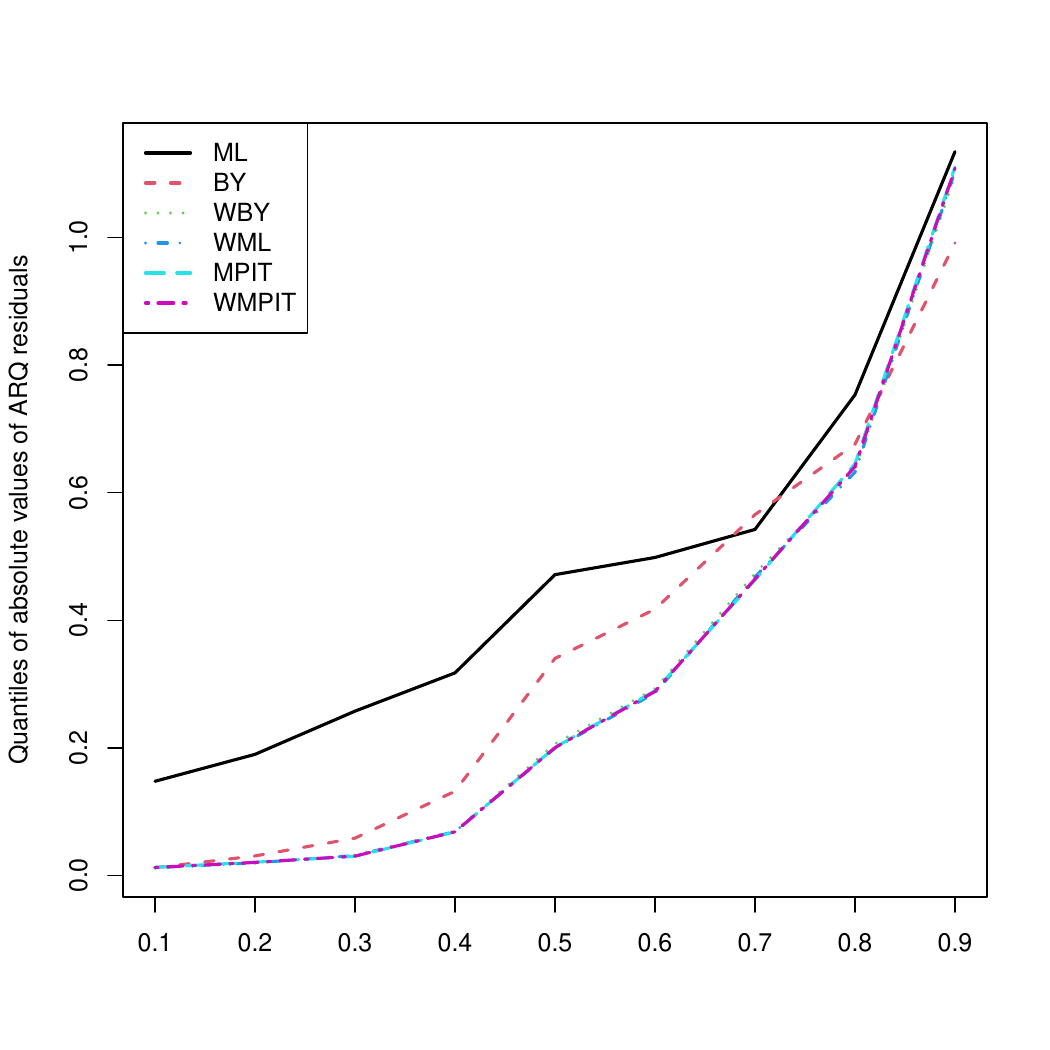}
\caption{Quantiles of the absolute values of the ARQ residuals for leukemia data}
\label{fig:quantilesdunnresleuk}
\end{figure}



\section{Final remarks}

Robust estimators for linear regression models are widely known and used by practitioners. Extending these techniques to GLMs is a natural and necessary step. However, while there are many proposed robust methods for GLMs, there is still no method that is  recognized as the best. 
This slows down their use in real applications.

In this paper we have presented a new method for robust estimation in GLMs based on the NQPIT. By using the NQPIT to "normalize" the residuals, this method builds a bridge between the well-developed theory of robust regression for normal errors and the wider world of GLMs, offering a unified framework for robust estimation across different GLM families.
We have shown the very good performance of this method in the case of binomial and Poisson models. 
Other frequently used GLMs 
are negative binomial and beta GLMs. 
How to extend this method to these models and others that include a dispersion parameter, is not obvious and will be the subject of further work.

\section*{Funding}
This research was partially supported by Grants 20020170100330BA and 20020220200037BA
from the University of Buenos Aires.
\appendix
\section{Appendix}
 This appendix contains the proofs of the  lemmas and theorems stated in Section 4. Section \ref{sec:Scons} contains all the proof related to  the consistency of  WMNQPIT estimators, Section \ref{sec:Sasnorm} the proofs related to its asymptotic normality and Section \ref{sec:SBP} the proofs of two theorems related to its breakdown point.
\subsection{Consistency}\label{sec:Scons}
We start by proving seven lemmas required for the proof of Theorem 1. \begin{lemma}\label{lem:limt} If Assumptions {\upshape{A2}} and {\upshape{A3}} 
	hold, then
	$\lim_{\theta\rightarrow\theta_1} t(s,\theta)=+\infty$ and $\lim_{\theta\rightarrow\theta_2} t(s,\theta)=-\infty$ for all $s$ in the support of $y$ with $0<  F(s, \theta)<1$.
\end{lemma}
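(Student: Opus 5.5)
The plan is to study the argument of $\Phi^{-1}$ in the definition \eqref{eq:deft}, namely
\[
h(s,\theta)=F(s,\theta)-0.5\,p(s,\theta)=0.5\bigl(F(s,\theta)+F(s^-,\theta)\bigr),
\]
where $F(s^-,\theta)=\mathbb P_\theta(y<s)$. We will show that $h(s,\theta)\to 1$ as $\theta\to\theta_1$ and $h(s,\theta)\to 0$ as $\theta\to\theta_2$. Since $\Phi^{-1}$ is continuous and increasing on $(0,1)$, with $\Phi^{-1}(u)\to\pm\infty$ as $u\to 1$ or $u\to 0$, the two limits of $t(s,\theta)$ follow at once.

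\textbf{Limit at $\theta_2$.} Fix $s$ in the support with $0<F(s,\theta)<1$. By \ref{as:limF} the set $\{k:F(k,\theta)<1\}$ does not depend on $\theta$, so $F(s,\theta)<1$ for every $\theta$. Then \ref{as:limF} gives $F(s,\theta)\to 0$ as $\theta\to\theta_2$. Because $0\le F(s^-,\theta)\le F(s,\theta)$, we also get $F(s^-,\theta)\to 0$. Hence $h(s,\theta)\to 0$ and $t(s,\theta)\to-\infty$.

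\textbf{Limit at $\theta_1$.} Again \ref{as:limF} gives $F(s,\theta)\to 1$ as $\theta\to\theta_1$. The main obstacle is the term $F(s^-,\theta)$, which is needed in the discrete case; in the continuous case it equals $F(s,\theta)$ and there is nothing more to prove.

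In the discrete case I will pick a support point $k<s$ with $F(s^-,\theta)\ge F(k,\theta)$. This point satisfies $F(k,\theta)\le F(s,\theta)<1$, so it lies in the set $\{F<1\}$. Applying \ref{as:limF} to $k$ gives $F(k,\theta)\to 1$, and therefore $F(s^-,\theta)\to 1$. For discrete supports the natural choice is the predecessor $k$ of $s$, for which $F(s^-,\theta)=F(k,\theta)$.

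The point I need to handle carefully is whether such a $k$ exists, which is where the hypothesis $0<F(s,\theta)$ is used. If $s$ were the smallest support point, then $F(s^-,\theta)=0$ and $h(s,\theta)=0.5\,F(s,\theta)\to 0.5$, so the claim would fail.

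I will therefore read the hypothesis as guaranteeing that mass lies strictly below $s$ uniformly in $\theta$. I will argue that, since the support is independent of $\theta$ and $s$ has a predecessor $k$ in it, $p(k,\theta)>0$ and $k$ satisfies the hypotheses of \ref{as:limF}. If needed, I will state explicitly that $s$ is not the minimum of the support, using \ref{as:Fmonotentheta} only to justify that limits along monotone $\theta$ are well defined.

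Combining the two parts gives $h(s,\theta)\to 1$ as $\theta\to\theta_1$, hence $t(s,\theta)\to+\infty$, which completes the proof.
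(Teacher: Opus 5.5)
Wherever the support contains a point $k<s$, your argument is the paper's. The paper sandwiches $F(s_1,\theta)\le F(s,\theta)-\tfrac12 p(s,\theta)<F(s,\theta)$ for a support point $s_1<s$ and applies A3 at both ends, which is your bound $F(k,\theta)\le h(s,\theta)\le F(s,\theta)$. Any support point $k<s$ will do, so you need neither a predecessor (which a general countable support may lack) nor A2. Your $\theta_2$ argument is also correct, and it does not use that $s$ is non-minimal.

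Your remark about the minimum of the support is correct. It exposes an error in the statement and in the paper's proof, not a gap in yours. If $s$ is the smallest support point, then $F(s,\theta)=p(s,\theta)\in(0,1)$, so $s$ satisfies the hypotheses. Yet $t(s,\theta)=\Phi^{-1}\bigl(F(s,\theta)/2\bigr)\to\Phi^{-1}(1/2)=0$ as $\theta\to\theta_1$. Examples are $s=0$ in the Poisson model, and in the Bernoulli model $s=0$ is the only admissible point at all.

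The paper handles this case in its last paragraph by asserting that it ``follows from A3'', which is true only for the $\theta_2$ limit. The paper's own proof of Theorem 4 computes $\lim_{\theta\to\theta_1}t_0(0,\theta)=1/2$, i.e.\ $\phi_1^*(0)=0$, in agreement with you.

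So do not present the fix as a ``reading'' of $0<F(s,\theta)$, since that hypothesis does not imply there is mass below $s$. State it as an explicit amendment:
\begin{itemize}
\item the $\theta_2$ limit holds for all admissible $s$;
\item the $\theta_1$ limit holds in the continuous case and, in the discrete case, precisely for $s$ other than the minimum of the support, where the limit is $0$ instead.
\end{itemize}
This is a $\theta$-free condition because A3 makes the support independent of $\theta$.
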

\begin{proof}
	First note that the set $C=\{s\in\mathbb R \text{ such that } 0<  F(s, \theta)<1 \}$ is independent of $\theta$ by Assumption {\upshape{A3}}. 
	If $F( . , \theta)$ is continuous, then the result follows from {\upshape{A3}}. 
	Let $F( . , \theta)$ be a discrete distribution  function. Let $s$ be an element of the support of $y$
	and suppose that there
	exists $s_{1}<s$ also belonging to the support of $y,$ that is, such
	that $p(s_{1},\theta)>0$. Then
	$$F(s_1, \theta)\leq F(s,\theta) - p(s, \theta) \leq F(s,\theta) - 1/2 p(s, \theta) < F(s, \theta).$$
	Assumptions {\upshape{A2}} and {\upshape{A3}} 
	imply that both ends of the above inequality converge to 1 when $\theta\rightarrow\theta_1$ and to $0$ when $\theta\rightarrow\theta_2$. This means that so does the middle term. This implies that  $\lim_{\theta\rightarrow\theta_1} t(s,\theta)=+\infty$ and $\lim_{\theta\rightarrow\theta_2} t(s,\theta)=-\infty$ 
	for all $s$ in the
	support of$_{\ \ }y$ \ such that there exists $s_{1}$ satisfying
	$0<F(s_{1},\theta)<F(s,\theta)<1$.
	
	Suppose now that there is no $s_{1}<s$ such that $F(s_{1},\theta)>0.$ Then,
	in that case, $F(s,\theta)=p(s,\theta)$ and then $t(s,\theta)=\Phi
	^{-1}(F(s,\theta)/2)$ and the lemma follows from
	Assumption {\upshape{A3}}.
\end{proof}
\begin{lemma}\label{teo:wmpitareFC}
	Let $(y, \mathbf{x})$ follow a GLM with parameter $\boldsymbol{\beta}_0$ and link function $g$. Under Assumptions {\upshape{A4}} and {\upshape{A5}}, 
	the WMNQPIT-estimator of $\boldsymbol{\beta}_0$ is Fisher consistent.
\end{lemma}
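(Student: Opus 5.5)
The plan is to reduce Fisher consistency to a pointwise statement, conditional on $\mathbf x$, and then integrate over the distribution of $\mathbf x$. Fisher consistency here means that $\boldsymbol{\beta}_0$ minimizes the population objective
\[
\Lambda(\boldsymbol{\beta})=\mathbb E_{\boldsymbol{\beta}_0}\left[\rho\left(t\left(y,m\left(g^{-1}(\mathbf x^\top\boldsymbol{\beta})\right)\right)\right)w(\mathbf x,\boldsymbol{\mu}_0,\boldsymbol{\Sigma}_0)\right].
\]
The weights $\fhat{\boldsymbol\mu}_n,\fhat{\boldsymbol\Sigma}_n$ are replaced by their limits; the weight depends only on $\mathbf x$ and is nonnegative.

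\textbf{Step 1: pointwise statement.} Conditioning on $\mathbf x$ gives
\[
\Lambda(\boldsymbol{\beta})=\mathbb E\left[w(\mathbf x,\boldsymbol{\mu}_0,\boldsymbol{\Sigma}_0)\,h\left(\theta_{\mathbf x},\,g^{-1}(\mathbf x^\top\boldsymbol{\beta})\right)\right],
\qquad
h(\theta,\eta)=\mathbb E_\theta\left[\rho\left(t(y,m(\eta))\right)\right],
\]
where $\theta_{\mathbf x}=g^{-1}(\mathbf x^\top\boldsymbol{\beta}_0)$, since $y\mid\mathbf x\sim F(\cdot,\theta_{\mathbf x})$. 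The key claim is that for every $\theta\in\Theta$,
\[
h(\theta,\eta)\ge h(\theta,\theta)\quad\text{for all }\eta\in\Theta .
\]

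\textbf{Step 2: proving the claim.} I first read the definition \eqref{eq:mdef} correctly. The convention stated in Section~\ref{sec:LSNQPIT} is that $m$ is chosen so that $\operatorname{argmin}_{\gamma}\mathbb E_\theta\rho(t(y,m(\gamma)))=\theta$. Equivalently, $m(\theta)$ is the unique minimizer over $\gamma'\in\Theta$ of $\mathbb E_\theta\rho(t(y,\gamma'))$, where uniqueness comes from A5. Hence $\mathbb E_\theta\rho(t(y,m(\theta)))\le \mathbb E_\theta\rho(t(y,\gamma'))$ for every $\gamma'$. Taking $\gamma'=m(\eta)$ gives $h(\theta,\eta)\ge h(\theta,\theta)$. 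Because A5 says $m$ is univocally defined and strictly increasing, hence injective, equality forces $m(\eta)=m(\theta)$ and therefore $\eta=\theta$. So the minimum is unique.

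\textbf{Step 3: integrate.} Applying Step 1 with $\theta=\theta_{\mathbf x}$ and $\eta=g^{-1}(\mathbf x^\top\boldsymbol\beta)$, and using $w\ge0$, yields $\Lambda(\boldsymbol{\beta})\ge\Lambda(\boldsymbol{\beta}_0)$ for every $\boldsymbol\beta$. Thus $\boldsymbol\beta_0$ is a minimizer of the population objective. If strictness is wanted, A4 makes $g^{-1}$ strictly increasing, so $\eta\neq\theta_{\mathbf x}$ exactly when $\mathbf x^\top(\boldsymbol\beta-\boldsymbol\beta_0)\neq0$. The inequality is then strict whenever that event has positive probability together with $w>0$, as in A13. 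Only A4 and A5 are needed for the non-strict statement.

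\textbf{Main obstacle.} The only delicate point is Step 2: reconciling the literal definition of $m$ as an argmin over the second argument of $t$ with the composed form $t(y,m(\cdot))$ used in $L_n$. I would state the interpretation explicitly, namely that $m(\theta)$ minimizes $\gamma'\mapsto \mathbb E_\theta\rho(t(y,\gamma'))$. Under that interpretation the conclusion follows immediately. Measurability and finiteness of the expectations are automatic because $0\le\rho\le1$ (A7 and A8) and $w\le 1$ (A12).
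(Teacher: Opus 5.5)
Your proposal is correct and follows the same route as the paper. You condition on $\mathbf x$, show that $\mathbb E_{\boldsymbol\beta_0}\bigl(\rho(t(y,m(g^{-1}(\mathbf x^\top\boldsymbol\beta))))\mid\mathbf x\bigr)$ is minimized at $\boldsymbol\beta=\boldsymbol\beta_0$ for every $\mathbf x$, and integrate against the nonnegative weight $w(\mathbf x,\boldsymbol\mu_0,\boldsymbol\Sigma_0)$; your Step 2 supplies the pointwise minimization that the paper only asserts, and the ``obstacle'' you flag is not one, because the interpretation you adopt is exactly the literal definition \eqref{eq:mdef} (the identity in Section~\ref{sec:LSNQPIT} follows from it rather than being an equivalent convention).
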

\begin{proof}\begin{align*}
		&\mathbb E_{\boldsymbol{\beta}_0}\left(\rho\left(t\left(y,m\left(g^{-1}\left( \mathbf{x}^\top \boldsymbol{\beta}\right)\right)\right)\right) w\left(\mathbf{x}, \boldsymbol{\mu}_0, \boldsymbol{\Sigma}_0\right)\right)\\
		&\quad  = \mathbb E\left[ \mathbb E_{\beta_0}\left(\rho\left(t\left(y, m \left(g^{-1}\left( \mathbf{x}^\top \boldsymbol{\beta}\right)\right)\right)\right) \mid \mathbf{x}\right) w\left(\mathbf{x}, \boldsymbol{\mu}_0, \boldsymbol{\Sigma}_0\right)\right].
	\end{align*}
	Since $\mathbb E_{\boldsymbol{\beta}_0}\left(\rho\left(t\left(y,m\left(g^{-1}\left( \mathbf{x}^\top \boldsymbol{\beta}\right)\right)\right)\right) \mid \mathbf{x}\right)$ is minimized when $\boldsymbol{\beta}=\boldsymbol{\beta}_0$ for all $\mathbf{x}$, then $\mathbb E_{\boldsymbol{\beta}_0}\left(\rho\left(t\left(y,m\left(g^{-1}\left( \mathbf{x}^\top \boldsymbol{\beta}\right)\right)\right)\right) w\left(\mathbf{x}, \boldsymbol{\mu}_0, \boldsymbol{\Sigma}_0\right)\right)$ is also minimized when $\boldsymbol{\beta}=\boldsymbol{\beta}_0$. Therefore WMNQPIT-estimators are Fisher consistent.
\end{proof}
\begin{lemma}\label{lemma:increasing} {Let $F(y,\theta)$ be a c.d.f corresponding to a discrete or continuous random variable} and let $t$ be defined by equation {\upshape(4)}. 
	Then, under Assumption {\upshape{A2}}, 
	$t(y, \theta)$ is increasing in $y$ and descreasing in $\theta$.
\end{lemma}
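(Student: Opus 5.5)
Since $\Phi^{-1}$ is strictly increasing on $(0,1)$, both monotonicity claims reduce to statements about the mid-distribution function
$$h(y,\theta)=F(y,\theta)-\tfrac12\,p(y,\theta),$$
because $t(y,\theta)=\Phi^{-1}(h(y,\theta))$. It therefore suffices to show that $h$ is nondecreasing in $y$ and nonincreasing in $\theta$.

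\emph{Monotonicity in $y$.} In the continuous case $p\equiv 0$, so $h=F$, which is a distribution function and hence nondecreasing. In the discrete case take $y<y'$ and write
$$h(y,\theta)=\mathbb P_\theta(Y<y)+\tfrac12\,\mathbb P_\theta(Y=y).$$
Then
$$h(y',\theta)-h(y,\theta)=\tfrac12\,\mathbb P_\theta(Y=y)+\mathbb P_\theta(y<Y<y')+\tfrac12\,\mathbb P_\theta(Y=y')\geq 0,$$
since every term is a probability. This gives monotonicity in $y$ directly.

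\emph{Monotonicity in $\theta$.} In the continuous case the claim is exactly Assumption A2. In the discrete case I use the representation
$$h(y,\theta)=\tfrac12\bigl(F(y,\theta)+F(y^-,\theta)\bigr),\qquad F(y^-,\theta)=\lim_{s\uparrow y}F(s,\theta).$$
By A2, $F(s,\theta)$ is nonincreasing in $\theta$ for every $s$. Letting $s\uparrow y$ preserves this inequality, so $F(y^-,\theta)$ is also nonincreasing in $\theta$. Hence $h(y,\theta)$ is an average of two functions that are nonincreasing in $\theta$, and is itself nonincreasing.

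Composing with the strictly increasing $\Phi^{-1}$ yields both monotonicity statements for $t$.

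The main subtlety is what "decreasing" means. A2 only gives weak monotonicity pointwise, with strict inequality at some $y$, so the natural conclusion is weak monotonicity; I would read the lemma in this sense. If strictness were needed at a particular $y$, it would follow wherever $F(y,\cdot)$ or $F(y^-,\cdot)$ is strictly decreasing.

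A second point is the boundary values $h\in\{0,1\}$, where $\Phi^{-1}$ equals $\pm\infty$. Monotonicity still holds in the extended reals, so this causes no difficulty.
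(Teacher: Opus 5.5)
Your proof is correct. For monotonicity in $y$ it matches the paper, but for monotonicity in $\theta$ it takes a genuinely different and cleaner route.

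\textbf{Monotonicity in $y$.} The paper's argument is the chain
\[
F(y_1,\theta)-\tfrac12 p(y_1,\theta)\le F(y_1,\theta)+\tfrac12 p(y_2,\theta)\le F(y_2,\theta)-\tfrac12 p(y_2,\theta),
\]
based on $F(y_1,\theta)+p(y_2,\theta)\le F(y_2,\theta)$. This is your difference formula written as inequalities.

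\textbf{Monotonicity in $\theta$.} The paper assumes support $\mathbb N_0$ and splits into two cases.
\begin{itemize}
\item If $p(y,\theta_2)\le p(y,\theta_1)$, it writes $F(y,\cdot)-\tfrac12 p(y,\cdot)=F(y-1,\cdot)+\tfrac12 p(y,\cdot)$ and uses $F(y-1,\theta_2)\le F(y-1,\theta_1)$.
\item If $p(y,\theta_2)>p(y,\theta_1)$, it uses $F(y,\theta_2)\le F(y,\theta_1)$ directly.
\end{itemize}
For any other countable support it only asserts that the proof is ``analogous.''

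Your identity $F(y,\theta)-\tfrac12 p(y,\theta)=\tfrac12\bigl(F(y,\theta)+F(y^-,\theta)\bigr)$ gains three things over this.
\begin{itemize}
\item It removes the case split, since it exhibits the mid-distribution function as an average of two functions that are nonincreasing in $\theta$ by A2.
\item It needs no lattice structure of the support. Replacing $y-1$ by $y^-$ is exactly what the ``analogous'' general case would require.
\item Because $p(y,\theta)=F(y,\theta)-F(y^-,\theta)$ holds for every distribution function, it treats continuous, discrete and even mixed laws at once. The discrete/continuous distinction becomes unnecessary.
\end{itemize}
The paper's case analysis gives nothing beyond this.

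Your reading of ``increasing/decreasing'' in the weak sense is also right, since the paper proves only non-strict inequalities. Your remark about $\Phi^{-1}$ taking the values $\pm\infty$ at the boundary is likewise appropriate.
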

\begin{proof} {The continuous case is trivial. Let us consider the discrete case}. 
	Let $y_1 < y_2$, then 
	$F(y_1, \theta) + p(y_2,\theta) \leq F(y_2,\theta)$. Therefore,
	
	$t(y_1, \theta) = F(y_1, \theta) - 1/2 p(y_1, \theta) \leq F(y_1, \theta) + 1/2 p(y_2, \theta) \leq F(y_1, \theta) + p(y_2, \theta) - 1/2 p(y_2, \theta) \leq
	F(y_2, \theta) - 1/2 p(y_2, \theta) = t(y_2, \theta)$.
	
	Now let $\theta_1 < \theta_2$. To simplify notation, we assume  $F(y,\theta)$ is the distribution of a random variable with rank $\mathbb N_0 = \mathbb N \cup \{0\}$. If its rank is any other countable or finite set, the proof is analogous. By Assumption {\upshape{A2}}, 
	we know that $F(y, \theta_2) \leq F(y, \theta_1)$ and also that $F(y-1, \theta_2) \leq F(y-1, \theta_1)$. We consider two cases:  $p(y, \theta_2) \leq p(y, \theta_1)$ (case 1) and $p(y, \theta_2) > p(y, \theta_1)$ (case 2). In case 1, we get that 
	$$F(y-1, \theta_2)  + \frac{1}{2} p(y, \theta_2) \leq F(y-1, \theta_1)+ \frac{1}{2} p(y, \theta_1),$$ that implies 
	\setcounter{equation}{13}
	\begin{equation}\label{eq:tpmonotona}
		F(y, \theta_2)  - \frac{1}{2} p(y, \theta_2) \leq F(y, \theta_1) - \frac{1}{2} p(y, \theta_1).	
	\end{equation}	
	In case 2 \eqref{eq:tpmonotona}, follows directly from  $F(y, \theta_2) \leq F(y, \theta_1)$.
	The results then follows from the monotonicity of $\phi^{-1}$.
\end{proof}
\begin{lemma}\label{lema:espvart0} 
	Let $y$ be a discrete {or continuous random variable}, let $F(., \theta)$ be its cumulative distribution function and $p(k,\theta)=\mathbb P_\theta(y=k)$.  Let $t_0(y,\theta)=F(y,\theta) - 0.5\, p(y,\theta)$.
	Then  $\mathbb E_\theta (t_0(y,\theta)) =0.5$ and $\mathbb V_\theta (t_0(y, \theta)) =  \frac{1}{12}\left(1 - \mathbb  E_\theta\left(  p(y, \theta) ^2  \right) \right)$.
\end{lemma}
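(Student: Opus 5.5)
The plan is to use the randomized PIT as a bridge. Let $u\sim U(0,1)$ be independent of $y$, and set $t_P(y,u,\theta)=F(y,\theta)-u\,p(y,\theta)$ as in \eqref{eq:PIT}. As recalled in the introduction, $t_P(y,u,\theta)\sim U(0,1)$ under $\mathbb P_\theta$. Hence $\mathbb E_\theta(t_P)=1/2$ and $\mathbb V_\theta(t_P)=1/12$.

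Conditioning on $y$ gives $\mathbb E(t_P\mid y)=F(y,\theta)-0.5\,p(y,\theta)=t_0(y,\theta)$. The tower property then yields $\mathbb E_\theta(t_0(y,\theta))=\mathbb E_\theta(t_P)=0.5$ immediately.

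For the variance I will use the law of total variance:
\[
\mathbb V_\theta(t_P)=\mathbb V_\theta\bigl(\mathbb E(t_P\mid y)\bigr)+\mathbb E_\theta\bigl(\mathbb V(t_P\mid y)\bigr).
\]
Conditionally on $y$, the quantity $t_P$ is an affine function of $u$ with slope $-p(y,\theta)$. Therefore $\mathbb V(t_P\mid y)=p(y,\theta)^2\,\mathbb V(u)=p(y,\theta)^2/12$. Substituting gives
\[
\tfrac{1}{12}=\mathbb V_\theta(t_0(y,\theta))+\tfrac{1}{12}\,\mathbb E_\theta\bigl(p(y,\theta)^2\bigr),
\]
which rearranges to the claimed formula. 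In the continuous case $p\equiv 0$, so $t_0=F(y,\theta)$ is exactly uniform, and the formula reduces to $1/12$, consistent with the general argument.

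The main obstacle is justifying that $t_P$ is uniform, since the whole argument rests on it and the paper only cites it as known. I will verify it directly in the discrete case. Order the support points $k_1<k_2<\dots$. On the event $\{y=k_j\}$, the variable $t_P$ is uniform on the interval $[F(k_{j-1}),F(k_j)]$, which has length $p(k_j)$, and this event has probability $p(k_j)$. These intervals tile $[0,1]$ up to endpoints, so the mixture has density $1$ on $[0,1]$.

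Alternatively, the lemma can be checked by direct computation, avoiding uniformity altogether. Write $F(k_j)=\sum_{i\le j}p_i$ and compute $\sum_j p_j\bigl(F(k_{j-1})+p_j/2\bigr)$ and $\sum_j p_j\bigl(F(k_{j-1})+p_j/2\bigr)^2$. Expanding $(\sum_j p_j)^2=1$ and $(\sum_j p_j)^3=1$ into ordered sums then gives $1/2$ and $1/3-\sum_j p_j^3/12$ respectively. Subtracting $1/4$ gives $\frac{1}{12}\bigl(1-\sum_j p_j^3\bigr)$. This equals the claimed formula because $\mathbb E_\theta\bigl(p(y,\theta)^2\bigr)=\sum_j p_j^3$. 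I would keep this computation only as a cross-check.
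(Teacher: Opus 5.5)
Your proof is correct and is essentially the paper's argument. The paper also writes $t_0(y,\theta)=\mathbb E(t_P(y,u,\theta)\mid y)$ and gets the mean from the tower property. For the second moment it adds and subtracts $u$-terms to get $\mathbb E_\theta(t_0^2)=\mathbb E_\theta(t_P^2)-\tfrac{1}{12}\mathbb E_\theta(p(y,\theta)^2)$, which is your law-of-total-variance step written out by hand. Your direct check that $t_P$ is uniform, which the paper takes as known, is a welcome addition. To cover discrete supports that cannot be listed in increasing order, use the interval $[F(k,\theta)-p(k,\theta),F(k,\theta)]$ for each atom $k$ instead of $[F(k_{j-1}),F(k_j)]$.
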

\begin{proof} {If $y$ is a continuous random variable, then $p(s, \theta)=0$ for all $s$, then $t_0(y, \theta)$ is uniformly distributed on $(0,1)$  and the result follows.}
	
	{
		If $y$ is a discrete random variable,} let $t_P$ be as in equation (1), then
	$t_0(y,\theta)= \mathbb E_\theta ( t_P(y, u, \theta)|y)$ and therefore $\mathbb E_\theta (t_0(y,\theta)) = \mathbb E_\theta \left(\mathbb E ( t_P(y, u, \theta)|y) \right)= \mathbb E_\theta( t_P(y, u, \theta))=0.5$.
	\begin{align*}
		\mathbb E_\theta(t_0(y, \theta)^2)& = \mathbb E_\theta\left( F(y, \theta)^2  -  F(y, \theta) p(y, \theta) +
		0.25 \,p(y, \theta) ^2 \right)\\ \nonumber
		& = \mathbb E_\theta \left( \mathbb E\left( F(y, \theta)^2  -  2 u F(y, \theta) p(y, \theta) + u^2 \,p(y, \theta) ^2  +
		0.25 \,p(y, \theta) ^2 - u^2\,p(y, \theta) ^2 | y \right)\right)  \\ \nonumber
		& =  \mathbb E_\theta(t_P(y, u, \theta)^2) + \mathbb E_\theta \left( \mathbb  E\left( 0.25 \,p(y, \theta) ^2 - u^2\,p(y, \theta) ^2 | y \right)\right)  \\ \nonumber
		& = \frac{1}{3} - \frac{1}{12} \mathbb  E_\theta \left(  p(y, \theta) ^2  \right).
	\end{align*} Then 
	\begin{equation*}
		\mathbb V_\theta(t_0(y, \theta)) =  \frac{1}{12}\left(1 - \mathbb  E_\theta\left(  p(y, \theta) ^2  \right) \right).
	\end{equation*}
\end{proof}
\begin{lemma}\label{lemma:epsilon0} Under Assumptions {\upshape A6}, 
	{\upshape A7} 
	and {\upshape A8} 
	and $\epsilon_0$ as in {\upshape A6}, 
	$\mathbb E_\theta\left(\rho\left( t(y,\theta)\right)\right)<1-\epsilon_0$.
\end{lemma}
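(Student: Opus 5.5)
We will bound $\mathbb E_\theta\left(\rho\left(t(y,\theta)\right)\right)$ by splitting the sample space according to whether the transformed response is small or large in absolute value. With $c$ and $\epsilon_0$ as in Assumption \ref{as:existeepsilon0}, write
\[
\rho\left(t(y,\theta)\right)=\rho\left(t(y,\theta)\right)I\left(|t(y,\theta)|< c\right)+\rho\left(t(y,\theta)\right)I\left(|t(y,\theta)|\geq c\right).
\]
On the first event, Assumptions \ref{as:rhoeven} and \ref{as:rhoincreasing} (evenness and monotonicity on $[0,\infty)$) give $\rho(t(y,\theta))\le\rho(c)$. On the second event we use $\rho\le 1$, which follows from Assumptions \ref{as:limrho} and \ref{as:rhoincreasing}. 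Hence
\[
\mathbb E_\theta\left(\rho\left(t(y,\theta)\right)\right)\le \rho(c)+\mathbb P_\theta\left(|t(y,\theta)|\ge c\right).
\]

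It remains to show that $\mathbb P_\theta\left(|t(y,\theta)|\ge c\right)\le \frac{1}{12(\Phi(c)-0.5)^2}$. Since $\Phi^{-1}$ is strictly increasing and antisymmetric about $1/2$, and $t(y,\theta)=\Phi^{-1}(t_0(y,\theta))$ with $t_0$ as in Lemma \ref{lema:espvart0}, we have the event identity
\[
\left\{|t(y,\theta)|\ge c\right\}=\left\{|t_0(y,\theta)-0.5|\ge \Phi(c)-0.5\right\}.
\]
Lemma \ref{lema:espvart0} gives $\mathbb E_\theta(t_0)=0.5$ and $\mathbb V_\theta(t_0)=\frac{1}{12}\left(1-\mathbb E_\theta(p(y,\theta)^2)\right)\le \frac1{12}$. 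Chebyshev's inequality then yields
\[
\mathbb P_\theta\left(|t_0(y,\theta)-0.5|\ge\Phi(c)-0.5\right)\le \frac{1}{12(\Phi(c)-0.5)^2},
\]
and this holds for both the continuous and the discrete case. Here $c>0$ guarantees $\Phi(c)-0.5>0$.

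Combining the two displays with Assumption \ref{as:existeepsilon0} gives $\mathbb E_\theta\left(\rho\left(t(y,\theta)\right)\right)\le\rho(c)+\frac{1}{12(\Phi(c)-0.5)^2}<1-\epsilon_0$, as required. The only delicate point is the event identity in the second step. It needs strict monotonicity of $\Phi^{-1}$ and the symmetry $\Phi^{-1}(1-u)=-\Phi^{-1}(u)$. It also needs care with the boundary values $t_0\in\{0,1\}$, where $t$ would be $\pm\infty$; these are harmless, since $\rho(\pm\infty)$ can be taken as $1$ and such points are already counted in the large-deviation event. No other obstacle is expected, because the strict inequality comes directly from Assumption \ref{as:existeepsilon0}.
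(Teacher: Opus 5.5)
Your proof is correct and follows the paper's argument essentially step for step. You split according to whether $|t(y,\theta)|$ exceeds $c$, bound the expectation by $\rho(c)$ plus the tail probability, rewrite the tail event as $\{|t_0(y,\theta)-0.5|\ge \Phi(c)-0.5\}$, and finish with Chebyshev, the variance bound $1/12$ from Lemma \ref{lema:espvart0}, and A6. You are also right to invoke the monotonicity assumption A9 explicitly; it is needed both for $\rho(t)\le\rho(c)$ on the small event and for $\rho\le 1$, and the paper's proof uses it tacitly although it cites only A7 and A8.
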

\begin{proof} Let $c$ be as in Assumption A6. Then, by Assumptions A7  and A8,
	\begin{align}\label{eq:indicadoras} \nonumber
		\mathbb E_\theta\left(\rho \left(t(y,\theta)\right)\right)& = \mathbb E_\theta\left(\rho\left( t(y,\theta)\right) I(\left| t(y,\theta)\right|\leq c)\right) +  \mathbb E_\theta\left(\rho\left( t(y,\theta)\right) I(\left| t(y,\theta)\right|>c)\right)\\ 
		&\leq \rho(c)  + \mathbb P\left( t_0(y,\theta) >\Phi(c)  \right) + \mathbb P\left( t_0(y,\theta) <\Phi(-c)  \right).  
	\end{align}
	By Chebyshev's inequality,  we know that
	\begin{eqnarray}\label{eq:cheby}
		\mathbb P_\theta\left( \left| t_0(y,\theta) - 0.5\right| >\Phi(c) -0.5 \right) <  \frac{\mathbb V(t_0(y, \theta))}{(\Phi(c) -0.5)^2}.
	\end{eqnarray}
	Since
	\begin{eqnarray*}
		\mathbb P_\theta\left( \left| t_0(y,\theta) - 0.5\right| >\Phi(c) - 0.5\right) & = \mathbb P_\theta\left( t_0(y,\theta)  >\Phi(c)  \right)  + \mathbb P_\theta\left( t_0(y,\theta)  < 1- \Phi(c)   \right)   \\
		&  = \mathbb P_\theta\left( t_0(y,\theta)  >\Phi(c)  \right)  + \mathbb P_\theta\left( t_0(y,\theta)  <  \Phi(-c)   \right),
	\end{eqnarray*}
	combining  \eqref{eq:indicadoras},  \eqref{eq:cheby},  Lemma \ref{lema:espvart0} and Assumption A6,  we get that
	\begin{eqnarray*}
		\mathbb E_\theta\left(\rho\left( t(y,\theta)\right)\right) \leq \rho(c) + \frac{\mathbb V(t_0(y, \theta))}{(\Phi(c) -0.5)^2}  \leq  \rho(c)  +  \frac{1}{12 (\Phi(c) -0.5)^2}< 1 -\epsilon_0.
	\end{eqnarray*}
\end{proof}
\begin{cor}\label{cor:epsilon0} Under Assumptions A5 to A8,
	there exists $\epsilon_0$ such that 	$$\mathbb E_\theta\left(\rho \left(t(y,m(\theta))\right)\right)<1-\epsilon_0.$$
\end{cor}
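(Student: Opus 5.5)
The corollary should follow from Lemma \ref{lemma:epsilon0} and the definition of $m$ in \eqref{eq:mdef}. By definition, $m(\theta)$ minimizes $\gamma \mapsto \mathbb E_\theta\left(\rho\left(t(y,\gamma)\right)\right)$ over $\Theta$. Assumption A5 guarantees that this minimizer exists and is unique, so $m(\theta)\in\Theta$ is well defined. Comparing the minimum with the value of the objective at the admissible point $\gamma=\theta$ gives
\begin{equation*}
\mathbb E_\theta\left(\rho\left(t(y,m(\theta))\right)\right) \leq \mathbb E_\theta\left(\rho\left(t(y,\theta)\right)\right).
\end{equation*}

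Next I will bound the right-hand side using Lemma \ref{lemma:epsilon0}. That lemma needs A6, A7 and A8, all of which lie in the range A5 to A8. It gives $\mathbb E_\theta\left(\rho\left(t(y,\theta)\right)\right)<1-\epsilon_0$, where $\epsilon_0$ is the constant from Assumption A6. Combining this with the display above yields the strict inequality $\mathbb E_\theta\left(\rho\left(t(y,m(\theta))\right)\right)<1-\epsilon_0$ with the same $\epsilon_0$. This $\epsilon_0$ does not depend on $\theta$, which is the uniformity that the consistency proof will need.

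There is no real obstacle here. The only point to check is that $\theta$ is a feasible competitor in the argmin, which holds because the minimization in \eqref{eq:mdef} runs over all of $\Theta$ and $\theta\in\Theta$. The one place some care is needed is the role of A5. It ensures that the minimum is attained, so $m(\theta)$ is an actual minimizer. Without attainment one would only have an infimum, and the comparison argument would have to be rephrased.
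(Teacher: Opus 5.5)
Your proposal is correct and is essentially the paper's argument. You compare the minimum at $m(\theta)$ with the value at the admissible point $\gamma=\theta$, then apply Lemma~\ref{lemma:epsilon0}, whose bound does not depend on $\theta$. Your non-strict inequality $\mathbb E_\theta\left(\rho\left(t(y,m(\theta))\right)\right)\leq \mathbb E_\theta\left(\rho\left(t(y,\theta)\right)\right)$ is the right one to state; the paper writes it as strict, which fails whenever $m(\theta)=\theta$, and the strictness of the final bound comes from Lemma~\ref{lemma:epsilon0}, as you note.
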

\begin{proof}
	It follows from the fact that	$\mathbb E_\theta\left(\rho \left(t(y,m(\theta))\right)\right)<\mathbb E_\theta\left(\rho \left(t(y,\theta)\right)\right)$ and Lemma \ref{lemma:epsilon0}.
\end{proof}
We will need the following lemma to prove the continuity of the function m.
\begin{lemma}\label{lema:victor}
	Let $F_n$ be a sequence of distribution functions that converges weakly to $F$. Let $a_n$ be a deterministic sequence with $a_n \rightarrow 
	a$. Suppose that $h: \mathbb{R}^2 \rightarrow \mathbb{R}$ is a bounded and continuous function. If $y_n$ is a sequence of random variables such that,  for each $n\in \mathbb N$, $y_n\sim F_n$ and $y$ is a random variable with distribution $F$, then:
	$$
	\mathbb{E}\left(g\left(y_n, a_n\right)\right) \longrightarrow \mathbb{E}\left(g(y, a)\right) .
	$$
\end{lemma}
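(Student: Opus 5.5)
The natural route is through the continuous mapping theorem combined with a Skorokhod representation. Note that the statement writes $g$ in the conclusion where it means $h$; I will prove $\mathbb E(h(y_n,a_n))\to\mathbb E(h(y,a))$.

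First, since $F_n$ converges weakly to $F$, Skorokhod's representation theorem gives a probability space carrying random variables $\tilde y_n\sim F_n$ and $\tilde y\sim F$ with $\tilde y_n\to\tilde y$ almost surely. Concretely, take $U\sim\mathcal U(0,1)$ and set $\tilde y_n=F_n^{-1}(U)$ and $\tilde y=F^{-1}(U)$, using generalized inverses. These converge at every continuity point of $F^{-1}$, hence almost surely. The expectations depend only on the laws, so $\mathbb E(h(y_n,a_n))=\mathbb E(h(\tilde y_n,a_n))$, and similarly for the limit.

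Second, because $a_n\to a$ deterministically, we have $(\tilde y_n,a_n)\to(\tilde y,a)$ almost surely in $\mathbb R^2$. Continuity of $h$ then gives $h(\tilde y_n,a_n)\to h(\tilde y,a)$ almost surely.

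Third, $h$ is bounded by some constant $K$, so the dominated convergence theorem with dominating constant $K$ yields $\mathbb E(h(\tilde y_n,a_n))\to\mathbb E(h(\tilde y,a))$, which is the claim.

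There is an alternative that avoids Skorokhod. Observe that $(y_n,a_n)$ converges in distribution to $(y,a)$ by Slutsky's lemma, since $a_n$ is a degenerate random variable converging to a constant. The portmanteau theorem then applies directly to the bounded continuous $h$. Either route works.

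The only step needing care is justifying the joint convergence, that is, that the $a_n$ coordinate does not spoil weak convergence. Both arguments above handle this because the $a_n$ are deterministic. No uniformity of $h$ in its second argument is needed, since continuity on $\mathbb R^2$ suffices along an almost surely convergent sequence.
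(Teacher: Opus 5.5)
Your proof is correct and follows the same argument as the paper: Skorokhod representation to obtain almost surely convergent copies, continuity of $h$ along $(\tilde y_n,a_n)\to(\tilde y,a)$, and dominated convergence using the boundedness of $h$. You are right to read the $g$ in the conclusion as $h$ (the paper's own proof has the same slip), and the Slutsky--portmanteau alternative you sketch is also valid.
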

\begin{proof}
	By Skorokhod's representation theorem there exists a probability space and random variables $Y_n, Y$ defined on it such that he distribution of $Y_n$ is $F_n$, the distribution of $Y$ is $F$ and $Y_n \rightarrow Y$ almost surely.
	
	Since \(a_n\to a\) and \(g\) is continuous on \(\mathbb{R}^2\), we have \(g(Y_n, a_n)\to g(Y, a)\) almost surely. Moreover \(g\) is bounded. Then by the dominated convergence theorem,
	\[
	\mathbb{E}\big(g(Y_n,a_n)\big)\longrightarrow \mathbb{E}\big(g(Y,a)\big).
	\]
	Note that \(\mathbb{E}(g(Y_n,a_n)) = \int g(y,a_n)\,dF_n(y) = \mathbb{E}(g(y_n,a_n))\) and \(\mathbb{E}(g(Y,a)) = \int g(y,a)\,dF(y)=\mathbb{E}(g(y,a))\). From this we obtain the desired result.
\end{proof}
\begin{lemma}\label{lema:mcont} Under Assumptions A1 to A8 and A10 the function $m$ is continuous.
\end{lemma}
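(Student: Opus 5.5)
We prove continuity of $m$ by a standard argmin-continuity argument. Define $G(\theta,\gamma)=\mathbb E_\theta\left(\rho\left(t(y,\gamma)\right)\right)$. By Assumption A5, $m(\theta)$ is the unique minimizer of $G(\theta,\cdot)$ over $\Theta$. Fix $\theta^*\in\Theta$ and a sequence $\theta_n\to\theta^*$, and set $\gamma_n=m(\theta_n)$. We must show $\gamma_n\to m(\theta^*)$.

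\textbf{Step 1: joint continuity of $G$.} We apply Lemma \ref{lema:victor} with $h(y,\gamma)=\rho(t(y,\gamma))$, which is bounded because $\rho\le 1$ by A8. Weak convergence $F(\cdot,\theta_n)\to F(\cdot,\theta^*)$ follows from A1, since $F(y,\theta)$ is continuous in $\theta$ for every $y$.

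\begin{itemize}
\item \emph{Continuous case.} Here $t(y,\gamma)=\Phi^{-1}(F(y,\gamma))$ is jointly continuous and Lemma \ref{lema:victor} applies directly.
\item \emph{Discrete case.} Continuity in $y$ fails, so we argue on the fixed support instead. By A3 the support is independent of $\theta$, and for each support point $k$, $p(k,\gamma)$ is continuous in $\gamma$ by A1. Write
$$G(\theta,\gamma)=\sum_k p(k,\theta)\,\rho(t(k,\gamma)).$$
If $\gamma_n\to\gamma$ and $\theta_n\to\theta$, each term converges by A1 and A10. Since $\rho\le 1$ and $\sum_k p(k,\theta_n)=1=\sum_k p(k,\theta)$ with termwise convergence, Scheffé's lemma gives convergence of the sum.
\end{itemize}

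Either way, $G(\theta_n,\gamma_n)\to G(\theta,\gamma)$ whenever $(\theta_n,\gamma_n)\to(\theta,\gamma)$ in $\Theta\times\Theta$.

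\textbf{Step 2: the main obstacle is tightness.} We must show that $\gamma_n$ stays in a compact subset of $\Theta$, i.e.\ does not escape to $\theta_1$ or $\theta_2$.

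By Lemma \ref{lem:limt}, as $\gamma\to\theta_1$ we have $t(s,\gamma)\to+\infty$ for every support point $s$ with $0<F(s,\cdot)<1$, and as $\gamma\to\theta_2$ we have $t(s,\gamma)\to-\infty$. Then A8 gives $\rho(t(s,\gamma))\to 1$ for those points.

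Suppose, along a subsequence, $\gamma_n\to\theta_1$. Since the sequence $\theta_n$ converges, the family $\{F(\cdot,\theta_n)\}$ is tight, and the support is fixed. Combining these with the pointwise limit, a Fatou-type argument or the same Scheffé argument gives
$$\liminf_n G(\theta_n,\gamma_n)\ \ge\ 1-\mathbb P_{\theta^*}\big(F(y,\cdot)\in\{0,1\}\big).$$
When the exceptional set is empty (as for Poisson and Bernoulli, where the top point has $F=1$ only in the binomial case), this liminf is $1$. Atoms with $F(s,\gamma)=1$ for all $\gamma$ need separate care; for such an atom, $t(s,\gamma)=\Phi^{-1}(1-p/2)$, and I expect its contribution to be handled similarly using A3's limits.

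On the other hand, minimality of $\gamma_n$ and Lemma \ref{lemma:epsilon0} give
$$G(\theta_n,\gamma_n)\le G(\theta_n,\theta_n)<1-\epsilon_0.$$
This contradicts the liminf bound, and the same argument excludes $\gamma_n\to\theta_2$. Making this liminf bound rigorous, in particular the boundary atoms and the uniformity in $\theta_n$, is the delicate part.

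\textbf{Step 3: conclude.} With $\gamma_n$ in a compact subset of $\Theta$, take any convergent subsequence $\gamma_{n_j}\to\bar\gamma$. For every $\gamma$, minimality gives $G(\theta_{n_j},\gamma_{n_j})\le G(\theta_{n_j},\gamma)$. Passing to the limit with Step 1 yields
$$G(\theta^*,\bar\gamma)\le G(\theta^*,\gamma)\quad\text{for all }\gamma\in\Theta.$$
By uniqueness of the minimizer (A5), $\bar\gamma=m(\theta^*)$. Since every convergent subsequence has the same limit, $m(\theta_n)\to m(\theta^*)$, so $m$ is continuous.
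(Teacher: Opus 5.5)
Your Steps 1 and 3 follow the paper's argument: joint continuity of $h(\theta,\gamma)=\mathbb{E}_\theta\rho(t(y,\gamma))$ plus uniqueness of the minimizer, applied along a subsequence. You are more careful than the paper in two places. First, Lemma \ref{lema:victor} needs $\rho(t(y,\gamma))$ to be continuous in $y$, which fails for discrete $y$; your support-wise Scheff\'e argument repairs this. Second, you notice that the subsequential limit of $m(\theta_n)$ must lie in $\Theta$. The paper takes this for granted when it writes $m(\theta_{n_i})\to\gamma$ and evaluates $h(\theta_0,\gamma)$.

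Your Step 2, however, does not work. It relies on Lemma \ref{lem:limt}, which is false for the smallest support point as $\gamma\to\theta_1$. For Poisson, $t(0,\gamma)=\Phi^{-1}(e^{-\gamma}/2)\to\Phi^{-1}(1/2)=0$, so $\rho(t(0,\gamma))\to 0$. This is exactly why $\phi_1^*=I_{(0,+\infty)}$ in the proof of Theorem \ref{teo:bp}. Symmetrically, the top atom of a binomial has $F\equiv 1$, so the Bernoulli exceptional set is not empty, contrary to your parenthetical. For that atom $t=\Phi^{-1}(1-p/2)\to 0$ as $\gamma\to\theta_2$.

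So when $\gamma_n\to\theta_1$ the correct bound for Poisson is $\liminf_n G(\theta_n,\gamma_n)\ge 1-p(0,\theta^*)$, not $1$. This is below $1-\epsilon_0$ as soon as $p(0,\theta^*)>\epsilon_0$, for instance when $\theta^*$ is near $\theta_1$. Then there is no contradiction with Lemma \ref{lemma:epsilon0}, and the same failure occurs at $\theta_2$ for the binomial.

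The missing observation is that you need none of this. A5 states that $m$ is strictly increasing. If $[\theta^*-\delta,\theta^*+\delta]\subset\Theta$, then for large $n$ you have $m(\theta_n)\in[m(\theta^*-\delta),m(\theta^*+\delta)]$, a compact subinterval of $\Theta$. With this in place of Step 2 your proof is complete; it is also the unstated justification for the paper's subsequence extraction.

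A minor point: in the continuous case, joint continuity of $F(y,\gamma)$ deserves a line of justification. P\'olya's theorem gives it, since pointwise convergence of distribution functions to a continuous limit is uniform. Use the convention $\rho(\pm\infty)=1$ where $F\in\{0,1\}$.
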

\begin{proof}
	Let $h(\theta,\gamma)=\mathbb{E}_{\theta}\left(  \rho\left(  t(y,\gamma
	)\right)  \right)$, then
	$
	m(\theta)=\operatorname{argmin}_{\gamma\in\Theta}h(\theta,\gamma).$
	By Lemma \ref{lema:victor} this function is continuos in
	$(\theta,\gamma)$. 
	Suppose  that $m(\theta)$ is not continuous, then there exists a sequence
	$\theta_{n}$, $n\geq 1,$ such that $\theta_{n}\rightarrow\theta_{0}$ and
	$m(\theta_{n})\nrightarrow m(\theta_{0}).$ Then, there exists a subsequence
	\ $\theta_{n_{i}},$ $i\geq1\ $ such that $\lim_{i\rightarrow\infty}%
	\theta_{n_{i}}=\theta_{0},$ and $\lim_{i\rightarrow\infty}m(\theta_{n_{i}%
	})=\gamma$ $\neq m(\theta_{0}),$ \ Then, by the continuity of $h,$ we have
	\begin{equation}
		\lim_{i\rightarrow\infty}h(\theta_{n_{i}},m(\theta_{n_{i}}))=h(\theta
		_{0},\gamma)>h(\theta_{0},\ m(\theta_{0})).\label{con1}%
	\end{equation}
	Moreover,
	\begin{equation}
		\lim_{i\rightarrow\infty}h(\theta_{n_{i}},m(\theta_{0}))=\ h(\theta
		_{0},\ m(\theta_{0})).\label{con2}%
	\end{equation}
	Then,  (\ref{con1}) and (\ref{con2}) imply that there exists $i_{0}$ such that
	for $i>i_{0\text{ }},$ $h(\theta_{n_{i}},m(\theta_{n_{i}}))>h(\theta_{n_{i}%
	},m(\theta_{0})),$ contradicting the definition of $m(\theta_{n_{i}}).$
\end{proof}
Theorem 1 is proved as a particular case of Theorem 4 in
\cite{valdora2014robust}. Let $(y,\mathbf{x}_{1}),\dots,(y_{n},\mathbf{x}%
_{n})$ be a sample with $y_{i}\in$ $\mathbb{R},\mathbf{x}_{i}\in\mathbb{R}^{p}$, {following a GLM with regression parameter $\boldsymbol\beta_0$} and let $\Lambda:$ $\mathbb{R}\times\mathbb{R}^{p}\times\mathbb{R}%
^{p}\times\mathbb{R}^{q}\rightarrow\mathbb{R}$. This theorem considers a
general estimator of {$\boldsymbol\beta_0$} defined as
\begin{equation}
	\widehat{\boldsymbol{\beta}}_{n}=\operatorname{argmin}_{\boldsymbol{\beta}%
		\in\mathbb{R}^{p}}\sum_{i=1}^{n}\Lambda(y_{i},\mathbf{x}_{i},\boldsymbol{\beta
	},\boldsymbol{\eta}_{n}), \label{eq: def.estim.gral}
\end{equation}
where $\boldsymbol{\eta}$ is a nuisance parameter with values in
$\mathbb{R}^{q}$ and $\widehat{\boldsymbol{\eta}}_{n}$ is a sequence of
estimators of $\boldsymbol{\eta}$. Theorem 4 in
\cite{valdora2014robust}  states the strong consistency of
$\widehat{\boldsymbol{\beta}}_{n}$ to the value $\boldsymbol{\beta}_{0}$
assuming that the following properties hold:
\begin{enumerate}
	[label=\textbf{P\arabic*}, start=0]
	\item \label{prop:P0} The sequence of estimators $\widehat{\boldsymbol{\eta}%
	}_{n}$ converges almost surely to $\boldsymbol{\eta}_{0}.$
	\item \label{prop:P1} The function $\Lambda$ is continuous and bounded and
	there exists a function $\vartheta(\mathbf{x,\theta)}:$ $\mathbb{R}^{p}%
	\times\mathbb{R}^{q}\rightarrow\mathbb{R}$ and a constant $C$ such that
	\[
	\left\vert \Lambda(y,\mathbf{x},\boldsymbol{\beta},\boldsymbol{\eta}%
	_{2})-\Lambda(y,\mathbf{x},\boldsymbol{\beta},\boldsymbol{\eta}_{1}%
	)\right\vert \leq C\left\vert \vartheta(\mathbf{x},\boldsymbol{\eta}%
	_{2})-\vartheta(\mathbf{x},\boldsymbol{\eta}_{2})\right\vert
	\]
	for all $y,\mathbf{x},\boldsymbol{\beta},\boldsymbol{\eta}_{1}$ and
	$\boldsymbol{\eta}_{2}$. Moreover $\boldsymbol{\eta}_{n}\rightarrow
	\boldsymbol{\eta}_{0}$ implies
	\[
	\sup_{\mathbf{x}}\left\vert \vartheta(\mathbf{x},\boldsymbol{\eta}%
	_{n})-\vartheta(\mathbf{x},\boldsymbol{\eta}_{0})\right\vert \rightarrow0.
	\]
	\item \label{prop:P2} The value $\boldsymbol{\beta}_{0}$ satisfies
	\begin{equation}
		\mathbb{E}_{\boldsymbol{\beta}_{0}}(\Lambda(y,\mathbf{x},\boldsymbol{\beta
		}_{0},\boldsymbol{\eta}_{0}))<\mathbb{E}_{\boldsymbol{\beta}_{0}}%
		(\Lambda(y,\mathbf{x},\boldsymbol{\beta},\boldsymbol{\eta}_{0}))
	\end{equation}
	for all $\boldsymbol{\beta}\neq\boldsymbol{\beta}_{0}$.
	\item \label{prop:P3}
	There exists a function $\Lambda^{\ast}(y, \mathbf{x}, j)$, $j = -1, 0, 1$
	such that $\lim_{\gamma\rightarrow\infty}\Lambda(y, \mathbf{x}, \gamma
	\mathbf{t}, \boldsymbol{\eta}_{0})=\Lambda^{\ast}(y, \mathbf{x},
	\text{sign}(\mathbf{t}^{\prime}\mathbf{x}), \boldsymbol{\eta}_{0})$ and, if
	$\mathbf{t}^{\prime}\mathbf{x}$ $\neq0$, there exists a neighborhood of
	$\mathbf{t}$ where this convergence is uniform. Moreover
	\begin{equation}
		\tau=\inf_{||\mathbf{t||=1}}\left[  \mathbb{E}_{\boldsymbol{\beta}_{0}
		}(\boldsymbol{\Lambda}^{\ast}(y, \mathbf{x}, \text{sign}(\mathbf{t}^{\prime
		}\mathbf{x}),\boldsymbol{\eta}_{0})- \mathbb{E}_{\boldsymbol{\beta}_{0}
		}(\Lambda(y,\mathbf{x}, \boldsymbol{\beta}_{0}, \boldsymbol{\eta}_{0})\right]
		>0.\label{eq:tau}%
	\end{equation}
	\item \label{prop:P4} $P(\mathbf{t}^{\prime}\mathbf{x}=0)<\tau/M$ for all
	$\mathbf{t}$ with $\Vert\mathbf{t}\Vert=1$ where $M=\sup_{y, \mathbf{x}, \boldsymbol{\beta}}\Lambda(y, \mathbf{x}, \boldsymbol{\beta}, \boldsymbol{\eta
	}_{0})$, and $\tau$ is defined in \eqref{eq:tau}.
\end{enumerate}
\begin{proof}
	[Proof of Theorem 1] 
	Note that the MNQPIT estimator is given by (\ref{eq: def.estim.gral} ) with
	$\boldsymbol{\eta}=\left(  \boldsymbol{\mu},\boldsymbol{\Sigma}\right)  $,
	\[
	\Lambda\left(  y,\mathbf{x},\boldsymbol{\beta},\boldsymbol{\mu}%
	,\boldsymbol{\Sigma}\right)  =\rho\left(  t(y,m\left(  g^{-1}\left(
	\mathbf{x}^{\top}\boldsymbol{\beta}\right)  \right)  \right)  w\left(
	\mathbf{x},\boldsymbol{\mu},\boldsymbol{\Sigma}\right)
	\]
	and $\vartheta(\mathbf{x},\boldsymbol{\eta})=w(\mathbf{x},\boldsymbol{\mu
	},\boldsymbol{\Sigma})$. Then it will be enough to show properties \ref{prop:P0}-\ref{prop:P4}. {Note that, by Assumptions A8 and A12, $M=1$}.
	
	\ref{prop:P0} follows from A11 and \ref{prop:P1} follows from
	Assumption A12 and Lemma 4 in \cite{valdora2014robust}. To prove
	\ref{prop:P2}, first recall that, by A5, if $y\sim
	F(.,\theta_{0})$ and $\theta\neq\theta_{0}$,
	\begin{equation}
		\mathbb{E}_{\theta_{0}}\left(  \rho(t(y,m(\theta)))\right)  -\mathbb{E}%
		_{\theta_{0}}\left(  \rho(t(y,m(\theta_{0})))\right)  >0. \label{eq:diffD}%
	\end{equation}
	Now take $\boldsymbol{\beta}\neq\boldsymbol{\beta}_{0}$ and let $V=\left\{
	\mathbf{x}:\mathbf{x}^{\top}\left(  \boldsymbol{\beta}-\boldsymbol{\beta}%
	_{0}\right)  \neq\mathbf{0}\right\}  \cap\left\{  \mathbf{x}:w\left(
	\mathbf{x},\boldsymbol{\mu}_{0},\boldsymbol{\Sigma}_{0}\right)  >0\right\}  $,
	then by A13 and \eqref{eq:diffD}, $\mathbb{P}(V)>0$ and then, for
	$\mathbf{x}\in V$,
	\begin{equation}
		\mathbb{E}_{\boldsymbol{\beta}_{0}}\left(  \Lambda\left(  y,\mathbf{x}%
		,\boldsymbol{\beta},\boldsymbol{\mu}_{0},\boldsymbol{\Sigma}_{0}\right)
		-\Lambda\left(  y,\mathbf{x},\boldsymbol{\beta}_{0},\boldsymbol{\mu}%
		_{0},\boldsymbol{\Sigma}_{0}\right)  \mid\mathbf{x}\right)  >0.
		\label{eq:difDw}%
	\end{equation}
	From \eqref{eq:difDw},
	\[
	\begin{aligned}
		& \mathbb E_{\boldsymbol{\beta}_0}\left(\Lambda\left(y, \mathbf{x}, \boldsymbol{\beta}, \boldsymbol{\mu}_0, \boldsymbol{\Sigma}_0\right)\right)- \mathbb E_{\boldsymbol{\beta}_0}\left(\Lambda\left(y, \mathbf{x}, \boldsymbol{\beta}_0, \boldsymbol{\mu}_0, \boldsymbol{\Sigma}_0\right)\right) \\
		& \left.\quad \geq   \mathbb E_{
			\boldsymbol{\beta}_0}\left(\Lambda\left(y, \mathbf{x}, \boldsymbol{\beta}, \boldsymbol{\mu}_0, \boldsymbol{\Sigma}_0\right)-\Lambda\left(y, \mathbf{x}, \boldsymbol{\beta}_0, \boldsymbol{\mu}_0, \boldsymbol{\Sigma}_0\right) \right)
		I(V)\right) >0,
	\end{aligned}
	\]
	which implies \ref{prop:P2}.
	
	To prove \ref{prop:P3}, note that it is immediate that 
	$\boldsymbol{\Lambda}^{\star}$ exists and is given by%
	\[
	\Lambda^{\ast}(y\mathbf{,x,}j,\boldsymbol{\eta})=\left\{
	\begin{array}
		[c]{lll}%
		\lim_{\theta\rightarrow\theta_{1}}\rho\left(  t(y,m(\theta)\right)  w\left(
		\mathbf{x},\boldsymbol{\mu},\boldsymbol{\Sigma}\right)  & \text{if} & j=-1\\
		\rho\left(  t(y,m(1)\right)  w\left(  \mathbf{x},\boldsymbol{\mu
		},\boldsymbol{\Sigma}\right)  & \text{if} & j=0\\
		\lim_{\theta\rightarrow\theta_{2}}\rho\left(  t(y,m\left(  \theta\right)
		\right)  w\left(  \mathbf{x},\boldsymbol{\mu},\boldsymbol{\Sigma}\right)  &
		\text{if} & j=1.
	\end{array}
	\right.
	\]
	Then, it is enough to show that $\tau>0$, where $\tau$ is defined in
	\eqref{eq:tau}. {Using A13 and properties of the infimum, it can be
		seen that there exists $\zeta>0$ and $\delta>0$ such that
		\begin{equation}
			\inf_{\mathbf{t}\in\mathbf{S}}\mathbb{P}\left(  \left\{  \mathbf{x}^{\top
			}\mathbf{t}\neq\mathbf{0}\right\}  \cap\left\{  \omega\left(  \mathbf{x}%
			,\boldsymbol{\mu}_{0},\Sigma_{0}\right)  >\zeta\right\}  \right)  \geq
			\delta/2. \label{eq:infendem1}%
		\end{equation}
		We can also find $K_{1}$ and $K_{2}$ such that
		\begin{equation}
			\mathbb{P}\left(  \mathbf{x}^{\top}\boldsymbol{\beta}\in\left[  K_{1}%
			,K_{2}\right]  \right)  >1-\delta/4. \label{eq:Pendem1}%
		\end{equation}
		Let $\left.  V_{\mathbf{t}}=\left\{  \mathbf{x}^{\top}\mathbf{t}\neq
		\mathbf{0}\right\}  \cap\left\{  \omega\left(  \mathbf{x},\boldsymbol{\mu}%
		_{0},\Sigma_{0}\right)  >\zeta\right\}  \cap\left\{  \mathbf{x}^{\top
		}\boldsymbol{\beta}\in\left[  K_{1},K_{2}\right]  \right\}  \right]$,
		by \eqref{eq:infendem1} and \eqref{eq:Pendem1} we have $\mathbb{P}\left(  V_{t}\right)  >\delta/4$
		for all $\mathbf{t}\in S$. Using A5, it can be seen that
		for all $\mathbf{x}\in\mathbb{R}^{p}$ and $\mathbf{t}\in S$
		\[
		\mathbb{E}_{\boldsymbol{\beta}_{0}}\left(  \left(  \boldsymbol{\Lambda}^{\ast
		}\left(  y,\mathbf{x},\operatorname{sign}\left(  \mathbf{x}^{\top}%
		\mathbf{t}\right)  \right)  -\Lambda\left(  y,\mathbf{x},\boldsymbol{\beta
		},\boldsymbol{\mu}_{0},\boldsymbol{\Sigma}_{0}\right)  \right)  \mid
		\mathbf{x}\right)  \geq0.
		\]
		Let
		$C_{i}(\theta)=\mathbb{E}_{\theta_{0}}\left(  \rho(t(y,m_{i}))\right)
		-\mathbb{E}_{\theta}\left(  \rho(t(y,m(\theta)))\right)  .$ Then for all
		$\theta\in\left[  g\left(  K_{1}\right)  ,g\left(  K_{2}\right)  \right]  $
		and $i=1,2$, we have that $C_{i}(\theta)$ is positive and continuous. Then
		\[
		c_{0}=\min\left\{  \min_{\theta\in\left[  g\left(  K_{1}\right)  ,g\left(
			K_{2}\right)  \right]  }C_{1}(\theta),\min_{\theta\in\left[  g\left(
			K_{1}\right)  ,g\left(  K_{2}\right)  \right]  }C_{2}(\theta)>0\right\}  >0
		\]
		and
		\[
		\begin{aligned}
			\mathbb E_{\beta_0} & \left(\Phi^*\left(y, \mathbf{x}, \operatorname{sign}\left(\mathbf{t}^{\prime} \mathbf{x}\right)\right)-\Lambda\left(y, \mathbf{x}, \boldsymbol{\beta}_0^{\prime} \mathbf{x}, \boldsymbol{\mu}_0, \boldsymbol{\Sigma}_0\right)\right) \\
			& \geq \mathbb E\left(\mathbb E_{\boldsymbol{\beta}_0}\left(\left(\Phi^*\left(y, \mathbf{x}, \operatorname{sign}\left(\mathbf{t}^{\prime} \mathbf{x}\right)\right)-\Lambda\left(y, \mathbf{x}, \boldsymbol{\beta}^{\prime} \mathbf{x}\right)\right) \mid \mathbf{x}\right) \mathbf{I}\left(\mathbf{x} \in V_{\mathbf{t}}\right)\right) \\
			& \geq c_0 \mathbb E\left(w\left(\mathbf{x}, \boldsymbol{\mu}_0, \boldsymbol{\Sigma}_0\right) \mathbf{I}\left(\mathbf{x} \in V_{\mathbf{t}}\right)\right) \\
			& \geq c_0 \zeta \delta / 4 .
		\end{aligned}
		\]
	} This implies that $\tau\geq c_{0}\zeta\delta/4$ and therefore \ref{prop:P3}
	holds. Finally, note that \ref{prop:P4} is one of the assumptions of the theorem.
\end{proof}
\subsection{Asymptotic normality}\label{sec:Sasnorm}
\begin{lemma}\label{lema:mderviable} If $y$ is discrete or absolutely continuous and Assumptions A5 and A15 to A17 hold, then
	$m$ is twice differentiable.
\end{lemma}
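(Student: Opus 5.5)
We will apply the implicit function theorem to the first-order condition that defines $m$. Let $h(\theta,\gamma)=\mathbb E_\theta\left(\rho\left(t(y,\gamma)\right)\right)$, so that $m(\theta)=\operatorname{argmin}_{\gamma\in\Theta}h(\theta,\gamma)$.

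\textbf{Step 1: regularity of $h$.} We first show that $h$ is three times continuously differentiable in $(\theta,\gamma)$.

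In the discrete case, $h(\theta,\gamma)=\sum_k p(k,\theta)\rho(t(k,\gamma))$. Here $p(k,\theta)=F(k,\theta)-F(k^-,\theta)$ inherits three bounded continuous $\theta$-derivatives from A16.

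The map $t(k,\gamma)=\Phi^{-1}(F(k,\gamma)-\tfrac12 p(k,\gamma))$ is $C^3$ in $\gamma$ by A16 and the smoothness of $\Phi^{-1}$ on $(0,1)$. The argument $t_0(k,\gamma)$ stays in $(0,1)$ for $k$ in the support, since $F(k,\gamma)-\tfrac12 p(k,\gamma)$ lies strictly between $F(k^-,\gamma)$ and $F(k,\gamma)$.

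Since $\rho$ has three bounded derivatives (A17), each summand is $C^3$. Differentiation under the sum is justified by dominated convergence, using bounds on the derivatives of $\rho\circ t$ on compact $\gamma$-intervals.

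In the absolutely continuous case, $h(\theta,\gamma)=\int\rho(t(y,\gamma))f(y,\theta)\,dy$ and the same argument applies with the integral in place of the sum. The precise domination bounds may need an extra integrability argument here.

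\textbf{Step 2: first-order condition and implicit function theorem.} By A5, $m(\theta)$ is an interior minimizer of $h(\theta,\cdot)$ on the open interval $\Theta$. Hence
$$G(\theta,\gamma):=\frac{\partial h}{\partial\gamma}(\theta,\gamma)=\mathbb E_\theta\left(\psi(t(y,\gamma))t'(y,\gamma)\right)=0\quad\text{at }\gamma=m(\theta).$$

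Next compute
$$\frac{\partial G}{\partial\gamma}(\theta,m(\theta))=\mathbb E_\theta\left(\psi'(t(y,m(\theta)))t'(y,m(\theta))^2+\psi(t(y,m(\theta)))t''(y,m(\theta))\right),$$
whose nonvanishing is exactly what A18 is meant to provide. As stated, A18 writes $t'$ rather than $t'^2$; I would read it as the nondegeneracy of this second derivative, and it is the key hypothesis used at this point.

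Since $G$ is $C^2$ by Step 1, the implicit function theorem gives a $C^2$ function $\tilde m$ near each $\theta_0$ with $G(\theta,\tilde m(\theta))=0$ and $\tilde m(\theta_0)=m(\theta_0)$.

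\textbf{Step 3: identification $\tilde m=m$.} It remains to show $\tilde m$ coincides with $m$ near $\theta_0$. By Lemma \ref{lema:mcont}, $m$ is continuous, so $m(\theta)$ lies in the neighborhood where the implicit function theorem guarantees a unique zero of $G(\theta,\cdot)$. Since $m(\theta)$ is such a zero, $m=\tilde m$ locally, and therefore $m$ is twice differentiable. Differentiating the identity $G(\theta,m(\theta))=0$ then yields explicit formulas for $m'$ and $m''$.

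\textbf{Main obstacle.} I expect Step 1 to be the main obstacle. It requires uniform domination of the derivatives of $\rho(t(y,\gamma))$ and of the $\theta$-derivatives of $p$ or $f$ over an infinite support. Derivatives of $\Phi^{-1}$ blow up near $0$ and $1$, but the boundedness of $\psi$ and its derivatives together with $\rho$ being flat beyond $c$ (in the paper's choice) should control this. I would first try to bound the tails by restricting to $k$ with $|t(k,\gamma)|\le c'$, outside which the $\rho$ terms are constant.
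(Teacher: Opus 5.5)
Your proposal is correct and follows the paper's argument. The paper also writes the first-order condition $\mathbb E_\theta\left(\psi(t(y,m(\theta)))t'(y,m(\theta))\right)=0$ and applies the implicit function theorem using the nondegeneracy hypothesis; this is the paper's A17, and its intended form indeed has $(t')^2$, as the denominator of the paper's formula for $m'$ shows. It then differentiates once more, as you propose.

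Your Steps 1 and 3 supply details the paper passes over: differentiation under the expectation, and matching the local implicit solution with $m$. For the latter, A5's monotonicity plus continuity of $h$ already suffice, without the extra hypotheses of Lemma \ref{lema:mcont}. Note that your assumption labels are shifted by one: in the paper, A15, A16 and A17 are the smoothness of $F$, the smoothness of $\rho$ and the nondegeneracy condition.
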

\begin{proof} Recall the definition of $m$ given in equation (6) and note that $t$ is differentiable by A15 and A16. Then
	\begin{equation*}
		\mathbb E_\theta\left(\psi \left(t(y,m(\theta))\right)t^{\prime}(y,m(\theta))\right) =0.
	\end{equation*}
	Then, because of the Implicit Function Theorem and Assumption A17, $m$ is differentiable and 
	$$m^{\prime}(\theta)= - \displaystyle\frac{\mathbb E_\theta\left(\psi\left( t(y,m(\theta))\right)t^{\prime}(y,m(\theta)) f^{\prime}(y,\theta)/f(y,\theta) \right)}{\mathbb E_\theta\left(\psi^{\prime}\left(t(y,m(\theta))\right)\left(t^{\prime}(y,m(\theta))\right)^2+\psi \left(t(y,m(\theta))\right)t^{\prime\prime}(y,m(\theta))\right)},$$ where $f(y,\theta)$ is the probability density function (if $y$ is absolutely continuous) or the probability mass function (if $y$ is discrete).
	Using again Assumptions A15, A16 and A17, we conclude that $m^{\prime}$ is also differentiable.
\end{proof}
\begin{proof}[Proof of Theorem 2]
	The proof is analogous to the proof of Theorem 2 in \cite{valdora2014robust} with $\Phi\left(y, \mathbf{x}, \boldsymbol{\beta}, \boldsymbol{\mu}, \boldsymbol{\Sigma}\right) =\rho\left(t(y, mg^{-1}\left(\mathbf{x}^{\top} \boldsymbol{\beta}  \right)\right) w\left(\mathbf{x}, \boldsymbol{\mu}, \boldsymbol{\Sigma}\right)$.
\end{proof}
\begin{proof}[Proof of Theorem 3]
	The proof follows the same lines as Theorem 3 in  \cite{valdora2014robust}.
	
	Suppose that $\varepsilon$ is a positive real number such that there exists a
	sequence of distribution functions $H_{k}$ such that $\left\| T((1-\varepsilon
	)H_{0}+\varepsilon H_{k})\right\|_2\rightarrow\infty$ as $k\rightarrow\infty$, where$||$ $||_{2}$ denotes the $l_{2}$ norm.
	
	Let $\boldsymbol{\beta}_{k}=T((1-\varepsilon)H_{0}+\varepsilon H_{{k}})$, $\theta_{\mathbf x}^k=g^{-1}\left(\mathbf{x}^{\top} \boldsymbol{\beta}_{k}\right)$ and $\theta_{\mathbf x}^0=g^{-1}\left(\mathbf{x}^{\top} \boldsymbol{\beta}_{0}\right)$ . Then, by A7, A8 and the definition of $\mathbf T$, 
	\begin{align}
		&  (1-\varepsilon)\mathbb E_{H_{0}}\left(\rho
		\left(t\left(y,m\left(\theta_{\mathbf x}^k\right)\right)\right)\right)\nonumber\\
		&  \leq(1-\varepsilon)\mathbb E_{H_{0}}\left(\rho
		\left(t\left(y,m\left(\theta_{\mathbf x}^k\right)\right)\right)\right)+\varepsilon \mathbb E_{H_{k}}\left(\rho
		\left(t\left(y,m\left(\theta_{\mathbf x}^k\right)\right)\right)\right)\nonumber\\
		& \nonumber   \leq(1-\varepsilon)\mathbb E_{H_{0}}\left(\rho
		\left(t\left(y,m\left(\theta_{\mathbf x}^0\right)\right)\right)\right)+\varepsilon \mathbb E_{H_{k}}\left(\rho
		\left(t\left(y,m\left(\theta_{\mathbf x}^k\right)\right)\right)\right)\label{ineq1}\\
		&  \leq(1-\varepsilon)\mathbb E_{H_{0}}\left(\rho
		\left(t\left(y,m\left(\theta_{\mathbf x}^0\right)\right)\right)\right)+\varepsilon.
	\end{align}
	\ Let $\boldsymbol\alpha_{k}= \boldsymbol{\beta}_{k}/||\boldsymbol{\beta
	}_{k}||$,  then by taking a subsequence, we may assume without loss of generality that $\boldsymbol \alpha
	_{k}\rightarrow\alpha$. Then, using  A5, A8, and Lemma \ref{lem:limt}, 
	\begin{align}
		\nonumber	&  \lim_{k\rightarrow\infty}(1-\varepsilon)\mathbb E_{H_{0}}\left(\rho
		\left(t\left(y,m\left(\theta_{\mathbf x}^k\right)\right)\right)\right)\\
		&  \geq  (1-\varepsilon) \mathbb E_{H_{0}}\left(  \phi_1^*(y ) I(\mathbf{x}^\top\boldsymbol{\alpha}<0)+ \phi_2^*(y )I(\mathbf{x}^\top\boldsymbol{\alpha}>0)\right)
		\nonumber\\
		&  \geq(1-\varepsilon) \mathbb E_{H_{0}}\left(\min\left(\phi_1^*(y), \phi_2^*(y) \right)\right). \label{ineq2}%
	\end{align}
	\bigskip Combining inequalities (\ref{ineq1}) and (\ref{ineq2}) we obtain
	\[
	(1-\varepsilon) \mathbb E_{H_{0}}\left(\min\left(\phi_1^*(y), \phi_2^*(y) \right)\right) \leq(1-\varepsilon)\mathbb E_{H_{0}}\left(\rho
	\left(t\left(y,m\left(\theta_{\mathbf x} \right)\right)\right)\right)+\varepsilon,
	\] and the result follows.
\end{proof}
\subsection{Breakdown point}\label{sec:SBP}
\begin{lemma}\label{lemma:m1}  Assume A1 to A3, A5, A7 to A10, A19 and A20. Then 
	$\lim_{\theta\rightarrow\theta_1} m(\theta)=\theta_1.$
\end{lemma}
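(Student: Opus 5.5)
We must show $\lim_{\theta\to\theta_1} m(\theta)=\theta_1$. By A5, $m$ is strictly increasing, so $m_1=\lim_{\theta\to\theta_1}m(\theta)$ exists in $[\theta_1,\theta_2)$. We argue by contradiction: suppose $m_1>\theta_1$. We will show that for $\theta$ close to $\theta_1$, some $\gamma$ close to $\theta_1$ gives a strictly smaller value of $h(\theta,\gamma)=\mathbb E_\theta(\rho(t(y,\gamma)))$ than $h(\theta,m(\theta))$. This contradicts the definition of $m(\theta)$ as the argmin.

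\emph{Step 1: the limit of the criterion at $m(\theta)$.} By A20, as $\theta\to\theta_1$ the distribution of $y$ under $\theta$ converges to the point mass at $0$. Since $m(\theta)\in[m_1,m(\theta^*)]$ for $\theta<\theta^*$, we pass to a subsequence with $m(\theta_n)\to\gamma^*\ge m_1>\theta_1$, an interior point of $\Theta$. The map $(y,\gamma)\mapsto\rho(t(y,\gamma))$ is bounded. It is also continuous in $\gamma$ at $y=0$ by A1, A10 and the continuity of $\Phi^{-1}$ on $(0,1)$. Lemma \ref{lema:victor} applies because the limiting law is discrete and concentrated at $0$; equivalently, write the expectation as $p(0,\theta_n)\rho(t(0,m(\theta_n)))+R_n$ with $|R_n|\le 1-p(0,\theta_n)\to0$. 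This gives
\[
h(\theta_n,m(\theta_n))\to\rho(t(0,\gamma^*)).
\]
Here $t(0,\gamma^*)=\Phi^{-1}(p(0,\gamma^*)/2)$, and since $\gamma^*$ is interior and $y$ has nondegenerate support (A3), we have $0<p(0,\gamma^*)<1$. Hence $t(0,\gamma^*)<\Phi^{-1}(1/2)=0$ strictly, so $\rho(t(0,\gamma^*))>0$ by A9, and in fact $\rho(t(0,\gamma^*))\ge\rho(t(0,m_1))>0$ after using Lemma \ref{lemma:increasing} and A9.

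\emph{Step 2: a better competitor.} We choose $\gamma$ close to $\theta_1$. By A20, $p(0,\gamma)\to1$ and $F(0,\gamma)\to1$, so $t(0,\gamma)=\Phi^{-1}(F(0,\gamma)-p(0,\gamma)/2)=\Phi^{-1}(p(0,\gamma)/2)\to\Phi^{-1}(1/2)=0$, using that $0$ is the minimum of the support (A19). Thus $\rho(t(0,\gamma))\to\rho(0)=0$ by A7 and A10. Fix $\gamma_0$ with $\rho(t(0,\gamma_0))<\rho(t(0,\gamma^*))/3$. By the same decomposition as in Step 1, $h(\theta_n,\gamma_0)\to\rho(t(0,\gamma_0))$. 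Therefore $h(\theta_n,\gamma_0)<h(\theta_n,m(\theta_n))$ for large $n$, which is the desired contradiction. We conclude $m_1=\theta_1$.

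\emph{Main obstacle.} The delicate point is Step 1: showing $t(0,\gamma^*)$ stays bounded away from $0$ whenever $\gamma^*>\theta_1$, that is, $p(0,\gamma^*)<1$. I would derive this from A2 and A3: if $p(0,\gamma^*)=1$, then $F(\cdot,\gamma^*)$ would be degenerate. Then A2 monotonicity would force $F(k,\theta)=1$ for all $\theta<\gamma^*$ and all $k\ge0$, contradicting the $\theta$-independence of the set $\{F<1\}$ in A3. A second, smaller point is justifying the convergence of expectations via the mass-at-zero decomposition, which only uses boundedness of $\rho$ (A8).
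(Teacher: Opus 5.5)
Your proof is correct and uses the same core idea as the paper. As $\theta\to\theta_1$ the criterion collapses to $\lambda\mapsto\rho\bigl(\Phi^{-1}(p(0,\lambda)/2)\bigr)$, and a limit $m_1>\theta_1$ is then beaten by a competitor nearer $\theta_1$. Your execution differs from the paper's in two places, and in both yours is tighter.

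First, the paper proves only pointwise convergence of $R(\theta,\lambda)$ for fixed $\lambda$. It then shows $R(\theta,m_3)<R(\theta,m_0)$ near $\theta_1$ and concludes that the argmin is not $m_0$. That is not a contradiction: with $m$ strictly increasing, $m(\theta)\neq m_0$ anyway, and what must be beaten is $R(\theta,m(\theta))$. Your decomposition $h(\theta,\lambda)=p(0,\theta)\rho(t(0,\lambda))+R$, with $0\le R\le 1-p(0,\theta)$ uniformly in $\lambda$, plus continuity of $\lambda\mapsto\rho(t(0,\lambda))$ at $m_1$, gives $h(\theta,m(\theta))\to\rho(t(0,m_1))$. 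That is exactly the missing step.

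Second, the paper takes an arbitrary $m_3\in(\theta_1,m_0)$ and needs $p(0,m_3)>p(0,m_0)$ strictly, which A2 does not guarantee at $y=0$. You instead take $\gamma_0$ close to $\theta_1$, where A20 makes $\rho(t(0,\gamma_0))$ as small as you like. This way you only need $p(0,m_1)<1$.

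Two minor corrections:
\begin{itemize}
\item Lemma~\ref{lema:victor} does not literally apply, because $(y,\gamma)\mapsto\rho(t(y,\gamma))$ is not continuous in $y$. Rely on your mass-at-zero decomposition instead.
\item In the nondegeneracy step, A3 by itself already propagates $F(0,\gamma^*)=1$ to $F(0,\theta)=1$ for every $\theta$. By A19, every $F(\cdot,\theta)$ would then be the point mass at $0$. The contradiction therefore comes from the strict clause of A2 (or from uniqueness in A5), not from A3.
\end{itemize}
The subsequence is also unnecessary, since monotonicity of $m$ gives $\gamma^*=m_1$ directly.
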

\begin{proof} Let $R(\theta, \lambda) = \mathbb E_\theta \left( \rho\left( t(y, \lambda) \right) \right)$, $\lambda \in \Theta$. By Assumptions A1,  A3, A8, A10, A19 and A20, 
	$$R(\theta, \lambda) =\sum_{y\in R_y}\rho\left( t(y, \lambda) \right)  p(y,\theta)\stackrel{\theta\rightarrow \theta_1}{\longrightarrow}\rho\left( t(0, \lambda) \right)=\rho\left( \Phi^{-1}\left( \frac{p ( 0, \lambda)}{2} \right)\right).$$ 
	Suppose  $m(\theta)$ converges to $m_{0}\in (0, +\infty)$ when $\theta\rightarrow \theta_1$. Let  $m_3 \in (\theta_1, m_0)$, then, by A2, A7
	and A9.
	$$0<p(0, m_0)<p(0, m_3)<1 \Rightarrow \Phi^{-1}\left(\frac{p(0, m_0)}{2}\right)< \Phi^{-1}\left(\frac{p(0, m_3)}{2}\right)<0,$$  
	$$ \Rightarrow\rho\left(  \Phi^{-1}\left(\frac{p(0, m_3)}{2}\right) \right)<\rho\left(  \Phi^{-1}\left(\frac{p(0, m_0)}{2}\right)\right).$$
	This implies that $$\lim_{\theta\rightarrow \theta_1 } R(\theta, m_3) < \lim_{\theta\rightarrow \theta_1} R(\theta, m_0),$$  and this means that there exists $\delta>0$ such that, if $\theta<\theta_1 + \delta$ then $ R(\theta, m_3)  <  R(\theta, m_0)$ and therefore $\operatorname{argmin}_{\lambda \in (\theta_1,\theta_2)} R(\theta, \lambda) \neq m_0$. This is a contradiction.
\end{proof}
\begin{lemma} \label{lemma:m2}
	Assume  A1 to A6, A8 and A10. Then
	$\lim_{\theta\rightarrow \theta_2}m(\theta)=\theta_2$.
\end{lemma}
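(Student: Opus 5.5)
We argue by contradiction, in the same way as Lemma \ref{lemma:m1} but at the upper end of the parameter space. Since $m$ is strictly increasing by A5, the limit $m_2=\lim_{\theta\rightarrow\theta_2} m(\theta)$ exists in $(\theta_1,\theta_2]$. Suppose $m_2<\theta_2$, so that $m(\theta)\leq m_2$ for all $\theta$.

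Write $R(\theta,\lambda)=\mathbb E_\theta\left(\rho\left(t(y,\lambda)\right)\right)$. We first study the behaviour of $R(\theta,\lambda)$ as $\theta\rightarrow\theta_2$ with $\lambda$ bounded by $m_2$. By A2 and A3, $F(k,\theta)\rightarrow 0$ as $\theta\rightarrow\theta_2$ for every $k$ in the support with $F(k,\theta)<1$, so $y\rightarrow\infty$ in probability. Lemma \ref{lemma:increasing} shows that $t(y,\lambda)$ is decreasing in $\lambda$; hence for $\lambda\leq m_2$ we have $t(y,\lambda)\geq t(y,m_2)$. On the other hand, $t(y,m_2)\rightarrow+\infty$ as $y\rightarrow\infty$, because $F(y,m_2)-\tfrac12 p(y,m_2)\rightarrow 1$. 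Using A8 and A10 (with $\rho$ nondecreasing on $[0,\infty)$ by A9), it follows that
\[
\liminf_{\theta\rightarrow\theta_2} R(\theta,m(\theta))\geq \lim_{\theta\rightarrow\theta_2}\mathbb E_\theta\left(\rho\left(t(y,m_2)\right)I\left(t(y,m_2)>0\right)\right)=1 .
\]
The limit equals $1$ because $\rho(t(y,m_2))\rightarrow1$ in $\mathbb P_\theta$-probability and $\rho$ is bounded, so dominated convergence applies. The main obstacle is making this limit rigorous uniformly over the law of $y$ under $\mathbb P_\theta$, for both the discrete and the continuous case. One way to do it is to fix $M$ with $\rho(u)>1-\eta$ for $u>M$, pick $k$ with $t(k,m_2)>M$, and then use $\mathbb P_\theta(y\leq k)=F(k,\theta)\rightarrow0$.

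Next we compare with the choice $\lambda=\theta$. By Lemma \ref{lemma:epsilon0}, which uses A6, A7 and A8, we have $R(\theta,\theta)<1-\epsilon_0$ for every $\theta$. Since $m(\theta)$ minimizes $R(\theta,\cdot)$, this gives $R(\theta,m(\theta))\leq R(\theta,\theta)<1-\epsilon_0$ for all $\theta$. This contradicts the previous display for $\theta$ close enough to $\theta_2$. Therefore $m_2=\theta_2$, that is, $\lim_{\theta\rightarrow\theta_2}m(\theta)=\theta_2$.

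If $\theta_2=+\infty$, the same argument works with $m_2$ finite. No step uses A19 or A20, which explains why the lemma assumes only A1 to A6, A8 and A10, plus the monotonicity of $\rho$ (A7 and A9) used implicitly through Lemma \ref{lemma:epsilon0}.
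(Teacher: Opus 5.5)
Your argument follows the paper's. Both play the bound $R(\theta,\theta)<1-\epsilon_0$ of Lemma~\ref{lemma:epsilon0} against the fact that, for parameters bounded away from $\theta_2$, $\rho(t(y,\cdot))$ is close to $1$ on most of the $\mathbb P_\theta$-mass once $\theta$ is near $\theta_2$. The paper argues directly: for every $\eta_0$ it finds $\theta_0$ with $m(\theta)>\eta_0$ for $\theta>\theta_0$, using A2 to get $F(y,\eta)>1-\delta_0$ uniformly in $\eta\le\eta_0$. You instead use A5 to fix the limit $m_2$ and Lemma~\ref{lemma:increasing} to reduce to the single parameter $m_2$. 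Your handling of the sign through $I(t(y,m_2)>0)$, and of continuous $y$ through monotonicity of $t$ in $y$, is slightly cleaner than the paper's $F(y-1,\eta)$ step.

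However, your proof inherits a gap that the paper's proof also has. The step ``$y\to\infty$ in probability'' means there is a $k$ with $t(k,m_2)>M$ and $F(k,\theta)\to0$. This needs the support of $y$ to have no largest atom, and the hypotheses of the lemma do not guarantee that.

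The Bernoulli family is a counterexample to that step. It satisfies A1--A3, as the paper itself remarks, but A3 only gives $F(0,\theta)\to0$. So $y\to1$ in probability and $t(1,m_2)=\Phi^{-1}(1-m_2/2)$ stays finite. Hence $\mathbb E_\theta\rho(t(y,m_2))\to\rho(t(1,m_2))$, which is below $1-\epsilon_0$ when $m_2$ is close to $1$, and no contradiction follows. The paper's choice of $y_0$ with both $F(y_0,\eta_0)>1-\delta_0$ and $F(y_0,\theta)\to0$ fails for the same reason.

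The lemma is still true when the support has a maximal atom $N$, but you must compare against a different bound:
\begin{itemize}
\item Since $p(N,\theta)\to1$, we get $t(N,\theta)\to\Phi^{-1}(1/2)=0$, so $R(\theta,\theta)\le\rho(t(N,\theta))+1-p(N,\theta)\to0$.
\item Meanwhile $R(\theta,m(\theta))\ge p(N,\theta)\,\rho(t(N,m_2))\to\rho\bigl(\Phi^{-1}(1-p(N,m_2)/2)\bigr)$. This limit is strictly positive by A9, because $p(N,m_2)<1$.
\end{itemize}
Treating this case separately, or assuming the support is unbounded above as in the Poisson case, completes the proof.
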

\begin{proof}
	We will show that given any $\eta_{0}\in(\theta_1, \theta_2)$, there exists $\theta_{0} \in(\theta_1, \theta_2)$ such that $m(\theta)>\eta_{0}$ for all
	$\theta\in(\theta_{0}, \theta_2).$ Call \
	\begin{equation}
		R(\theta,\eta)=\mathbb E_{\theta}\left(  \rho(\Phi^{-1}(F(y,{\eta
		})-0.5 p(y,{\eta}))\right) , \label{1}%
	\end{equation} 
	where $y\sim F(., \theta)$. Recall that
	\[m(\theta)=\operatorname{argmin}_{\eta \in (\theta_1, \theta_2)} R(\eta,\theta)
	\]
	and that, by Lemma \ref{lemma:epsilon0}, there exists $\epsilon_0>0$ such that $R(\theta, \theta)<1-\epsilon_0$.
	
	Then, it it will be enough to show that given
	${\eta}_{0}$ there exists $\theta_{0}$ such that for
	${\eta}\leq{\eta}_{0}$ and $\theta\geq\theta_{0}$,
	$R($${\eta}$$,\theta)> 1 - \epsilon_0.$
	
	Since $\ \lim_{z\rightarrow1}\Phi^{-1}(z)=\infty,$ and $\lim_{z\rightarrow
		\infty}\rho(z)=1,$ there exists $\delta_{0}$ such that%
	\begin{equation}
		(1-\delta_{0})\rho(\Phi^{-1}(1-\delta_{0}))> 1- \epsilon_0. \label{2}%
	\end{equation}
	Given $\eta_{0},$ we can find $y_{0}$ such that
	\begin{equation}
		F(y,\eta)>1-\delta_{0}\;\;\forall\eta\leq\eta_{0},\forall y\geq y_{0}
		\label{3}%
	\end{equation}
	and we can also find $\theta_{0}$ such that
	\begin{equation}
		F(y,\theta)\leq\delta_0,\text{ }\forall\theta\geq\theta_{0},\forall y\leq
		y_{0}.\label{5}%
	\end{equation}
	Then, using (\ref{2}) (\ref{3}) and (\ref{5}) we get
	that for any $\eta\leq\eta_{0},\theta$ and $\theta\geq\theta_{0}$ \ \
	\begin{align*}
		R(\theta,\eta) &  \geq \mathbb E_{\theta}(\rho(\Phi^{-1}(F(y,{\eta
		})-0.5 p(y,{\eta}))I(y  > y_{0} ))\\
		&  \geq \mathbb E_{\theta}(\rho(\Phi^{-1}(F(y-1, {\eta
		}))I(y  > y_{0} ))\\
		&  \geq \mathbb E_{\theta}(\rho(\Phi^{-1}(1-\delta_{0}))I(y{>
			y}_{0}))\\
		&  \geq\rho(\Phi^{-1}(1-\delta_{0}))(1-\delta_{0}),\ \\
		&  > 1- \epsilon_0.\
	\end{align*}
	Then since $R(\theta,\theta)<1 - \epsilon_0,$ this implies that $m(\theta)\geq\eta_{0}$
	for all $\theta\geq\theta_{0}$ and therefore $\ \lim_{\theta\rightarrow\theta_2
	}m(\theta)=\theta_2.$
\end{proof}
\begin{proof} [Proof of Theorem 4]
	Note that $t_0(0, \theta) = p(0, \theta)/2$ while, by Assumption A19, $t_0(y, \theta) =  p(0, \theta) + \sum_{k=1}^y p(k, \theta) -   p(y, \theta)/2$  for $y>0$.
	Then
	$\lim_{\theta\rightarrow \theta_1} t_0(y, \theta) = 1/2 I_{\left\{0\right\}}(y) + I_{\left(0, +\infty\right)}(y)$.  
	Recall that $t = \Phi^{-1} \circ t_0$, $\Phi^{-1}(1/2)=0$ and $\lim_{x\rightarrow 1}\Phi^{-1}(x) = +\infty$, then, by Assumptions A8, A10, A20 and Lemma \ref{lemma:m1}, $\phi^*_1(y)=\lim_{\theta\rightarrow \theta_1} \rho\left( t(y, m(\theta))\right)= \lim_{\theta\rightarrow \theta_1} \rho \left(t(y, \theta)\right) =  I_{\left(0, +\infty\right)}(y)$.  
	
	We now compute $\phi^*_2(y)$. By A3, $\lim_{\theta \rightarrow \theta_2} p(y,\theta)=0$ for all $y$. Then, by the properties of $\Phi^{-1}$,   A8, A10 and Lemma \ref{lemma:m2}, 
	$\phi^*_2(y)=\lim_{\theta\rightarrow \theta_2} \rho \left( t(y, m(\theta)) \right)= \lim_{\theta\rightarrow \theta_2} \rho \left( t(y, \theta) \right)=  1.$
	Then, by Theorem 3, we get that the breakdown point is at least
	$$\varepsilon_0=\frac{\mathbb E_{H_0}\left( I_{(0, +\infty)} (y) \right) - \mathbb E_{H_0}\left(\rho \left( t \left(y, m(\theta_{\mathbf{x}})  \right)\right)\right)}{ 1 +\mathbb E_{H_0}\left( I_{(0, +\infty)} (y) \right)-\mathbb E_{H_0}\left(\rho \left(t \left(y, m(\theta_{\mathbf{x}})\right)\right)\right)},
	$$
	concluding the proof.
\end{proof}


\bibliography{mireg}

\end{document}